\newenvironment{proof}{\begin{IEEEproof}}{\end{IEEEproof}}
\newtheorem{theorem}{Theorem}[section]
\newtheorem{proposition}{Proposition}[section]
\newtheorem{lemma}{Lemma}[section]
\newtheorem{corollary}{Corollary}[section]
\long\def\symbolfootnote[#1]#2{\begingroup%
\def\thefootnote{\fnsymbol{footnote}}\footnote[#1]{#2}\endgroup}
\def\dref#1{(\ref{#1})}
\def\tan{\mbox{tan}\,} \def\sin{\mbox{sin}\,}
\def\be{\begin{equation}} \def\ee{\end{equation}}
\def\ba{\begin{array}} \def\ea{\end{array}} \def\bna{\begin{eqnarray}}
\def\ena{\end{eqnarray}}
 \def\bna{\begin{eqnarray}}
\def\ena{\end{eqnarray}} \def\dref#1{(\ref{#1})}
\begin{document}
\title{{  ``The Capacity of the Relay Channel'':\\
Solution to Cover's Problem in the Gaussian Case
}}

\author{Xiugang Wu,~\IEEEmembership{Member,~IEEE,} Leighton Pate Barnes,~\IEEEmembership{Student Member,~IEEE,}
        and~Ayfer  \"{O}zg\"{u}r,~\IEEEmembership{Member,~IEEE}

\thanks{The work was supported in part by NSF award CCF-1704624 and by the Center for Science of Information (CSoI), an NSF Science and Technology Center, under grant agreement CCF-0939370. This paper was presented in part at the 2016 Allerton Conference on Communication, Control, and Computing \cite{WuBarnesOzgur}.}

\thanks{X. Wu is with the Department of Electrical and Computer Engineering, University of Delaware, Newark, DE 19716, USA (e-mail: xwu@udel.edu). The work of X. Wu was done when he was with Stanford University.}

\thanks{L. P. Barnes and A. \"{O}zg\"{u}r are with the Department of Electrical Engineering, Stanford University, Stanford, CA 94305, USA (e-mail: lpb@stanford.edu; aozgur@stanford.edu).}

}

%

\maketitle

\begin{abstract}
Consider a memoryless relay channel, where  the relay is connected to the destination with an isolated bit pipe of capacity $C_0$. Let $C(C_0)$ denote the capacity of this channel as a function of $C_0$. What is the critical value of $C_0$ such that $C(C_0)$ first equals $C(\infty)$? This is a long-standing open problem posed by Cover and named ``The Capacity of the Relay Channel,'' in \emph{Open Problems in Communication and Computation}, Springer-Verlag, 1987. In this paper, we answer this question in the Gaussian case and show that $C(C_0)$ can not equal to $C(\infty)$ unless $C_0=\infty$, regardless of the SNR of the Gaussian channels. This result follows as a corollary to a new upper bound we develop on the capacity of this channel. Instead of ``single-letterizing'' expressions involving information measures in a high-dimensional space as is typically done in converse results in information theory, our proof directly quantifies the tension between the pertinent $n$-letter forms. This is done by translating the information tension problem to a problem in high-dimensional geometry. As an intermediate result, we develop an extension of the classical isoperimetric inequality on a high-dimensional sphere, which can be of interest in its own right. 
\end{abstract}

\begin{IEEEkeywords}
Relay channel, capacity, information inequality, geometry, isoperimetric inequality, concentration of measure
\end{IEEEkeywords}

\section{Problem Setup and Main Result}\label{S:Introduction}

{ In 1987, Thomas M. Cover formulated a seemingly simple question in \emph{Open Problems in Communication and Computation}, Springer-Verlag \cite{cg87}, which he called ``The Capacity of the Relay Channel''. This problem, not much longer than a single page in \cite{cg87}, remains open to date.  His problem statement, taken verbatim from \cite{cg87} with only a few minor notation changes, is as follows:}

\begin{center}\textit{The Capacity of the Relay Channel}
\end{center}
\textit{Consider the following seemingly simple discrete memoryless relay channel:}
\begin{figure}[h!]
\centering
\includegraphics[width=0.3\textwidth]{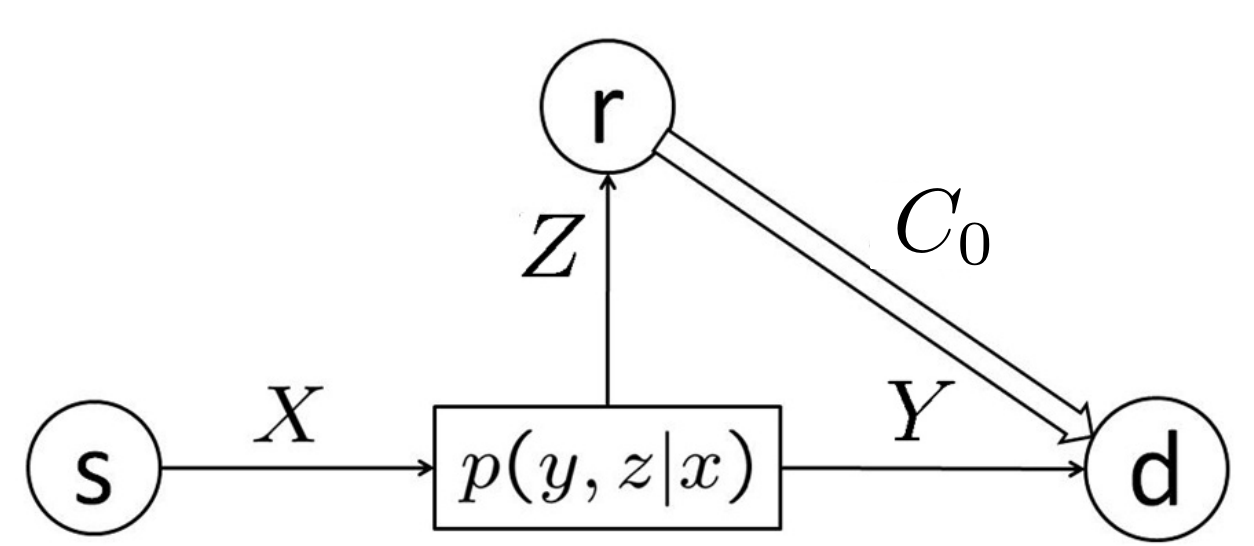}
\end{figure}
\textit{
Here $Z$ and $Y$ are conditionally independent and conditionally identically distributed given $X$, that is, $p(z,y|x)=p(z|x) p(y|x)$. Also, the channel from $Z$ to $Y$ does not interfere with $Y$. A $(2^{nR},n)$ code for this channel is a map $X^n: [1:2^{nR}]\to \mathcal X^n$, a relay function $f_n: \mathcal Z^n \to [1:2^{nC_0}]$ and a decoding function $g_n: \mathcal Y^n \times  [1:2^{nC_0}] \to [1:2^{nR}]$. The probability of error is given by
$$P_e^{(n)}=\mbox{Pr}(g_n(Y^n,f_n(Z^n)) \neq M ),$$
where the message $M$ is  uniformly distributed over $[1:2^{nR}]$ and
$$
p(m,y^n,z^n)=2^{-nR}\prod_{i=1}^n p(y_{i}|x_{i}(m))\prod_{i=1}^n p(z_{i}|x_{i}(m)).
$$
Let $C(C_0)$ be the supremum of achievable rates $R$ for a given $C_0$, that is, the supremum of the rates $R$ for which $P_e^{(n)}$ can be made to tend to zero.
We note the following facts:
\begin{itemize}
\item[1.] $C(0)=\sup_{p(x)} I(X;Y).$
\item[2.] $C(\infty)=\sup_{p(x)} I(X;Y,Z).$
\item[3.] $C(C_0)$ is a nondecreasing function of $C_0$.
\end{itemize}
What is the critical value of $C_0$ such that $C(C_0)$ first equals $C(\infty)$?
}
\subsection{Main Result}
{ As is customary in network information theory, Cover formulates the problem for discrete memoryless channels. However, the same question clearly applies to channels with continuous input and output alphabets, and in particular when the channels from the source to the relay and the destination are Gaussian, which is the canonical model for wireless relay channels. More formally, assume}
\begin{numcases}{}
Z=X+W_1\nonumber \\
Y=X+W_2 \nonumber
\end{numcases}
with the transmitted signal being constrained to average power $P$, i.e.,
\begin{equation}\label{avpowerconst}
  \|x^n(m)\|^2\leq nP, \  \forall m \in [1:2^{nR}],
\end{equation}
and $W_1, W_2 \sim \mathcal N(0,N)$ representing Gaussian noises that are independent of each other and $X$. See Fig. \ref{F:GaussianRelay}.
\begin{figure}[htb!]
\centering
\includegraphics[width=0.35\textwidth]{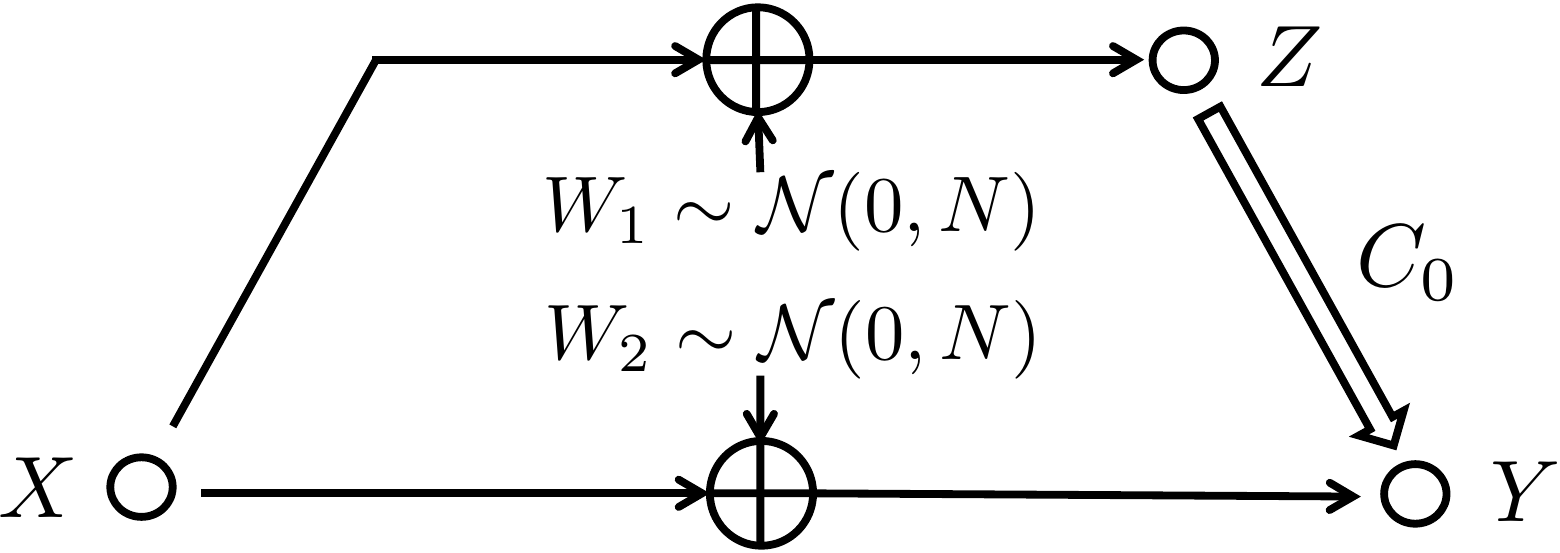}
\caption{Symmetric Gaussian relay channel.}
\label{F:GaussianRelay}
\end{figure}

For this Gaussian relay channel, it is easy to observe that\footnote{All logarithms throughout the paper are to base two.} $$C(\infty)=\frac{1}{2}\log \left( 1+\frac{2P}{N}\right).$$
{Let $C_0^*$ denote the threshold in Cover's problem, i.e.
\begin{align}
C_0^*:=\mbox{inf}\{C_0: C(C_0)=C(\infty) \}.
\end{align}
For the Gaussian model, there is no known scheme that allows to achieve $C(\infty)$ at a finite $C_0$ regardless of the parameters of the channels, i.e. the signal to noise power ratio (SNR) $P/N$. Therefore, from an achievability perspective we only have the trivial bound
$$
C_0^*\leq\infty.
$$
On the converse side, any upper bound on the capacity of this channel can be used to establish a lower bound on $C_0^*$. The only upper bound on the capacity of this channel (prior to our  work in \cite{Allerton2015}--\cite{WuOzgur_TIT_Gaussian} preceding the current paper) was the celebrated cut-set bound developed by Cover and El Gamal in 1979 \cite{covelg79}. It yields the following lower bound on $C_0^*$:
$$
C_0^*\geq \frac{1}{2}\log \left( 1+\frac{2P}{N}\right)-\frac{1}{2}\log \left( 1+\frac{P}{N}\right).
$$
Note that the cut-set bound does not preclude achieving $C(\infty)$ at finite $C_0$. Moreover, it is interesting to note that as $P/N$ decreases to zero, this lower bound decreases to zero. This implies a sharp dichotomy between the current achievability and converse results for this problem, which becomes even more apparent in the limit when SNR goes to zero: the cut-set bound does not preclude achieving $C(\infty)$ at diminishing $C_0$ if $C(\infty)$ itself is diminishing, while from an achievability perspective we need $C_0=\infty$ regardless of the SNRs of the channels (apart from the trivial case when $P/N$ is exactly equal to $0$). The main result of our paper is to show that $C_0^*=\infty$ regardless of the parameters of the problem, answering Cover's long-standing question for the canonical Gaussian model.
\smallbreak
\begin{theorem}\label{thm:maintheorem}
For the symmetric Gaussian  relay channel depicted in Fig. \ref{F:GaussianRelay},  $C^*_0=\infty$.
\end{theorem}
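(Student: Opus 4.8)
The plan is to prove the quantitatively stronger statement that for every finite $C_0$ there is a constant $\gamma=\gamma(P/N,C_0)>0$ with $C(C_0)\le C(\infty)-\gamma$, which immediately yields $C_0^*=\infty$. Fix any length-$n$ code with rate $R$, relay map $f_n$, and vanishing probability of error; write $X^n$ for the transmitted codeword and $J=f_n(Z^n)$, so $H(J)\le nC_0$. By Fano's inequality and the Markov chain $M-X^n-(Y^n,J)$,
$$ nR\le I(X^n;Y^n,J)+n\epsilon_n ,\qquad \epsilon_n\to 0 . $$
Since $J$ is a deterministic function of $Z^n$, the pairs $(Y^n,J,Z^n)$ and $(Y^n,Z^n)$ carry the same information, so $I(X^n;Y^n,J)=I(X^n;Y^n,Z^n)-I(X^n;Z^n\mid Y^n,J)$; and because $(Y,Z)$ is memoryless given $X$, the power constraint together with concavity of $t\mapsto\tfrac12\log(1+2t/N)$ gives $I(X^n;Y^n,Z^n)\le nC(\infty)$. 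Hence
$$ nR\le nC(\infty)-I(X^n;Z^n\mid Y^n,J)+n\epsilon_n , $$
and it suffices to show that $\tfrac1n I(X^n;Z^n\mid Y^n,J)$ stays bounded below by a positive constant depending only on $P/N$ and $C_0$; it is enough to prove this along sequences of codes whose rate approaches $C(\infty)$, since otherwise there is nothing to show.

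The reason this residual mutual information cannot vanish is an information‑tension statement that I would resolve geometrically. Along any near‑capacity sequence the displayed bound forces $\tfrac1n I(X^n;Y^n,Z^n)\to C(\infty)$, and tracing the near‑equality in the single‑letter bound for $I(X^n;Y^n,Z^n)$ rigidifies the law of $X^n$: it must be approximately isotropic with the correct average power, and, more usefully, the posterior of $X^n$ given $Y^n$ must be approximately uniform on the sphere $\mathcal{S}(Y^n)$ of the appropriate MMSE radius about the conditional mean $\E[X^n\mid Y^n]$. In this regime $I(X^n;Z^n\mid Y^n,J)$, equivalently written $I(X^n;Z^n\mid Y^n)-I(X^n;J\mid Y^n)$, quantifies how much better the decoder could pin down $X^n$ on $\mathcal{S}(Y^n)$ from the \emph{full} noisy observation $Z^n=X^n+W_1^n$ than from the $2^{nC_0}$‑valued digest $J$ of it — equivalently, up to the usual differential‑entropy normalization, it is the logarithm of the ratio between the MMSE of $X^n$ given $(Y^n,J)$ and the MMSE of $X^n$ given $(Y^n,Z^n)$. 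That this ratio exceeds $1$ by a fixed amount for every finite $C_0$ is heuristically clear: no finite‑rate description of $Z^n$ can convey the continuum‑valued conditional mean $\E[X^n\mid Z^n]$ without quantization loss, and even an optimal ``estimate‑then‑quantize'' relay leaves a residual of order $2^{-2C_0}$ per coordinate on top of what $Y^n$ already provides. Making this hold for an \emph{arbitrary} relay map $f_n$ is the crux.

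To control arbitrary relay maps I would translate the problem to high‑dimensional geometry. The level sets $\{f_n^{-1}(j)\}$ partition $Z^n$‑space into $2^{nC_0}$ cells; pulling these back onto $\mathcal{S}(Y^n)$ through the Gaussian kernel of $W_1^n$ produces exactly the conditional laws of $X^n$ given $(Y^n,J=j)$, and one must bound, uniformly over all partitions into $2^{nC_0}$ cells, how concentrated these laws can be made on $\mathcal{S}(Y^n)$. I would establish this through an extension of L\'evy's isoperimetric inequality on the sphere: among cells of a prescribed measure, spherical caps (spherical half‑space partitions) should be extremal for the relevant ``blur‑then‑localize'' functional — a spherical, noise‑operator analogue of Borell's Gaussian noise‑stability theorem — so it suffices to evaluate the extremal cap configuration, where everything reduces to a one‑dimensional computation and one checks directly that the MMSE ratio above is at least $1+c$ for an explicit $c=c(P/N,C_0)>0$. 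Taking $\gamma=\tfrac12\log(1+c)$ completes the argument.

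I expect the isoperimetric step to be the principal obstacle. The classical spherical isoperimetric inequality controls the measure of an $\varepsilon$‑neighborhood, not this coupled quantity in which a partition is first smeared by Gaussian noise and then tested against the side information $Y^n$, so identifying the right functional and proving that caps are extremal for it is a genuine extension rather than an application. A second difficulty, largely bookkeeping but essential, is carrying the concentration‑of‑measure approximations — ``$X^n$ essentially uniform on a sphere'' and ``the $Y^n$‑conditional law essentially Gaussian'' — through at finite $n$ with enough quantitative slack that the $o(n)$ errors never swallow the gap $\gamma$; this in turn requires a stability version of the Gaussian bound on $I(X^n;Y^n,Z^n)$ whose constants survive the fact that $\|x^n(m)\|^2\le nP$ holds only per codeword and not as an almost‑sure bound on $X^n$ in any fixed product form, so that the dichotomy ``either $I(X^n;Y^n,Z^n)$ is already bounded away from $nC(\infty)$, or $X^n$ is quantitatively near‑Gaussian'' is available with explicit constants.
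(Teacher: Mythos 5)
Your opening reduction is sound and is in fact equivalent to the paper's: writing $J=f_n(Z^n)$, Fano plus $I(X^n;Y^n,J)=I(X^n;Y^n)+I(X^n;J\mid Y^n)$ (or your variant $I(X^n;Y^n,J)=I(X^n;Y^n,Z^n)-I(X^n;Z^n\mid Y^n,J)$) isolates exactly the tension term that must be bounded. But from that point on the proposal is a program, not a proof, and its two load-bearing steps are genuine gaps. First, the ``rigidity'' step — that near-optimality of $I(X^n;Y^n,Z^n)$ forces the posterior of $X^n$ given $Y^n$ to be approximately uniform on a sphere of MMSE radius — is far stronger than what near-equality of a mutual-information bound gives (it would at best control an averaged divergence to a Gaussian law, not the per-realization shape of code-induced posteriors), you give no route to a finite-$n$ stability statement with explicit constants, and the relatedly claimed identity between $I(X^n;Z^n\mid Y^n,J)$ and a log-ratio of MMSEs is simply false outside jointly Gaussian laws. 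Notably, the paper shows this entire rigidity layer is unnecessary: its key inequality (Theorem~\ref{L:upperboundlemma}) upper-bounds $H(I_n\mid Y^n)$ as an explicit function of $\theta_n=\arcsin 2^{-\frac{1}{n}H(I_n\mid X^n)}$ for \emph{every} code satisfying the Markov and power conditions, with no near-capacity analysis of the input distribution at all.

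Second, the geometric crux you invoke — a spherical, Borell-type ``blur-then-localize'' extremality of caps, uniform over arbitrary partitions of $Z^n$-space into $2^{nC_0}$ cells — is never formulated as a precise functional inequality, let alone proved; you yourself flag it as the principal obstacle, and it is not a known result one can cite. The paper's actual geometric ingredient is different and is fully established: an extension of the isoperimetric inequality (Theorems~\ref{thm:strongisoperimetrysphere} and~\ref{thm:strongisoperimetryshell}, proved via the Baernstein--Taylor/Riesz rearrangement inequality) asserting that any set of effective angle $\theta$ on a high-dimensional shell has, with high probability over a random point $\mathbf Y$, intersection with the cap of angle $\omega+\epsilon$ around $\mathbf Y$ at least the cap-cap volume $V$. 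This is applied not to the $n$-letter code directly but after lifting to $B$-fold i.i.d.\ repetitions and typicality (so that the relevant conditional laws concentrate on shells), and it feeds a packing argument: each preimage cell $f^{-1}(\mathbf i)$ occupies a guaranteed minimal volume inside a cap around $\mathbf Y$, so the number of plausible $\mathbf i$'s, hence $H(I_n\mid Y^n)$, is bounded by a volume ratio. Theorem~\ref{thm:maintheorem} then follows from Theorem~\ref{thm:maintheorem2} by an elementary calculus observation — $h_{\theta_0}(\pi/2)=\frac{1}{2}\log\frac{2P+N}{P+N}$ with $h'_{\theta_0}(\pi/2)>0$, so the minimum over $\omega$ is strictly below the $C(\infty)$ level whenever $\theta_0>0$, i.e.\ whenever $C_0<\infty$. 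So while your overall philosophy (geometry, cap extremality, concentration of measure) matches the paper's, the specific extremal inequality you would need is unproved and differs from the one the paper actually uses, and the auxiliary rigidity machinery you lean on is both unavailable and avoidable.
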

\medbreak
This theorem follows immediately from the following theorem which establishes a new upper bound on the capacity of this channel for any $C_0$.
\smallbreak
\begin{theorem}\label{thm:maintheorem2}
For the symmetric Gaussian  relay channel depicted in Fig. \ref{F:GaussianRelay}, the capacity $C(C_0)$ satisfies
\begin{align*} 
 C(C_0)  \leq \ &\frac{1}{2}\log \left(1+\frac{P}{N}\right)\\
 &+\sup_{\theta\in\left[\arcsin (2^{-C_0}),\frac{\pi}{2}\right]}\min \Bigg\{ 
 \begin{split}
 & C_0+\log \sin \theta, \\
& \min_{\omega\in \left(\frac{\pi}{2}-\theta, \frac{\pi}{2}\right]}  h_{\theta}(\omega)
\end{split}
\Bigg\}
\end{align*}

where
\begin{align*}&~h_{\theta}(\omega) :=    \frac{1}{2}\log \left(\frac{4\text{sin} ^2\frac{\omega}{2}(P+N-N\text{sin}^2 \frac{\omega}{2})\sin^2\theta}{(P+N)(\text{sin} ^2 \theta - \cos^2 \omega)} \right).
 \end{align*}
\end{theorem}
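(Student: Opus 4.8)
The plan is to prove the bound directly at the $n$-letter level, converting the resulting ``information tension'' into a high-dimensional geometry problem instead of single-letterizing. Start from Fano's inequality: for any sequence of $(2^{nR},n)$ codes with $P_e^{(n)}\to 0$, writing $J=f_n(Z^n)$,
\[
n(R-\epsilon_n)\ \le\ I(M;Y^n,J)\ =\ I(M;Y^n)+I(M;J\,|\,Y^n).
\]
The first term is handled by the standard point-to-point Gaussian argument, $I(M;Y^n)\le I(X^n;Y^n)\le \tfrac n2\log(1+P/N)$ (using that the per-codeword constraint forces $\tfrac1n\E\|X^n\|^2\le P$), and produces the leading $\tfrac12\log(1+P/N)$. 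The whole content of the theorem is therefore a bound on $\tfrac1n I(M;J\,|\,Y^n)$: the relay index $J$ is (i) a deterministic function of $Z^n$, which is a statistical mirror image of $Y^n$ given $X^n$; (ii) worth at most $nC_0$ bits; and yet it must reveal to the destination — who already holds $Y^n$ — enough about $M$ to push the rate above $C(0)$. The claim is that these three requirements are in tension and that quantifying it gives exactly the second term.

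To make this geometric, I would use the power constraint to place each codeword essentially on a sphere of radius $\le\sqrt{nP}$ and, by concentration of measure, place $Y^n$ and $Z^n$ near spheres around scaled versions of the codewords, with the angle between the ``signal directions'' carried by $Y^n$ and by $Z^n$ pinned down by $P/N$. The relay function partitions the $Z^n$-space into $2^{nC_0}$ bins; conditioning on $J=j$ confines $Z^n$ to a bin of probability $\approx 2^{-nC_0}$. This is where the promised extension of the classical isoperimetric inequality on the sphere enters: among all bins of a given measure, a spherical cap is extremal for the relevant functional (it makes the noisy/conditional image of the bin as concentrated as possible), so one may analyze the problem as though each bin were a cap of angular radius $\theta$. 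A cap of measure $2^{-nC_0}$ has $\sin\theta\approx 2^{-C_0}$ to first exponential order, which is exactly why $\theta$ is constrained to $[\arcsin(2^{-C_0}),\tfrac\pi2]$; the second parameter $\omega$ is the angular slack on the $Y^n$-side — given $Y^n$ and the bin, the conditional support of $Z^n$ subtends an angle that the joint geometry forces into $(\tfrac\pi2-\theta,\tfrac\pi2]$ — and $h_\theta(\omega)$ is precisely the per-letter bound on the excess mutual information a cap-bin of radius $\theta$ can deliver at slack $\omega$ (as a sanity check one verifies $h_\theta(\tfrac\pi2)=\tfrac12\log\tfrac{1+2P/N}{1+P/N}=C(\infty)-C(0)$, so the bound is tight as $C_0\to\infty$).

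Assembling these pieces, $\tfrac1n I(M;J\,|\,Y^n)$ is bounded by the minimum of two competing estimates for a bin of effective radius $\theta$: a rate-counting estimate $C_0+\log\sin\theta$ (an over-large bin wastes description rate, and $\sin\theta\le1$), and the geometric estimate $\min_{\omega\in(\pi/2-\theta,\pi/2]}h_\theta(\omega)$ from the isoperimetric comparison applied conditionally on $Y^n$. Since the code — hence $\theta$ — is adversarial, we take the supremum over $\theta\in[\arcsin(2^{-C_0}),\tfrac\pi2]$, which reproduces exactly the right-hand side of Theorem~\ref{thm:maintheorem2}. A closing step checks that all concentration estimates and the cap-extremality hold up to $o(n)$ terms, so the bound passes to the limit and holds for the capacity.

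I expect the crux to be the isoperimetric extension and its quantitative exploitation. The classical spherical inequality controls the measure of a \emph{metric} neighborhood of a cap; what is needed here is that caps (equivalently, Gaussian half-spaces) are extremal for the spread of the \emph{noisy image} of a set of prescribed measure, and with enough precision to extract the explicit logarithmic function $h_\theta(\omega)$ rather than an order-of-magnitude statement. The secondary obstacles are of the ``reduction-to-the-clean-model'' type: passing from codewords that are only approximately on the sphere (and may use less than full power, under the per-codeword constraint) to the exact spherical picture without losing constants in the exponent; justifying the cap comparison for bins that are thin or spread out rather than ball-like; and carefully tracking which pairs $(\theta,\omega)$ can actually occur, so that the optimization domain is precisely the one stated.
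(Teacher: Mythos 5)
Your skeleton coincides with the paper's own route: Fano plus the chain rule gives $nR\lesssim \frac{n}{2}\log(1+P/N)+H(I_n|Y^n)-H(I_n|X^n)$ (your $I(M;J|Y^n)$ is dominated by this difference via the Markov chain $I_n-X^n-Y^n$); the branch $C_0+\log\sin\theta$ is nothing but $H(I_n|Y^n)\leq nC_0$ with $\sin\theta_n:=2^{-\frac{1}{n}H(I_n|X^n)}$; the constraint $\theta\in[\arcsin(2^{-C_0}),\pi/2]$ comes from $H(I_n|X^n)\leq H(I_n)\leq nC_0$ (not from each bin having probability $\approx 2^{-nC_0}$, which is neither true nor needed); and the final supremum over $\theta$ and the sanity check $h_\theta(\pi/2)=C(\infty)-C(0)$ are as in Section~\ref{S:proof}. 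So the strategy is the right one, and the issue is not the decomposition but the only hard step: the claim that ``the isoperimetric comparison applied conditionally on $Y^n$'' yields $H(I_n|Y^n)\leq n\min_{\omega}\frac{1}{2}\log\big(4\sin^2\frac{\omega}{2}(P+N-N\sin^2\frac{\omega}{2})/((P+N)(\sin^2\theta_n-\cos^2\omega))\big)$, i.e.\ the content of Theorem~\ref{L:upperboundlemma}. As written, your proposal asserts this rather than derives it.

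Two ideas are missing there. First, your extremality statement points in the wrong direction: one cannot ``analyze the problem as though each bin were a cap,'' since the bins $f_n^{-1}(i)$ are arbitrary sets and no rearrangement step is available that replaces them by caps inside an entropy bound. What is actually proved and used (Theorem~\ref{thm:strongisoperimetryshell}) is a probabilistic lower bound in the opposite direction: \emph{any} subset of the noise shell around $\mathbf x$ with effective angle $\theta$ intersects the cap of angle $\omega+\epsilon$ around a randomly drawn, rotationally invariant $\mathbf Y$ in volume at least $(1-\epsilon)V$, so every bin does at least as well as a cap; this is exactly what a counting argument over bins requires. Second, even granting that, you still need the mechanism converting geometry into entropy. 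The paper (i) lifts to $B$ i.i.d.\ repetitions so that, by an AEP-type argument (Lemma~\ref{L:prob_s(x,i)}), the conditional bin probability concentrates at $2^{nB\log\sin\theta_n}$ and hence the bin's portion of the noise shell around the codeword has volume about $2^{\frac{nB}{2}\log(2\pi eN\sin^2\theta_n)}$ --- this is what ties $H(I_n|X^n)$ to an effective angle, and it fails at a single block where $p(\mathbf i|\mathbf x)$ need not concentrate; and (ii) runs a packing argument (Lemma~\ref{L:Keylemma} plus Lemma~\ref{L:twoball}): the bins are disjoint, each occupies at least $2^{nB[\frac{1}{2}\log(2\pi eN(\sin^2\theta_n-\cos^2\omega))-\delta]}$ of the ball of radius $\approx\sqrt{nB\cdot 4N\sin^2\frac{\omega}{2}}$ around a typical $\mathbf Y$, whose intersection with the power ball has volume at most $2^{nB[\frac{1}{2}\log(8\pi eN\sin^2\frac{\omega}{2}(P+N-N\sin^2\frac{\omega}{2})/(P+N))+\delta]}$, so the number of plausible $\mathbf I$ values given $\mathbf Y$ is at most the ratio, which bounds $H(\mathbf I|\mathbf Y)=BH(I_n|Y^n)$ and, after subtracting $H(I_n|X^n)$, produces exactly $h_{\theta_n}(\omega)$. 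Without the $B$-fold lift and this counting step, the assertion that $h_\theta(\omega)$ ``is precisely the per-letter bound on the excess mutual information'' is a restatement of the theorem, not a proof. (Minor point: $\omega$ is a free parameter to be optimized over $(\pi/2-\theta,\pi/2]$, constrained only so that the cap--cap intersection is nonempty; it is not an angle forced by the joint geometry of $Y^n$ and the bin.)
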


In Fig.~\ref{fig:plots} we plot this upper bound (label: New bound) under three different SNR values of the Gaussian channels, together with the cut-set bound \cite{covelg79} and an upper bound on the capacity of this channel we have previously derived in \cite{WuOzgur_TIT_Gaussian} (label: Old bound). For reference, we also provide the rate achieved by a compress-and-forward relay strategy (label: C-F), which employs Gaussian input distribution at the source combined with Gaussian quantization and Wyner-Ziv binning at the relay.\footnote{In the low SNR regime, we can achieve higher rates using bursty compress-and-forward \cite{ElGamalKim}, as demonstrated in the left-most plot of Fig.~\ref{fig:plots}. Note that since we still impose the Gaussian restriction on the input and quantization distributions for bursty compress-forward,  the resultant rates are not concave in $C_0$ and can be further improved by time sharing.} The flat levels at which the cut-set bound and our old bound saturate in these plots precisely correspond to $C(\infty)$. Note that while these earlier bounds reach $C(\infty)$ at finite $C_0$ values, hence leading to finite lower bounds on $C_0^*$, our new bound remains bounded away from $C(\infty)$ in all the three plots. Indeed, it can be formally shown that the new bound remains bounded away from $C(\infty)$ (the flat level in the plots) at any finite $C_0$ value. We prove this formally in the proof of Theorem~\ref{thm:maintheorem}.

\begin{figure*}[hbt]
\centering
\subfigure{\includegraphics[width=2.3in]{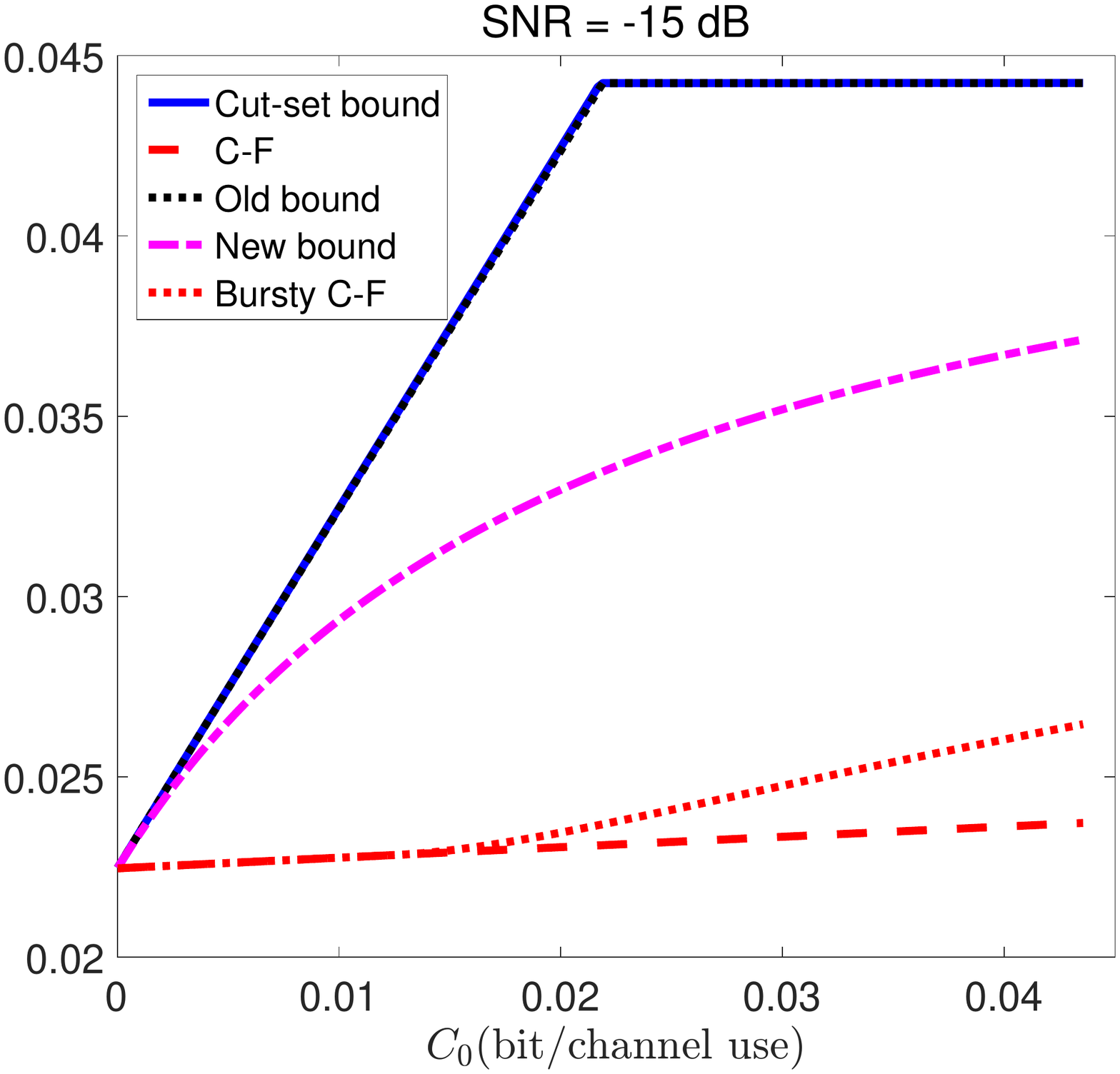}}
\subfigure{\includegraphics[width=2.31in]{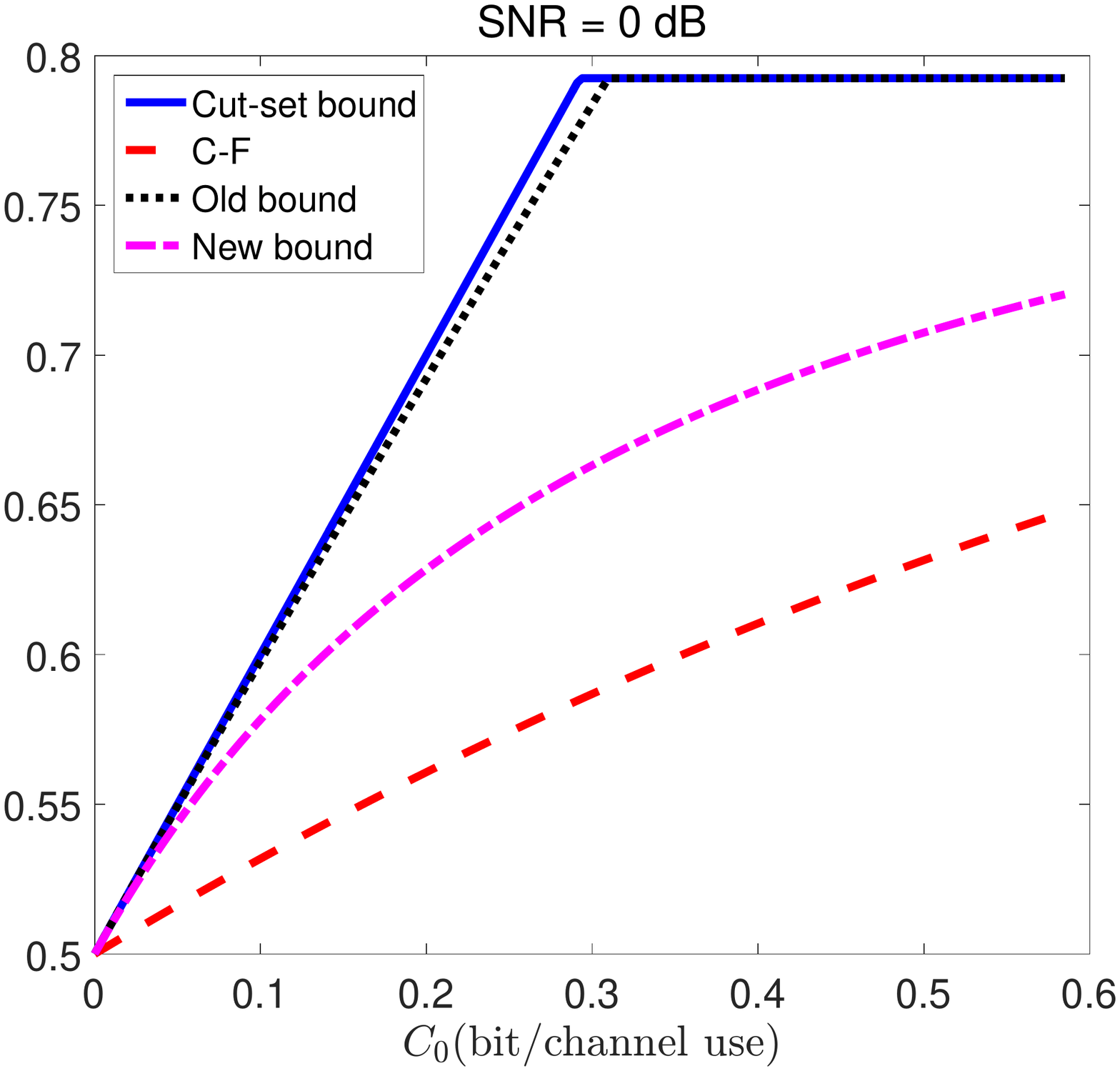}}
\subfigure{\includegraphics[width=2.26in]{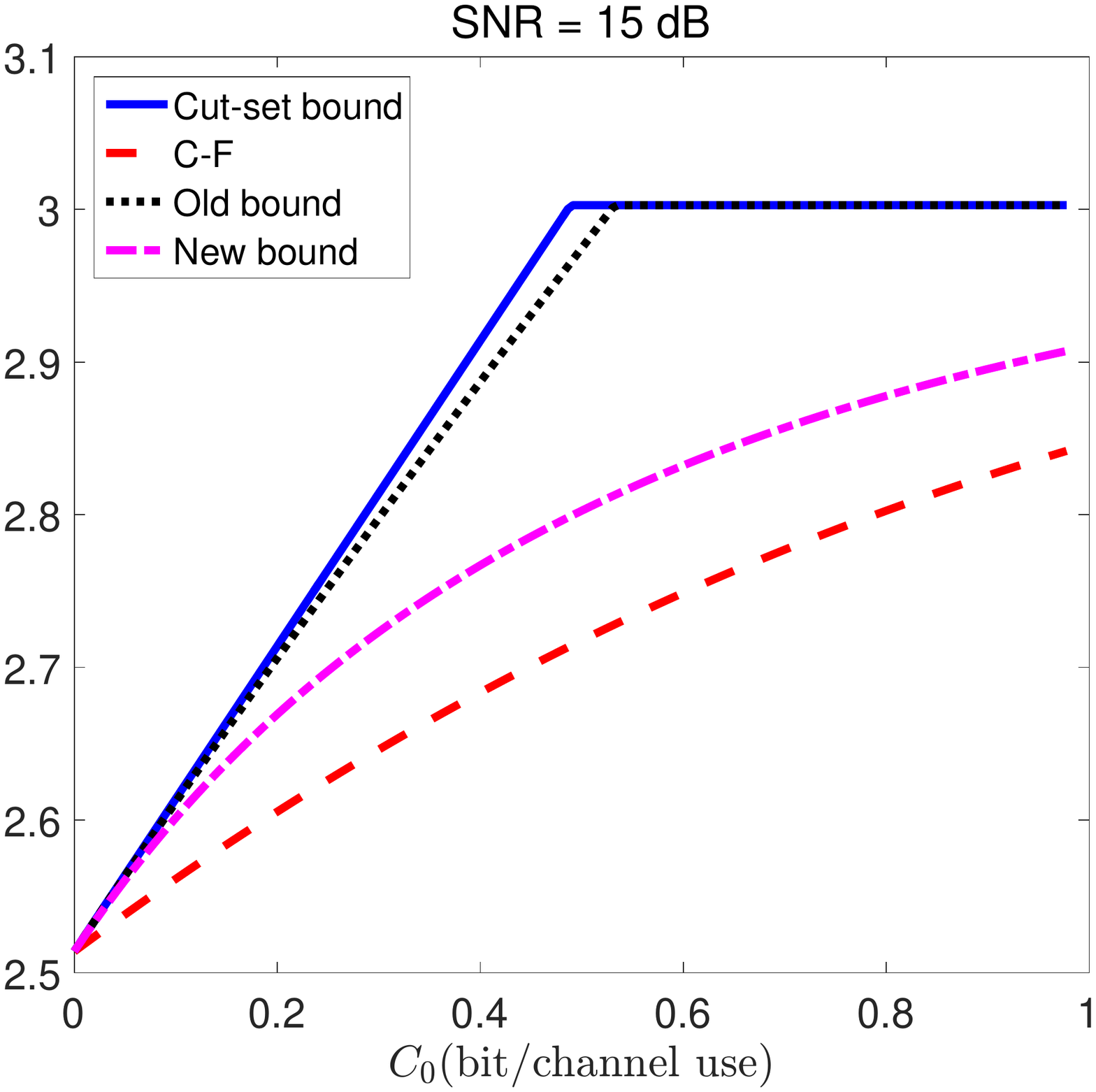}}
\caption{Upper bounds and achievable rates for the Gaussian relay channel.}
\label{fig:plots}
\end{figure*}
While in this paper we restrict our attention to the symmetric case, an assumption imposed by Cover in his  original formulation of the problem given above, our methods and results also extend to the asymmetric case. In \cite{ISIT2017}, we show that when the relay's and the destination's observations are corrupted by independent Gaussian noises of different variances, it is still true that $C_0^*=\infty$ regardless of the channel parameters. The extension to this asymmetric case heavily builds on the methods and results we develop in this paper for the symmetric case. Interestingly, the symmetric case, which Cover seems to somewhat arbitrarily assume in his problem formulation, turns out to be the canonical case for our proof technique. We also provide a solution to Cover's problem for binary symmetric channels in \cite{Allerton2017} using a similar approach.

\subsection{Technical Approach}
There are two basic aspects in an information-theoretic characterization of an operational
problem: the so-called achievability result and converse result. An achievability result establishes what
is possible in a given setting, while the converse result distinguishes what is impossible. The ideal situation
is when these two results match, in which case an information limit is born. The most famous example goes
back to Shannon and the inception of the field: Reliable communication is possible over a noisy channel if, and only if, the rate of transmission does not exceed the capacity of the channel \cite{Shannon1948}.

Over the last two decades, there has been significant leap forward in developing achievable schemes for multi-user problems, ranging from schemes based on interference alignment and distributed MIMO, to lattice-based techniques, to strategies inspired by network coding and linear deterministic models. This stands in fairly stark contrast to the set of
converse arguments in the information theorist's toolkit. Almost all converse arguments
rely on a few fundamental tools that go back to the early years of the field: information measure calculus (e.g., chain rules, non-negativity of divergence), Fano's inequality, and the entropy power inequality.
The typical converse program follows  from a clever application of these tools to ``single-letterize'' an expression involving information measures in a high-dimensional space (so called $n$-letter forms), with the possible introduction of auxiliary random variables as needed.

In this paper, we take a different approach. Instead of focusing on single-letterizing pertinent $n$-letter forms, we aim to directly quantify the tension between them. To do this, we lift the problem to an even higher dimensional space and study the geometry of the typical sequences  generated independently and identically (i.i.d.) from these $n$-dimensional distributions. We establish non-trivial geometric properties satisfied by these typical sequences, which are then translated  to inequalities satisfied by the original $n$-dimensional information measures. This notion of ``typicality'', connecting information measures associated with a distribution to probabilities of  long i.i.d. sequences generated from this distribution, is a standard tool in establishing achievability results in information theory but to the best of our knowledge has been rarely used in proving converse results in network information theory, with only a few examples such as the work of Zhang \cite{Zhang} from 1988 and our recent works \cite{IZS2016}--\cite{ISIT2016}.

To study the geometry of the typical sequences, we use classical tools from high-dimensional geometry, such as the isoperimetric inequality \cite{levy}, measure concentration \cite{mono}, and rearrangement and symmetrization theory \cite{shortcourse,baernstein}. We also prove a new geometric result which can be regarded as an extension of the classical isoperimetric inequality on a high-dimensional sphere and can be of interest in its own right. Note that the classical isoperimetric inequality on the sphere states that among all sets on the sphere with a given measure (area), the spherical cap has the smallest boundary  or more generally the smallest  neighborhood \cite{isoperimetric}. As an intermediate result in this paper, we show that the spherical cap not only minimizes the measure of its neighborhood, but roughly speaking, also minimizes the measure of its intersection with the neighborhood of a randomly chosen point on the sphere.

The incorporation of geometric insight in information theory is not new. Formulating the problem
of determining the communication capacity of channels as a problem in high-dimensional
geometry is indeed one of Shannon's most important insights that has led to the conception of
the field. In his classical paper ``Communication in the presence of noise'', 1949 \cite{shannon49},
Shannon develops a geometric representation of any point-to-point communication system, and
then uses this geometric representation to derive the
capacity formula for the AWGN channel. His converse proof is based on a sphere-packing
argument, which relies on the notion of sphere hardening (i.e. measure concentration)  in high-dimensional space. Our approach resembles Shannon's approach in \cite{shannon49} in that the main argument in our proof is also a packing argument; however, instead of packing smaller spheres in a larger sphere, we  pack (quantization) regions of some minimal measure (and unknown shape) inside a spherical cap. The key ingredient in our packing argument is the extended isoperimetric inequality we develop, which guarantees that each of these quantization regions has some minimal intersection with the spherical cap. Also, note that we do not directly study the geometry of the codewords as in \cite{shannon49}, but rather use geometry in an indirect way to solve an $n$-letter information tension problem. 
}
\subsection{Organization of The Paper} The remainder of the paper is organized as follows. In Section~\ref{sec:geometry}, we review some basic definitions and results for high-dimensional spheres, and state our main geometric result in Theorem~\ref{thm:strongisoperimetryshell}, which can be regarded as an extension of the classical isoperimetric inequality on the sphere. In Section~\ref{sec:inequality}, we introduce some typicality lemmas and combine them with Theorem~\ref{thm:strongisoperimetryshell} to prove a key information inequality stated in Theorem~\ref{L:upperboundlemma}. The proofs of our main theorems,  Theorem \ref{thm:maintheorem} and \ref{thm:maintheorem2}, are almost immediate given Theorem~\ref{L:upperboundlemma} and are provided in Section~\ref{S:proof}. 

Appendices \ref{A:proofisop} and \ref{A:Proof_Typicality} are then devoted to the proof of Theorem~\ref{thm:strongisoperimetryshell} and the proofs of the typicality lemmas introduced in Section~\ref{sec:inequality}, respectively.  The proofs of these typicality lemmas require us to derive formulas and exponential characterizations for the area/volume of various high dimensional sets including balls, spherical caps, shell caps, and intersections of such sets. We derive these characterizations in Appendix~\ref{A:Miscellaneous}. 

\section{Geometry of High-Dimensional Spheres}\label{sec:geometry}

In this section, we summarize some basic definitions and results for high-dimensional spheres and present our main geometric result which can be regarded as an extension of the classical isoperimetric inequality on high-dimensional spheres. This result is the key to proving the information inequality we present in the next section, which in turn is the key to proving Theorems~\ref{thm:maintheorem} and \ref{thm:maintheorem2}. 

\subsection{Basic Results on High-Dimensional Spheres}\label{sec:geometricprelim}

We now summarize some basic results on high-dimensional spheres that will be referred to later in the paper.

\begin{itemize}
\item[(i)]{\bf Isoperimetric Inequality:}  Let $\mathbb{S}^{m-1}\subseteq \mathbb{R}^m$ denote the $(m-1)$-sphere of radius $R$, i.e., $$\mathbb{S}^{m-1}=\left\{\mathbf z\in\mathbb{R}^m:\|\mathbf z\|=R\right\},$$ equipped with the rotation invariant (Haar) measure $\mu=\mu_{m-1}$ that is normalized such that $$\mu(\mathbb{S}^{m-1}) = \frac{2\pi^{\frac{m}{2}}}{\Gamma(\frac{m}{2})}R^{m-1},$$
i.e. the usual surface area. Let $\mathbb{P}(A)$ denote the probability of a set or event $A$ with respect to the corresponding Haar probability measure, i.e. the normalized Haar measure such that $\mathbb{P}(\mathbb{S}^{m-1})=1$.
A spherical cap is defined as a ball on $\mathbb{S}^{m-1}$ in the geodesic metric (or simply the angle) $\angle( \mathbf z,\mathbf y)=\arccos(\langle\mathbf z/R,\mathbf y/R\rangle)$, i.e.,
$$
\text{Cap}(\mathbf z_0,\theta)=\left\{\mathbf z\in \mathbb{S}^{m-1}: \angle(\mathbf z_0,\mathbf z)\leq \theta \right\}.
$$
See Fig.~\ref{F:onecap}. We will often say that an arbitrary set $A\subseteq \mathbb{S}^{m-1}$ has an effective angle $\theta$ if $\mu(A)=\mu(C)$, where $C=\text{Cap}(\mathbf z_0,\theta)$ for some arbitrary $\mathbf z_0\in\mathbb{S}^{m-1}$.

\begin{figure}[t!]
\centering
\includegraphics[width=0.3\textwidth]{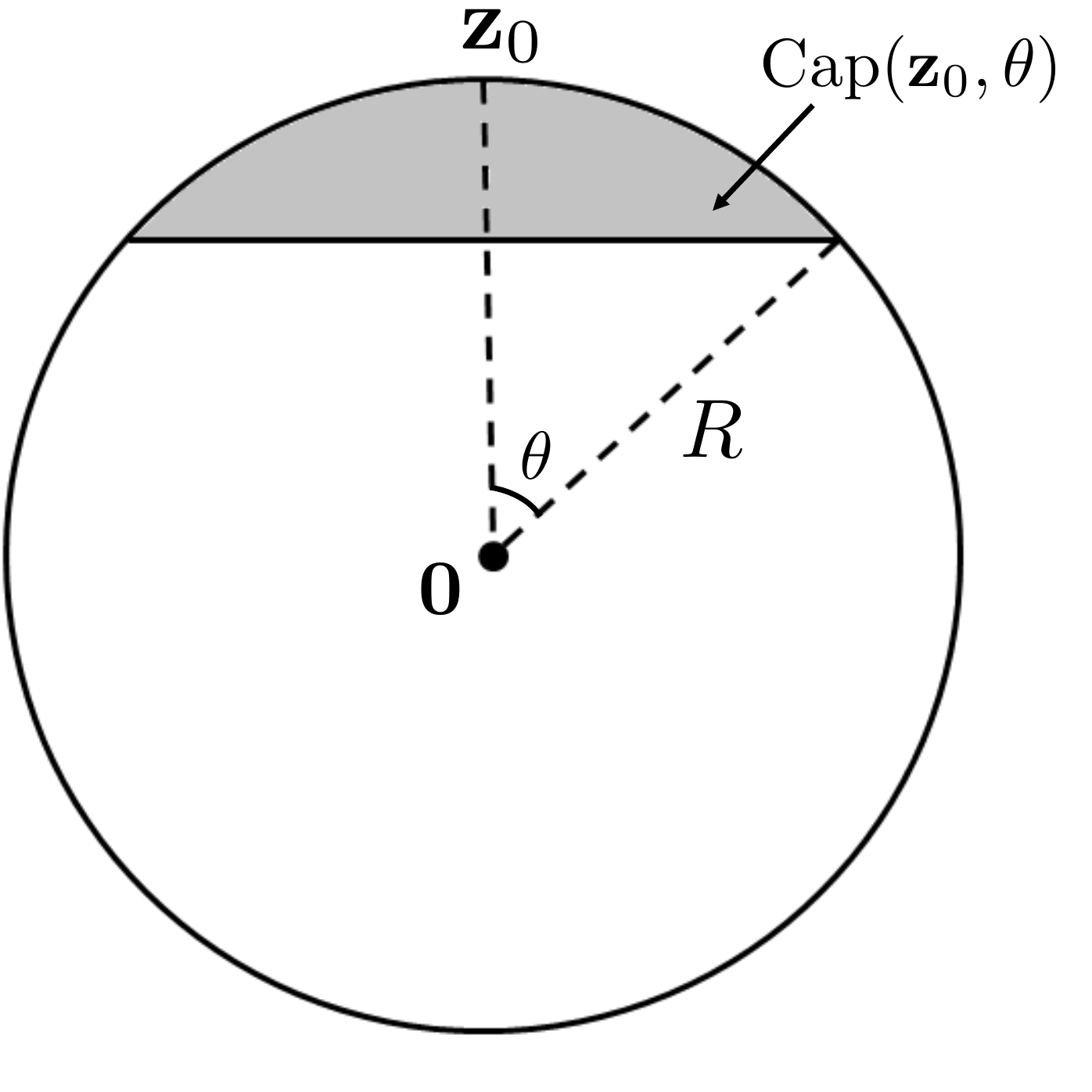}
\caption{A spherical cap with angle $\theta$.}
\label{F:onecap}
\end{figure}
\medbreak
The following proposition is the so-called isoperimetric inequality, which was first proved by Levy in 1951 \cite{levy}.  (See also \cite{isoperimetric}.) It states the intuitive fact that among all sets on the sphere with a given  measure, the spherical cap has the smallest boundary, or more generally the smallest neighborhood. This is formalized as follows:
\medbreak
\begin{proposition}\label{prop:isoperimetry}
For any arbitrary set $A\subseteq \mathbb{S}^{m-1}$ such that $\mu(A)=\mu(C)$, where $C=\text{Cap}(\mathbf z_0,\theta)\subseteq \mathbb{S}^{m-1}$ is a spherical cap, it holds that $$\mu(A_t)\geq \mu(C_t), \ \forall t\geq 0,$$ where $A_t$ is the $t$-neighborhood of $A$, defined as $$A_t=\left\{\mathbf z\in \mathbb{S}^{m-1}: \min_{\mathbf z'\in A}\angle(\mathbf z, \mathbf z')\leq t\right\},$$ and similarly $$C_t=\left\{\mathbf z\in \mathbb{S}^{m-1}: \min_{\mathbf z'\in C}\angle(\mathbf z, \mathbf z')\leq t\right\}=\text{Cap}(\mathbf z_0,\theta+t).$$
\end{proposition}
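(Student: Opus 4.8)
The plan is to prove this by the classical method of \emph{two-point symmetrization} (polarization) on the sphere. Fix the center $\mathbf z_0$ of the cap $C$. For a hyperplane $H$ through the origin, with $\mathbf z_0$ lying in one of the two closed half-spaces it bounds --- call that side $H^+$ and the other $H^-$ --- let $\sigma_H$ denote orthogonal reflection across $H$, which restricts to a measure-preserving isometry of $\mathbb S^{m-1}$. Given a measurable $A\subseteq\mathbb S^{m-1}$, define its polarization
$$
A^H \;=\; \bigl[(A\cup\sigma_H A)\cap H^+\bigr]\;\cup\;\bigl[(A\cap\sigma_H A)\cap H^-\bigr],
$$
so that on the $\mathbf z_0$-side we keep the union of $A$ with its mirror image, and on the far side the intersection.

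First I would record the two basic properties of polarization. (a) It preserves measure, $\mu(A^H)=\mu(A)$: for each reflection pair $\{x,\sigma_H x\}$ the set $A^H$ contains exactly as many of the two points as $A$ does, and $\sigma_H$ is measure preserving. (b) It does not enlarge neighborhoods: $(A^H)_t\subseteq(A_t)^H$ for every $t\ge 0$, whence $\mu((A^H)_t)\le\mu((A_t)^H)=\mu(A_t)$. The only genuinely geometric input is the elementary fact that for $x,y$ in the same closed half-space $H^+$ one has $\angle(x,y)\le\angle(x,\sigma_H y)$ --- immediate from $\langle x,y\rangle-\langle x,\sigma_H y\rangle = 2\,x_1y_1\ge 0$ in coordinates where $H=\{z_1=0\}$, since $\arccos$ is decreasing --- after which (b) follows from a short case check on which sides of $H$ the point $x$, a nearest point $a\in A^H$, and their reflections lie.

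Next I would iterate: choose a sequence $\{H_k\}$ of half-spaces, all with $\mathbf z_0\in H_k^+$, and set $A_{(0)}=A$, $A_{(k)}=(A_{(k-1)})^{H_k}$. By (a), every $A_{(k)}$ has the same measure (hence the same effective angle $\theta$) as $C$, and by (b) the quantities $\mu((A_{(k)})_t)$ are non-increasing in $k$ and bounded above by $\mu(A_t)$ for every $t$. The crux is to show that for a suitable such sequence $\mathbf 1_{A_{(k)}}\to\mathbf 1_{C}$ in $L^1(\mu)$, where $C=\text{Cap}(\mathbf z_0,\theta)$; since the cap has nonempty interior, this forces $\liminf_k\mu((A_{(k)})_t)\ge\mu(C_t)$, and combining with the bound $\mu((A_{(k)})_t)\le\mu(A_t)$ gives $\mu(C_t)\le\mu(A_t)$, which is the assertion because $C_t=\text{Cap}(\mathbf z_0,\theta+t)$. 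This convergence is the main obstacle. I would establish it by a compactness-plus-rigidity argument: track the functional $A\mapsto\int_{\mathbb S^{m-1}}\mathbf 1_A(x)\langle x,\mathbf z_0\rangle\,d\mu(x)$, which is non-decreasing under every polarization with pole $\mathbf z_0$ (polarization transports mass toward $\mathbf z_0$) and is maximized, among sets of the given measure, exactly by $C$; a cluster point of the iterates is then invariant under all such polarizations and therefore equals $C$ up to a null set.

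Alternatively, the finite-$t$ statement can be reduced to the infinitesimal (perimeter) version --- ``among sets of given measure the cap has least perimeter'' --- via ODE comparison: writing $v(t)=\mu(A_t)$ and $w(t)=\mu(C_t)$, one has $v'(t)\ge P(A_t)\ge\Phi(v(t))$ and $w'(t)=\Phi(w(t))$ with $v(0)=w(0)$, where $\Phi(s)$ is the perimeter of the cap of measure $s$; a Gronwall-type comparison then yields $v\ge w$. This only relocates the difficulty to the perimeter inequality, which is again proved by symmetrization or by a calibration argument, so in either route the symmetrization convergence and rigidity of the cap is the step requiring real work, while measure preservation and the neighborhood inclusion are routine bookkeeping.
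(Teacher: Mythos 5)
The paper itself does not prove Proposition~\ref{prop:isoperimetry}: it is quoted as Levy's classical theorem with citations to \cite{levy} and \cite{isoperimetric}, so there is no in-paper argument to match. What you propose is a self-contained proof by two-point symmetrization, which is indeed a standard route to this result and is close in spirit to the Baernstein--Taylor rearrangement machinery \cite{baernstein}, \cite{shortcourse} that the paper uses later for Theorems~\ref{thm:strongisoperimetrysphere} and \ref{thm:strongisoperimetryshell}. Your bookkeeping steps are correct: measure preservation is immediate, and the inclusion $(A^H)_t\subseteq (A_t)^H$ does follow from the angle inequality $\angle(x,y)\le\angle(x,\sigma_H y)$ for $x,y$ on the same side of $H$ by exactly the case check you indicate (the only nontrivial case is $x\in H^-$ with the witnessing point $a\in H^+$, where one passes to $\sigma_H a$). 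The deduction $\liminf_k\mu((A_{(k)})_t)\ge\mu(C_t)$ from $\mathbf 1_{A_{(k)}}\to\mathbf 1_C$ in $L^1$ also holds, though it needs a line more than ``the cap has nonempty interior'': for each $\mathbf x$ interior to $C_t$ the cap $\text{Cap}(\mathbf x,t)$ meets $C$ in a set of positive measure, hence meets $A_{(k)}$ for all large $k$, and Fatou's lemma then gives the liminf bound.

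The genuinely incomplete step is the convergence of the iterated polarizations to the cap. As written, ``a cluster point of the iterates'' is not available for free: the indicators $\mathbf 1_{A_{(k)}}$ are bounded in $L^\infty$ but need not have an $L^1$-convergent subsequence, and weak-$*$ cluster points need not be indicator functions, so the compactness-plus-rigidity argument does not close as stated. The rigidity half is fine (the functional $\int_{\mathbb{S}^{m-1}}\mathbf 1_A(\mathbf x)\langle \mathbf x,\mathbf z_0\rangle\,d\mu$ is nondecreasing under every polarization with $\mathbf z_0\in H^+$ and, by the bathtub principle, is uniquely maximized by the cap among sets of the given measure), but to get convergence one must either choose each hyperplane to capture a fixed fraction of the attainable gain in this functional and prove a quantitative rigidity statement, or invoke the known results that a suitable (universal) sequence of polarizations drives any measurable set to its cap symmetrization in $L^1$ (Brock--Solynin, Van Schaftingen; see also \cite{shortcourse}). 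Your alternative ODE route carries the analogous caveats that $t\mapsto\mu(A_t)$ need not be differentiable (so the comparison must be run with lower Dini derivatives) and that it presupposes the cap's perimeter extremality, which is again a symmetrization statement. So the approach is sound and standard, but the symmetrization-convergence step must either reproduce or cite nontrivial known results; it is not established by the sketch itself.
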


\item[(ii)]{\bf Measure Concentration:} Measure concentration on the sphere refers to the fact that most of the measure of a high-dimensional sphere is concentrated around any equator. The following elementary result capturing this phenomenon will be used later in the paper when we prove the extended isoperimetric inequality.
\medbreak
\begin{proposition} \label{P:measurecon}
Given any $\epsilon,\delta>0$, there exists some $M(\epsilon, \delta)$ such that for any $m\geq M(\epsilon, \delta)$ and any $\  \mathbf z \in \mathbb{S}^{m-1}$,
\begin{align}
\mathbb P \left( \angle(\mathbf z,\mathbf Y) \in [\pi/2-\epsilon,\pi/2+\epsilon] \right)\geq 1-\delta , \end{align}
where $\mathbf Y\in\mathbb{S}^{m-1}$ is distributed according to the Haar probability measure.
\end{proposition}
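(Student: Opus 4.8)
The plan is to reduce the statement to a concentration fact about a single coordinate of a uniformly random unit vector, and then dispatch it with a second‑moment (Chebyshev) estimate. First I would use the rotation invariance of the Haar measure to assume without loss of generality that $\mathbf z$ is a fixed pole, say $\mathbf z=(R,0,\dots,0)$; then $\mathbf U:=\mathbf Y/R$ is uniformly distributed on the unit sphere in $\mathbb R^m$ and $\cos\angle(\mathbf z,\mathbf Y)=\langle \mathbf z/R,\mathbf U\rangle=U_1$. We may assume $\epsilon\in(0,\pi/2)$, since otherwise the interval $[\pi/2-\epsilon,\pi/2+\epsilon]$ covers all of $[0,\pi]$ and the probability is trivially $1$. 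Because $\arccos$ is decreasing on $[-1,1]$ and $\cos(\pi/2\mp\epsilon)=\pm\sin\epsilon$, the event $\angle(\mathbf z,\mathbf Y)\in[\pi/2-\epsilon,\pi/2+\epsilon]$ is exactly $\{\,|U_1|\le\sin\epsilon\,\}$. So it suffices to show that $\mathbb P(|U_1|>\sin\epsilon)$ can be made smaller than $\delta$ by taking $m$ large.

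Next I would compute the second moment of $U_1$. By rotation invariance the coordinates $U_1,\dots,U_m$ are identically distributed, and since $\|\mathbf U\|=1$ we have $\sum_{i=1}^m\mathbb E[U_i^2]=1$, hence $\mathbb E[U_1^2]=1/m$. Chebyshev's inequality then gives
\[
\mathbb P\bigl(|U_1|>\sin\epsilon\bigr)=\mathbb P\bigl(U_1^2>\sin^2\epsilon\bigr)\le\frac{\mathbb E[U_1^2]}{\sin^2\epsilon}=\frac{1}{m\sin^2\epsilon}.
\]
Choosing $M(\epsilon,\delta):=\lceil 1/(\delta\sin^2\epsilon)\rceil$ makes the right‑hand side at most $\delta$ for every $m\ge M(\epsilon,\delta)$, which proves the proposition.

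There is no serious obstacle here; the only points requiring care are the trigonometric translation of the angular event into a bound on $U_1$ (with the degenerate range $\epsilon\ge\pi/2$ handled separately) and the symmetry argument giving $\mathbb E[U_1^2]=1/m$. If a quantitative rate were desired, one could instead represent $\mathbf U=\mathbf G/\|\mathbf G\|$ with $\mathbf G$ a standard Gaussian vector, rewrite $\{|U_1|>\sin\epsilon\}$ as $\{\sum_{i\ge 2}G_i^2<\cot^2\epsilon\cdot G_1^2\}$, and combine a lower‑tail bound for the chi‑square $\sum_{i\ge 2}G_i^2$ with an upper‑tail bound for $G_1^2$ to obtain decay exponential in $m$; but for the present statement the second‑moment bound suffices.
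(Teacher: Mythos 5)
Your proposal is correct and follows essentially the same route as the paper's own proof: reduce to a fixed pole by rotation invariance, note $\E[\langle \mathbf z/R,\mathbf Y/R\rangle^2]=1/m$ by symmetry, and apply Chebyshev's inequality before translating back through $\arccos$. The only differences are cosmetic — you make the threshold $\sin\epsilon$ and the constant $M(\epsilon,\delta)$ explicit, which the paper leaves implicit.
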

\begin{proof}
Let $\mathbf e_1=(R,0,\ldots, 0)$. Note for any $\mathbf z \in \mathbb{S}^{m-1}$, the distribution of $\angle(\mathbf z,\mathbf Y)$ is the same as the distribution of $\angle(\mathbf e_1,\mathbf Y)$, since $\mathbf z$ can be written in the form $\mathbf z=U\mathbf e_1$, where $U$ is an orthogonal matrix, and the distribution of $\mathbf Y$ is rotation-invariant. Therefore, without loss of generality, we can assume $\mathbf z=\mathbf e_1$. Since $\langle \mathbf e_1/R, \mathbf Y/R\rangle=Y_1/R$,  we have $ E[\langle \mathbf e_1/R, \mathbf Y/R\rangle]= E[Y_1]/R=0$; we also have $  E[\langle \mathbf e_1/R, \mathbf Y/R\rangle^2]=  E[Y_1^2]/R^2=1/m$ because $E[Y_1^2]=\cdots=E[Y^2_m]$ and $E[Y_1^2]+\cdots+E[Y^2_m]=R^2$. Therefore by Chebyshev's inequality, for any $\mu>0$,
\begin{equation*}
\mathbb P(|\langle \mathbf e_1/R, \mathbf Y/R\rangle|\geq \mu)\leq \frac{1}{m\mu^2}.
\end{equation*}
Recalling that $\angle( \mathbf e_1,\mathbf Y)=\arccos(\langle\mathbf e_1/R,\mathbf Y/R\rangle)$ and noting that the R.H.S. of the above inequality can be made arbitrarily small by choosing $m$ to be sufficiently large, we have proved the proposition.
\end{proof}
\medbreak

\item[(iii)]{\bf Blowing-Up Lemma:} The above measure concentration result combined with the isoperimetric inequality immediately yields the following result:
\medbreak
\begin{proposition}\label{P:blowup}
Let $A\subseteq \mathbb{S}^{m-1}$ be an arbitrary set and $C=\text{Cap}(\mathbf z_0,\theta)\subseteq \mathbb{S}^{m-1}$ be a spherical cap such that $\mu(A)=\mu(C)$, i.e. $A$ has an effective angle of $\theta$. Then for any $\epsilon>0$ and $m$ sufficiently large,
\begin{equation}\label{eq:blowup0}
\mathbb{P}(A_{\frac{\pi}{2}-\theta+\epsilon})\geq 1-\epsilon.
\end{equation}
\end{proposition}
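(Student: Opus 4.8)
The plan is to obtain this proposition as an immediate corollary of the two preceding results, by choosing the neighborhood radius so that the comparison cap is pushed just past the equator. First I would invoke the isoperimetric inequality (Proposition~\ref{prop:isoperimetry}) to replace $A$ by its equal-measure cap $C=\text{Cap}(\mathbf z_0,\theta)$, and then invoke measure concentration (Proposition~\ref{P:measurecon}) to argue that a spherical cap of angle slightly larger than $\pi/2$ already covers all but an $\epsilon$-fraction of the sphere once $m$ is large.

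Concretely: fix $\epsilon>0$. Since $A$ has effective angle $\theta$, i.e.\ $\mu(A)=\mu(C)$, Proposition~\ref{prop:isoperimetry} applied with $t=\frac{\pi}{2}-\theta+\epsilon\ge 0$ gives
$$\mathbb{P}\!\left(A_{\frac{\pi}{2}-\theta+\epsilon}\right)\;\ge\;\mathbb{P}\!\left(C_{\frac{\pi}{2}-\theta+\epsilon}\right)\;=\;\mathbb{P}\!\left(\text{Cap}\!\left(\mathbf z_0,\tfrac{\pi}{2}+\epsilon\right)\right).$$
It then remains only to lower bound the right-hand side. I would write $\mathbb{P}(\text{Cap}(\mathbf z_0,\frac{\pi}{2}+\epsilon))=\mathbb{P}(\angle(\mathbf z_0,\mathbf Y)\le\frac{\pi}{2}+\epsilon)$ for $\mathbf Y$ Haar-distributed, and use the inclusion $\{\angle(\mathbf z_0,\mathbf Y)\in[\frac{\pi}{2}-\epsilon,\frac{\pi}{2}+\epsilon]\}\subseteq\{\angle(\mathbf z_0,\mathbf Y)\le\frac{\pi}{2}+\epsilon\}$; then Proposition~\ref{P:measurecon} with $\delta=\epsilon$ shows that for all $m\ge M(\epsilon,\epsilon)$ this probability is at least $1-\epsilon$. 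Chaining the two bounds yields \eqref{eq:blowup0}.

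The only situation deserving a separate remark is the degenerate regime $\theta>\frac{\pi}{2}+\epsilon$, in which $t=\frac{\pi}{2}-\theta+\epsilon<0$ and $A_t$ is to be read as $A$ itself; there $\mu(A)=\mu(\text{Cap}(\mathbf z_0,\theta))\ge\mu(\text{Cap}(\mathbf z_0,\frac{\pi}{2}+\epsilon))\ge(1-\epsilon)\mu(\mathbb{S}^{m-1})$ by the same concentration estimate, so the claim is trivial. I do not expect any genuine obstacle here: all the content resides in Propositions~\ref{prop:isoperimetry} and~\ref{P:measurecon}, and the proof is essentially the observation that setting $\theta+t=\frac{\pi}{2}+\epsilon$ in the isoperimetric inequality lands the comparison cap precisely in the regime controlled by measure concentration.
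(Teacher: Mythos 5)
Your proposal is correct and follows essentially the same route as the paper: reduce to the cap case via the isoperimetric inequality (Proposition~\ref{prop:isoperimetry}) and then apply measure concentration (Proposition~\ref{P:measurecon}) to the cap $\text{Cap}(\mathbf z_0,\frac{\pi}{2}+\epsilon)$. Your extra remark on the degenerate case $\theta>\frac{\pi}{2}+\epsilon$ is a harmless refinement the paper does not bother to spell out.
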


\begin{proof}
If $A=\text{Cap}(\mathbf z_0,\theta)$, $\mathbb{P}(A_{\frac{\pi}{2}-\theta+\epsilon})\geq 1-\epsilon$ due to Proposition \ref{P:measurecon}. If $A$ is not a spherical cap, then $\mathbb{P}(A_{\frac{\pi}{2}-\theta+\epsilon})\geq P(C_{\frac{\pi}{2}-\theta+\epsilon})$  where $C=\text{Cap}(\mathbf z_0,\theta)$, due to the isoperimetric inequality in Proposition~\ref{prop:isoperimetry}.
\end{proof}
\medbreak

If we take $A$ to be a half sphere, this result says that most of the measure of the sphere is concentrated around the boundary of this half-sphere, i.e. an equator, which is the result in Proposition \ref{P:measurecon}. However, due to the isoperimetric inequality,  Proposition~\ref{P:blowup} allows us to make the stronger statement that the measure is concentrated around the boundary of any set with probability $1/2$. While the elementary results we establish above suggest that this concentration takes place at a polynomial speed in the dimension $m$, it can be shown that the measure concentrates around the boundary of any set with probability $1/2$ exponentially fast in the dimension $m$; see \cite{matou}.
%

\end{itemize}

\subsection{Extended Isoperimetry on the Sphere and the Shell}

An almost equivalent way to state the blowing-up lemma in Proposition~\ref{P:blowup} is the following: Let $A\subseteq \mathbb{S}^{m-1}$ be an arbitrary set with effective angle $\theta>0$. Then for any $\epsilon>0$ and sufficiently large $m$,
\begin{equation}\label{eq:isoperimetry}
\mathbb{P}\left(\mu\left(A\cap \text{Cap}\left(\mathbf Y,\frac{\pi}{2}-\theta+\epsilon\right)\right)>0\right)> 1-\epsilon,
\end{equation}
{where $\mathbf Y$ is distributed according to the normalized Haar measure on $\mathbb{S}^{m-1}$.} In words, if we take a $\mathbf y$ uniformly at random on the sphere and draw a spherical cap of angle slightly larger than $\frac{\pi}{2}-\theta$ around it, this cap will intersect the set $A$ with high probability. This statement is almost equivalent to \eqref{eq:blowup0} since the $\mathbf y$'s for which the intersection has non-zero measure lie in the $\frac{\pi}{2}-\theta+\epsilon$-neighborhood of $A$. Note that similarly to Proposition~\ref{P:blowup}, this statement would trivially follow from measure concentration on the sphere (Proposition~\ref{P:measurecon}) if $A$ were known to be a spherical cap, and it holds for any $A$ due to the isoperimetric inequality in Proposition~\ref{prop:isoperimetry}. By building on the Riesz rearrangement inequality \cite{baernstein}, we prove the following extended result:

\medbreak
\begin{theorem}\label{thm:strongisoperimetrysphere}
 Let $A\subseteq \mathbb{S}^{m-1}$ be any arbitrary subset of $\mathbb{S}^{m-1}$ with effective angle $\theta>0$, and let $V=\mu(\text{Cap}(\mathbf z_0, \theta) \cap \text{Cap}(\mathbf y_0, \omega))$ where  $\mathbf z_0,\mathbf y_0 \in \mathbb{S}^{m-1}$ with $\angle(\mathbf z_0,\mathbf y_0)=\pi/2$
and $\theta+\omega>\pi/2$. (See Fig.~\ref{F:twocap}.) Then for any $\epsilon>0$, there exists an $M(\epsilon)$ such that for $m > M(\epsilon)$,
\begin{equation*}
\mathbb{P}\left(\mu(A\cap \text{Cap}(\mathbf Y,\omega+\epsilon))> (1-\epsilon)V\right)\geq 1-\epsilon,
\end{equation*}
where $\mathbf Y$ is a random vector on $\mathbb{S}^{m-1}$ distributed according to the normalized Haar measure.
\end{theorem}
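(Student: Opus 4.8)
The plan is to reduce the statement to a symmetrization/rearrangement inequality for $A$ and then exploit measure concentration. First I would set up the averaged quantity $F(A) := \int_{\mathbb{S}^{m-1}} \mu\bigl(A\cap \text{Cap}(\mathbf y,\omega+\epsilon)\bigr)\, d\mathbb{P}(\mathbf y)$, which by Fubini equals $\mu(A)\cdot \mathbb{P}(\text{Cap}(\mathbf y,\omega+\epsilon))$, i.e. the product of the two measures — this tells us the \emph{average} intersection but not its concentration, so a cruder argument is needed. Instead, the key idea is the two-point symmetrization (polarization) underlying the Riesz rearrangement inequality on the sphere \cite{baernstein}: among all sets of a fixed measure, the spherical cap $\text{Cap}(\mathbf z_0,\theta)$ \emph{minimizes} the functional $\mathbf y\mapsto \mu\bigl(A\cap \text{Cap}(\mathbf y,\omega)\bigr)$ in an appropriate stochastic-dominance sense with respect to $\mathbf y$ drawn from the Haar measure. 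Concretely, I would argue that for any threshold $v$,
\begin{equation*}
\mathbb{P}\Bigl(\mu\bigl(A\cap\text{Cap}(\mathbf Y,\omega)\bigr) < v\Bigr) \;\leq\; \mathbb{P}\Bigl(\mu\bigl(C\cap\text{Cap}(\mathbf Y,\omega)\bigr) < v\Bigr),
\end{equation*}
where $C=\text{Cap}(\mathbf z_0,\theta)$; that is, the cap is the \emph{worst} set for having small intersection, so bounding the probability for $C$ suffices.

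Second, with $A$ replaced by the cap $C$, the problem becomes fully explicit. I would use Proposition~\ref{P:measurecon}: for $m$ large, $\angle(\mathbf z_0,\mathbf Y)\in[\pi/2-\epsilon',\pi/2+\epsilon']$ with probability $\geq 1-\epsilon$. On this high-probability event, $\text{Cap}(\mathbf Y,\omega+\epsilon)$ is a cap of the right angle whose center is within $\epsilon'$ of the equator relative to $\mathbf z_0$, so $\mu\bigl(C\cap\text{Cap}(\mathbf Y,\omega+\epsilon)\bigr)$ is at least $\mu\bigl(\text{Cap}(\mathbf z_0,\theta)\cap\text{Cap}(\mathbf y_0,\omega)\bigr)=V$ up to a small multiplicative loss, using continuity of the two-cap intersection volume in the separation angle and in $\omega$ (the formulas for such intersection volumes, and their exponential/asymptotic behavior, are exactly what is derived in Appendix~\ref{A:Miscellaneous}). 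Combining the two steps — symmetrization to reduce to $C$, then concentration plus continuity for $C$ — yields $\mathbb{P}\bigl(\mu(A\cap\text{Cap}(\mathbf Y,\omega+\epsilon))>(1-\epsilon)V\bigr)\geq 1-\epsilon$ for $m>M(\epsilon)$ after adjusting the various $\epsilon'$s.

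The main obstacle I anticipate is making the symmetrization step rigorous on the sphere with the correct geometric configuration. The Riesz rearrangement inequality gives that caps extremize integrals of the form $\iint f(\mathbf z)g(\mathbf y)k(\angle(\mathbf z,\mathbf y))$; here I need the \emph{distributional} (level-set) version — that the cap minimizes the intersection stochastically in $\mathbf Y$, not merely in average — which requires applying rearrangement to each super-level set $\{\mathbf y : \mu(A\cap\text{Cap}(\mathbf y,\omega))\geq v\}$ and controlling how two-point polarization acts simultaneously on $A$ and on the sliding cap. A secondary technical point is that the constraint $\theta+\omega>\pi/2$ (equivalently the two caps genuinely overlap when their centers are orthogonal) must be used to ensure $V>0$ and that the $\epsilon$-enlargement $\omega+\epsilon$ leaves room to absorb the $O(\epsilon')$ perturbation of the center away from exact orthogonality; quantifying this via the explicit intersection-volume asymptotics from Appendix~\ref{A:Miscellaneous} is where the routine-but-delicate estimation lives. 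Once the rearrangement reduction is in place, the rest is a standard concentration-plus-continuity argument.
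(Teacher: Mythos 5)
Your toolkit (Riesz rearrangement on the sphere plus measure concentration) is the right one, and your second step---once $A$ is a cap, use Proposition~\ref{P:measurecon} and the triangle inequality to get the intersection down to $V$---is essentially how the paper handles the symmetric-decreasing comparison function. But the reduction you rest everything on is not available: the claimed stochastic-dominance statement, that for \emph{every} threshold $v$
\begin{equation*}
\mathbb{P}\bigl(\mu(A\cap\text{Cap}(\mathbf Y,\omega)) < v\bigr)\;\leq\;\mathbb{P}\bigl(\mu(C\cap\text{Cap}(\mathbf Y,\omega)) < v\bigr),
\end{equation*}
is false in general. As you yourself computed via Fubini, $\E\bigl[\mu(A\cap\text{Cap}(\mathbf Y,\omega))\bigr]=\mu(A)\,\mathbb{P}(\text{Cap}(\cdot,\omega))$ depends on $A$ only through $\mu(A)$, so the two random variables $\mu(A\cap\text{Cap}(\mathbf Y,\omega))$ and $\mu(C\cap\text{Cap}(\mathbf Y,\omega))$ always have the same mean; dominance at every threshold together with equal means would force them to be equal in distribution for every $A$ of the given measure, which is clearly not the case (e.g.\ for $A$ a union of two antipodal caps of half the size, the inequality goes your way near $v=0$ by isoperimetry but reverses at large $v$, since a single cap attains larger maximal intersections). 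So the cap is \emph{not} uniformly ``the worst set'' across all levels, and the clean two-stage plan ``replace $A$ by $C$, then treat $C$ explicitly'' cannot be pushed through a level-set version of the Baernstein--Taylor inequality.

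What the paper does instead keeps $A$ in the argument and uses the rearrangement inequality only in integrated form at the single level $(1-\epsilon)V$: with $\psi(\mathbf y)=\mu(A\cap\text{Cap}(\mathbf y,\omega+\epsilon))$, Lemma~\ref{thm:riesz} applied with the test function $1_C$ for the super-level set $C=\{\mathbf y:\psi(\mathbf y)>(1-\epsilon)V\}$ gives $\int_{C^*}\psi^*\,d\nu\leq\int_{C^*}\bar{\bar\psi}\,d\nu$, where $\bar{\bar\psi}$ is the intersection profile of the cap of the same effective angle (in the shell case an extra ``bathtub'' step \eqref{eq:construct} is needed to pass from $f_A^*$ to the cap). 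This one integrated inequality, combined with conservation of total mass $\int\psi^*\,d\nu=\int\bar{\bar\psi}\,d\nu$ (from \eqref{eq:totalmeas}) and the concentration bound \eqref{eq:con1}, yields the chain \eqref{eq:step1}--\eqref{eq:step5}, which bounds $\mathbb{P}(\psi^*(\mathbf Y)\leq(1-\epsilon)V)\leq\epsilon$ directly---precisely circumventing the false pointwise dominance. Also note a smaller point: in the cap-versus-cap step no continuity of intersection-volume formulas is needed; since $\angle(\mathbf z_0,\mathbf Y)\in[\pi/2-\epsilon,\pi/2+\epsilon]$ implies $\text{Cap}(\mathbf y_0,\omega)\subseteq\text{Cap}(\mathbf Y,\omega+\epsilon)$ by the triangle inequality, one gets $\bar{\bar\psi}(\mathbf Y)\geq V$ exactly, with no multiplicative loss; the $(1-\epsilon)$ slack is needed for the rearrangement bookkeeping, not for this comparison. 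Unless you can repair the symmetrization step along these integrated lines, the proposal as written has a genuine gap at its central lemma.
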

\medbreak

If $A$ itself is a cap, then the statement in Theorem~\ref{thm:strongisoperimetrysphere} is straightforward and follows from the fact that $\mathbf Y$ with high probability will be concentrated around the equator at angle $\pi/2$ from the pole of $A$ (Proposition~\ref{P:measurecon}). Therefore, as $m$ gets large for almost all $\mathbf Y$, the intersection of the two spherical caps will be given by $V$. See Fig.~\ref{F:twocap}. The statement, however, is stronger than this and holds for any arbitrary set $A$, analogous to the isoperimetric inequality in \eqref{eq:isoperimetry}. It states that no matter what the set $A$ is, if we take a random point on the sphere and draw a cap of angle slightly larger than $\omega$ centered at this point, for any $\omega>\pi/2-\theta$, then with high probability the intersection of the cap with the set $A$ would be at least as large as the intersection we would get if $A$ were a spherical cap. In this sense, Theorem~\ref{thm:strongisoperimetrysphere} can be regarded as an extension of the isoperimetric inequality in Proposition~\ref{prop:isoperimetry}, even though the latter can be stated purely geometrically and implies the weaker probabilistic statement in \eqref{eq:isoperimetry}, while our result is inherently probabilistic.

Theorem \ref{thm:strongisoperimetrysphere} is in fact a special case of a more general theorem that is true for subsets on a spherical shell. Let 
$$\mathbb{L}^{m} = \{\mathbf y\in \mathbb{R}^m : \; R_L \leq \|\mathbf y\| \leq R_U\}$$
be this shell, where $0\leq R_L\leq R_U$. A cap on this shell with pole  $\mathbf z_0$ and angle $\theta$ can be defined as a ball in terms of the angle:
$$ \angle (\mathbf y,\mathbf z)=\arccos\left(\frac{\mathbf y \cdot \mathbf z }{\|\mathbf y\|   \|\mathbf z\|}\right)$$ on the shell, i.e.,
\begin{align*}
\text{ShellCap}(\mathbf z_0,\theta)=\left\{ \mathbf z\in \mathbb{L}^{m}:  \angle(\mathbf z_0,\mathbf z)\leq \theta\right\}.
\end{align*}
Let $|A|$ denote the standard $m$-dimensional Euclidean measure of a subset $A\subseteq\mathbb{L}^m$. We will say that an arbitrary set $A\subseteq \mathbb{L}^{m}$ has effective angle $\theta>0$ if its measure is equal to that of a shell cap of angle $\theta$, i.e. $|A|=|\text{ShellCap}(\mathbf z_0, \theta)|$ for some $\mathbf z_0\in \mathbb{L}^{m}$. We will also say that a probability measure $\mathbb{P}$ for subsets of $\mathbb{L}^m$ is rotationally invariant if $\mathbb{P}(A) = \mathbb{P}(UA)$ for any orthogonal matrix $U$, where $UA$ denotes the image of the set $A$ under the linear transformation $U$. The following more general theorem holds in the shell setting.

\medbreak
\begin{theorem}\label{thm:strongisoperimetryshell}
Let $A\subseteq \mathbb{L}^{m}$ be any arbitrary subset of $\mathbb{L}^{m}$ with effective angle $\theta>0$, and let $V=|\text{ShellCap}(\mathbf z_0, \theta) \cap \text{ShellCap}(\mathbf y_0, \omega)|$ where  $\mathbf z_0,\mathbf y_0 \in \mathbb{L}^{m}$ with $\angle(\mathbf z_0,\mathbf y_0)=\pi/2$
and $\theta+\omega>\pi/2$. Then for any $\epsilon>0$, there exists an $M(\epsilon)$ such that for $m > M(\epsilon)$, 
\begin{equation*}
\mathbb{P}\left(|A\cap \text{ShellCap}(\mathbf Y,\omega+\epsilon)|> (1-\epsilon)V\right)\geq 1-\epsilon,
\end{equation*}
where $\mathbf Y$ is a random vector drawn from any rotationally invariant probability measure on $\mathbb{L}^{m}$.
\end{theorem}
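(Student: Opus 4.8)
The plan is to reduce the shell statement to the sphere statement (Theorem~\ref{thm:strongisoperimetrysphere}) by a disintegration/fibering argument, and to prove the sphere statement itself via Riesz rearrangement. First I would set up the reduction to the sphere. A rotationally invariant probability measure $\mathbb{P}$ on $\mathbb{L}^m$ disintegrates over the radial coordinate: writing $\mathbf Y = \rho\, \hat{\mathbf Y}$ with $\rho = \|\mathbf Y\| \in [R_L, R_U]$ and $\hat{\mathbf Y} \in \mathbb{S}^{m-1}_1$ (the unit sphere), rotational invariance forces $\hat{\mathbf Y}$ to be Haar-distributed on the unit sphere and independent of $\rho$. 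Similarly, the Euclidean measure on $\mathbb{L}^m$ factors as $dr\, r^{m-1} d\mu$. The key observation is that both a shell cap $\text{ShellCap}(\mathbf z_0,\theta)$ and an arbitrary set $A$ are unions of their slices at fixed radius $r$, and the slice of a shell cap at radius $r$ is exactly a spherical cap of angle $\theta$ on the radius-$r$ sphere. So $|A| = \int_{R_L}^{R_U} r^{m-1}\mu_r(A_r)\, dr$ where $A_r$ is the radial slice; the effective-angle hypothesis $|A| = |\text{ShellCap}(\mathbf z_0,\theta)|$ controls only the \emph{average} slice measure, not each slice, which is the first subtlety to handle.

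The resolution is to pass to normalized slice measures and use a Markov/averaging argument. Define, for each radius $r$, the normalized measure $\nu_r = \mu_r(A_r)/\mu_r(\mathbb{S}^{m-1}_r)$, so that the weighted average of $\nu_r$ against the density $r^{m-1}$ equals the corresponding ratio for the shell cap, call it $\bar\nu$. For radii $r$ where $\nu_r$ is not too small (say $\nu_r \geq \bar\nu/2$, a set of radii carrying most of the mass for large $m$ since the extreme slice ratios are bounded), the slice $A_r$ has effective angle bounded below by some $\theta' = \theta'(\epsilon)$, and I apply Theorem~\ref{thm:strongisoperimetrysphere} on the sphere of radius $r$ to conclude that $\mu_r(A_r \cap \text{Cap}(\hat{\mathbf Y}, \omega+\epsilon/2)) \geq (1-\epsilon/2)V_r$ with probability $\geq 1-\epsilon/2$ over $\hat{\mathbf Y}$, where $V_r$ is the corresponding two-cap intersection on the radius-$r$ sphere. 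Since $\hat{\mathbf Y}$ is independent of $\rho$, integrating over $r$ against $r^{m-1}$ and using that $\angle(\mathbf z_0,\mathbf y_0) = \pi/2$ is a purely angular condition (so $V = \int r^{m-1} V_r\, dr$ and $A\cap\text{ShellCap}(\mathbf Y,\omega+\epsilon) \supseteq \bigcup_r (A_r\cap\text{Cap}(\hat{\mathbf Y},\omega+\epsilon))$) recovers the shell statement, after a Fubini exchange and absorbing the slack into the single $\epsilon$. The main technical care here is uniformity: the $M(\epsilon)$ from Theorem~\ref{thm:strongisoperimetrysphere} must be taken uniformly over the (compact) range of admissible slice effective angles, which is fine since the bound depends on $\theta$ only through how far $\theta+\omega$ exceeds $\pi/2$.

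The substantive work is therefore proving Theorem~\ref{thm:strongisoperimetrysphere} on the sphere, and this is where I expect the main obstacle. The approach I would take is: by Proposition~\ref{P:measurecon}, with high probability $\mathbf Y$ lands in the equatorial band $\angle(\mathbf z_0,\mathbf Y) \in [\pi/2-\epsilon', \pi/2+\epsilon']$ (where $\mathbf z_0$ is a fixed reference pole, e.g.\ the pole of the cap $C$ equimeasurable with $A$), so it suffices to show that for \emph{every} $\mathbf y$ in this band, $\mu(A\cap\text{Cap}(\mathbf y,\omega+\epsilon)) \geq (1-\epsilon)V$. For such fixed $\mathbf y$, we have $\text{Cap}(\mathbf y,\omega+\epsilon) \supseteq \text{Cap}(\mathbf y',\omega)$ for a suitable $\mathbf y'$ with $\angle(\mathbf z_0,\mathbf y') = \pi/2$ exactly (the extra $\epsilon$ absorbs the band width $\epsilon'$), so it reduces to lower bounding $\mu(A\cap \text{Cap}(\mathbf y',\omega))$ over all sets $A$ of fixed measure and all $\mathbf y'$ at angle exactly $\pi/2$ from $\mathbf z_0$. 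This is a two-set intersection functional and is the natural home for the Riesz rearrangement inequality on the sphere: among all $A$ of given measure, $\int_{\mathbb{S}^{m-1}} \mathbf 1_A(\mathbf z)\,\mathbf 1_{\text{Cap}(\mathbf y',\omega)}(\mathbf z)\, d\mu(\mathbf z)$ — viewed with $\mathbf y'$ also ranging, or after a symmetrization that simultaneously rearranges $A$ toward a cap about $\mathbf z_0$ — is minimized when $A$ is the cap $C = \text{Cap}(\mathbf z_0,\theta)$; the minimizing configuration then places $\mathbf y'$ on the equator and yields exactly $V$. Making this rigorous requires the spherical Riesz rearrangement machinery of \cite{baernstein} and care that two-point symmetrization decreases (rather than increases) the relevant overlap when one of the sets is held fixed as a cap — this sign/direction issue, together with verifying that the extremal configuration is the equatorial one, is the crux and where I would spend most of the effort; the details are deferred to Appendix~\ref{A:proofisop}.
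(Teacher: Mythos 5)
You have two genuine gaps, and the more serious one is in your proof of the sphere statement itself. Your reduction to a pointwise claim --- ``it suffices to show that for \emph{every} $\mathbf y$ in the equatorial band, $\mu(A\cap \text{Cap}(\mathbf y,\omega+\epsilon))\geq (1-\epsilon)V$'' --- cannot be established by the rearrangement machinery you invoke, because the intermediate extremal claim you state is false: for a \emph{fixed} $\mathbf y'$, the minimum over all sets $A$ of prescribed measure of $\mu(A\cap\text{Cap}(\mathbf y',\omega))$ is zero (simply place $A$ inside the complement of the cap), not $V$. The Riesz/Baernstein--Taylor inequality bounds overlap functionals from \emph{above} after symmetrization, and only in integrated form (against test functions, equivalently on super-level sets); it never yields a pointwise-in-$\mathbf y'$ \emph{lower} bound against a fixed second set, and no choice of symmetrization direction fixes this, since the pointwise extremal value really is zero. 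The statement is inherently probabilistic, and the paper's proof keeps the averaging over $\mathbf Y$ in play throughout: with $\psi(\mathbf y)$ the intersection volume, it applies the Baernstein--Taylor inequality tested on the super-level set $C=\{\mathbf y:\psi(\mathbf y)>(1-\epsilon)V\}$, compares the rearranged profile with the cap profile $\bar{\bar\psi}$ via a bathtub-type extremality step, uses the total-mass identity $\int\psi^*\,d\nu=\int\bar{\bar\psi}\,d\nu$, and then runs a level-crossing argument at the angle $\beta$ where $\psi^*$ drops below $(1-\epsilon)V$, combined with measure concentration, to conclude $\mathbb{P}\left(\psi(\mathbf Y)\leq(1-\epsilon)V\right)\leq\epsilon$. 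That chain of integral inequalities is precisely what replaces your pointwise step, and nothing in your outline supplies an equivalent.

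The second gap is in your shell-to-sphere reduction. The claim that the radii with $\nu_r\geq\bar\nu/2$ ``carry most of the mass'' is false: $\bar\nu$ is the normalized measure of a $\theta$-cap, hence exponentially small in $m$, and Markov's inequality only guarantees that the good radii carry weighted mass of order $\bar\nu$. A concrete counterexample is $A=$ (entire angular sphere) $\times$ (an exponentially thin radial band near $R_U$) with $|A|$ matched to the shell cap: almost every radius has an empty slice, yet the theorem must still hold, with the contribution coming from a tiny set of radii whose slices have much larger effective angle. Repairing slice-by-slice aggregation would require a Jensen/convexity argument relating the (exponentially varying) slice measure to the (exponentially varying) slice-level cap-intersection --- an additional substantive step you do not provide, together with uniform finite-$m$ control of the resulting $\epsilon$'s. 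The paper sidesteps slicing entirely and in fact runs your reduction in the opposite direction: it encodes the radial structure into the single spherical function $f_A(\mathbf z)=\int_{R_L}^{R_U}\left(\frac{r}{R}\right)^{m-1}1_A\left(\frac{r}{R}\mathbf z\right)dr$, rearranges that function, compares it with the extremal profile $f_{A'}$ of the shell cap, proves the shell case directly, and then obtains the sphere case as a corollary.
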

\medbreak
	
We prove Theorems \ref{thm:strongisoperimetrysphere} and \ref{thm:strongisoperimetryshell} in Appendix \ref{A:proofisop}. {Note that $M(\epsilon)$ in these two results depends only on $\epsilon$---in particular it does not depend on the radius parameters for $\mathbb{L}^{m}$ and $\mathbb{S}^{m-1}$, respectively, which means that these two results also apply if the radius parameters depend on the dimension $m$. In the following section, we will be mainly interested in the case when the radius parameters scale in the square-root of the dimension. } 

\begin{figure}[t!]
\centering
\includegraphics[width=0.40\textwidth]{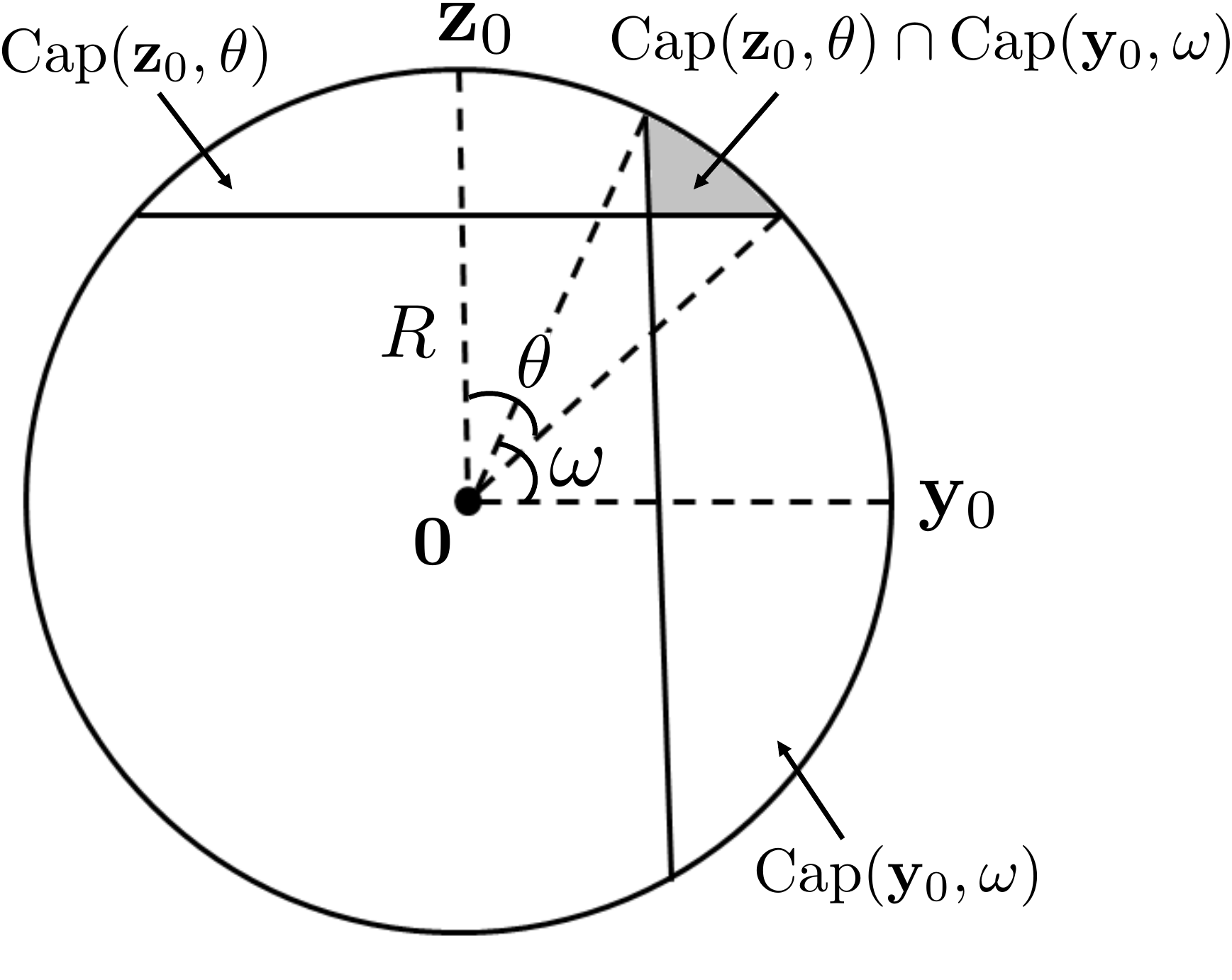}
\caption{Intersection of two spherical caps.}
\label{F:twocap}
\end{figure}

\section{Information Tension in \\ A Symmetric Markov Chain}\label{sec:inequality}

In this section, we prove an inequality between information measures in a certain type of Markov chain, which can be of interest in its own right. The proof of this inequality builds on Theorem~\ref{thm:strongisoperimetryshell}  from the previous section.  As we will see in Section \ref{S:proof},  the main theorems in this paper, i.e. Theorems~\ref{thm:maintheorem} and \ref{thm:maintheorem2}, are almost immediate given this result. We now state this result in the following theorem.

\begin{theorem}\label{L:upperboundlemma}
Consider a Markov chain $I_n-Z^n-X^n-Y^n$ where $X^n$, $Y^n$ and $Z^n$ are $n$-length random vectors and $I_n=f_n(Z^n)$ is a deterministic mapping of $Z^n$ to a set of integers. Assume moreover that $Z^n$ and $Y^n$ are i.i.d. white Gaussian vectors given $X^n$, i.e. $Z^n, Y^n \sim \mathcal{N}(X^n, N\, I_{n\times n})$ where $I_{n\times n}$ denotes the identity matrix, $E[\|X^n\|^2]= nP$, and $H(I_n|X^n)=-n\log \sin \theta_n$ for some $\theta_n\in[0,\pi/2]$.  Then the following inequality holds for any $n$,
\begin{align}
&H(I_n|Y^n)\nonumber \\
  &\leq n\,\cdot  \min_{\omega\in \left(\frac{\pi}{2}-\theta_n, \frac{\pi}{2}\right]}   \frac{1}{2}\log \left(\frac{4\text{sin} ^2\frac{\omega}{2}(P+N-N\text{sin}^2 \frac{\omega}{2})}{(P+N)(\text{sin} ^2 \theta_n - \cos^2 \omega)} \right).
\label{eq:entropyineq}
\end{align}
\end{theorem}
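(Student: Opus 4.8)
The plan is to convert the claimed inequality into an estimate on the Euclidean volumes of high-dimensional sets and then derive that estimate from the extended isoperimetric inequality, Theorem~\ref{thm:strongisoperimetryshell}. I would begin with two reductions. First, since $I_n=f_n(Z^n)$ there is the exact identity $H(I_n|Y^n)=h(Z^n|Y^n)-h(Z^n|I_n,Y^n)$, and, using $I_n\perp Y^n\mid X^n$, also $H(I_n|Y^n)=-n\log\sin\theta_n+I(I_n;X^n|Y^n)$; comparing with the target, the theorem is equivalent to $I(I_n;X^n|Y^n)\le n\,h_{\theta_n}(\omega)$ for every admissible $\omega$, where $h_{\theta_n}$ is the function of Theorem~\ref{thm:maintheorem2}. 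The endpoint $\omega=\tfrac{\pi}{2}$ needs no geometry at all: $X^n-Z^n-I_n$ is still Markov given $Y^n$, so $I(I_n;X^n|Y^n)\le I(Z^n;X^n|Y^n)=h(Z^n|Y^n)-\tfrac n2\log(2\pi eN)\le \tfrac n2\log\tfrac{2P+N}{P+N}=n\,h_{\theta_n}(\tfrac\pi2)$, the last step being the Gaussian bound on the conditional variance of $Z^n$ given $Y^n$. Second --- and this is what will make \eqref{eq:entropyineq} hold \emph{exactly for every fixed $n$}, not merely asymptotically --- I would run the whole argument on $m$ i.i.d.\ copies $X^{nm},Z^{nm},Y^{nm}$ with $I_n^m=(I_n^{(1)},\dots,I_n^{(m)})$: all per-copy quantities ($H(I_n|Y^n)$, $H(I_n|X^n)=-n\log\sin\theta_n$, $\theta_n$, $P$, $N$) are unchanged, the ambient dimension is $nm$, and geometric facts that hold up to relative error $o(1)$ as $nm\to\infty$ become exact after dividing by $m$ and letting $m\to\infty$.

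Next I would invoke the typicality lemmas of Appendix~\ref{A:Proof_Typicality} to fix the geometry in the lifted space. For a jointly typical output $y$, the conditional law of $Z^{nm}$ concentrates on a thin neighbourhood $T(y)$ of the set $\Sigma(y)$ of vectors of norm $\approx\sqrt{nm(P+N)}$ making angle $\approx\arccos\tfrac{P}{P+N}$ with $y$, so that $h(Z^{nm}|Y^{nm}=y)\le\log|T(y)|+o(nm)$; crucially this uses only $\E\|X^{nm}\|^2=nmP$ and $X^{nm}\perp W_2$, hence holds for every code. For a jointly typical input $x$, $Z^{nm}$ concentrates on the sphere of radius $\sqrt{nmN}$ about $x$, and the hypothesis $H(I_n|X^n)=-n\log\sin\theta_n$, run through the same lemma, says that a conditionally typical quantization cell, intersected with that sphere, has \emph{effective angle} $\theta_n$. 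Finally $h(Z^{nm}|I_n^m=i,Y^{nm}=y)\ge\log|\mathcal B_i\cap T(y)|+o(nm)$ for conditionally typical $(i,y)$. Putting these into $H(I_n|Y^n)=\tfrac1m\big(h(Z^{nm}|Y^{nm})-h(Z^{nm}|I_n^m,Y^{nm})\big)$, the whole theorem boils down to one geometric estimate: for a typical cell $\mathcal B_i$ of effective angle $\theta_n$, lower bound the volume $|\mathcal B_i\cap T(y)|$ that it is forced to occupy inside the $Z$-typical region of a typical output.

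This is the place where Theorem~\ref{thm:strongisoperimetryshell} does the work. After the change of coordinates adapted to $y$ --- passing to the $(nm-1)$-dimensional space orthogonal to $y$ and to a spherical shell there, with respect to which the relevant randomness is rotationally invariant and the $Z$-typical region becomes (essentially) a shell cap --- the cell $\mathcal B_i$ becomes a set of effective angle $\theta_n$ on that shell whose pole lies at angle $\approx\pi/2$ from the query direction, which is exactly the configuration of Theorem~\ref{thm:strongisoperimetryshell} (or its sphere version Theorem~\ref{thm:strongisoperimetrysphere}). The extended isoperimetric inequality then guarantees, for every $\omega\in(\tfrac\pi2-\theta_n,\tfrac\pi2]$ and with probability $\ge1-\epsilon$ over the query, that $|\mathcal B_i\cap T(y)|\ge(1-\epsilon)\,V_\omega$, where $V_\omega$ is the measure of the intersection of two shell caps of half-angles $\theta_n$ and $\omega$ whose poles are orthogonal. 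Combined with $h(Z^{nm}|Y^{nm}=y)\le\log|T(y)|+o(nm)$ this yields $H(I_n|Y^n)\le \tfrac1m\log\tfrac{|T(y)|}{(1-\epsilon)V_\omega}+o(n)$ for each admissible $\omega$. The last step is computational: the explicit formulas for $|T(y)|$ and $V_\omega$, assembled in Appendix~\ref{A:Miscellaneous} from the areas and volumes of balls, caps, shell caps and their pairwise intersections, collapse $\tfrac1{nm}\log\tfrac{|T(y)|}{V_\omega}$ to exactly $h_{\theta_n}(\omega)-\log\sin\theta_n=\tfrac12\log\tfrac{4\sin^2\frac\omega2(P+N-N\sin^2\frac\omega2)}{(P+N)(\sin^2\theta_n-\cos^2\omega)}$; minimizing over $\omega$ and letting $\epsilon\to0$, $m\to\infty$ finishes the proof.

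The step I expect to be the main obstacle is the coordinate change in the third paragraph: reconciling the fact that the cells $\mathcal B_i$ live in $Z$-space, where the natural rotational invariance is carried by the additive noise rather than by $Z^{nm}$ or $Y^{nm}$ themselves, so that the $Z$-typical region of $y$ is genuinely matched to a shell cap and the query point to a rotationally invariant point on the same shell --- together with the high-dimensional volume bookkeeping (the three radii $\sqrt{nmP},\sqrt{nmN},\sqrt{nm(P+N)}$, and the transverse ``thickness'' directions of the typical sets) that makes $|T(y)|/V_\omega$ come out to the stated closed form. A secondary but unavoidable burden is the bureaucracy of chaining the several ``for a typical $\cdot$ / with high probability'' statements --- over $x$, over $y$, over the cell index $i$, and over the query --- by union bounds and Markov's inequality, and of checking that atypical cells contribute only $o(n)$ to $h(Z^{nm}|I_n^m,Y^{nm})$, which uses only that there are at most exponentially many cells so their log-volumes are $O(nm)$ while their probability is $o(1)$ in the lifted dimension.
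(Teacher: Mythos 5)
Your overall architecture is the paper's: lift to i.i.d.\ copies so the fixed-$n$ statement becomes an asymptotic geometric one, use the typicality lemmas to place $\mathbf Z$ given $\mathbf x$ on a thin shell and to show that a conditionally typical cell has effective angle $\theta_n$ there, invoke Theorem~\ref{thm:strongisoperimetryshell} to force each such cell to intersect a cap/ball around the rotationally invariant query $\mathbf Y$ in volume at least $V_\omega$, and finish with the closed-form volume computations of Appendix~\ref{A:Miscellaneous}. Your endpoint check at $\omega=\pi/2$ is also fine. But the step that converts the geometry back into the entropy bound is not valid as you state it. You write $H(I_n|Y^n)$ via the mixed chain rule as $h(Z^{nm}|Y^{nm})-h(Z^{nm}|I_n^m,Y^{nm})$ and then claim $h(Z^{nm}|I_n^m=i,Y^{nm}=y)\ge\log|\mathcal B_i\cap T(y)|+o(nm)$. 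Log-volume of (a subset of) the support \emph{upper} bounds differential entropy; it never lower bounds it: the conditional law of $Z^{nm}$ given $(I_n^m=i,Y^{nm}=y)$ is $p(\mathbf z|\mathbf y)\mathbf 1_{\{f(\mathbf z)=\mathbf i\}}/\Pr(\mathbf i|\mathbf y)$, and nothing in the isoperimetric statement prevents this density from concentrating on a tiny sub-region of $\mathcal B_i\cap T(y)$, in which case $h(Z^{nm}|\mathbf i,\mathbf y)$ is far below $\log|\mathcal B_i\cap T(y)|$. The only generic lower bound, $h\ge-\log\|p(\cdot|\mathbf i,\mathbf y)\|_\infty\ge\log\Pr(\mathbf i|\mathbf y)+\tfrac{nm}{2}\log 2\pi N$, just reproduces $H(\mathbf I|\mathbf Y)$ and is circular. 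The extended isoperimetric inequality gives you a \emph{volume} lower bound for the intersection, not a statement about how the conditional probability mass distributes inside the cell, so your decomposition does not close.

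The paper avoids differential entropies altogether at this point: conditioned on the high-probability indicator event, every plausible index $\mathbf i$ has its cell $f^{-1}(\mathbf i)$ occupying volume at least $2^{nB[\frac12\log(2\pi eN(\sin^2\theta_n-\cos^2\omega))-\delta]}$ inside $\mathrm{Ball}(\mathbf 0,\sqrt{nB(P+N+\delta)})\cap\mathrm{Ball}(\mathbf y,\sqrt{nBN(4\sin^2\frac\omega2+\delta)})$, whose total volume is bounded by the two-ball intersection lemma (Lemma~\ref{L:twoball}); since the cells are disjoint, the number of plausible indices is at most the volume ratio, and $H(\mathbf I|\mathbf Y,F=1)$ is at most the log of that ratio --- a purely cardinality-based bound on a discrete entropy. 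To rescue your route you would instead have to lower bound the \emph{probability} $\Pr(\mathbf I=\mathbf i\mid\mathbf Y=\mathbf y)$ by roughly $V_\omega\,2^{-h(\mathbf Z|\mathbf y)}$, which requires establishing near-uniformity of $p(\mathbf z|\mathbf y)$ over the conditionally typical set in the lifted space (machinery your sketch does not supply), and even then the cleanest way to conclude is again a counting argument over indices. So the missing idea is precisely the packing step; everything before it in your proposal matches the paper.
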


Note that $H(I_n|Y^n)$ is trivially lower bounded by $H(I_n|X^n)$ for any Markov chain $I_n-Z^n-X^n-Y^n$. The above theorem says that if $I_n-Z^n-X^n-Y^n$ satisfies the conditions of the theorem, then $H(I_n|Y^n)$ can also be upper bounded in terms of $H(I_n|X^n)$. In particular, 
it  provides an upper bound on $H(I_n|Y^n)$ in terms of $\theta_n=\arcsin 2^{- \frac{1}{n}H(I_n|X^n)}$. It can be easily verified that this upper bound on $H(I_n|Y^n)$ is decreasing with increasing $\theta_n$, or equivalently decreasing with decreasing $H(I_n|X^n)$, and implies that $H(I_n|Y^n)\rightarrow 0$ as $H(I_n|X^n)\rightarrow 0$.

We next turn to proving Theorem~\ref{L:upperboundlemma}. The reader who is interested in seeing how this theorem leads to Theorems~\ref{thm:maintheorem} and \ref{thm:maintheorem2}, without seeing its own proof, can jump to Section~\ref{S:proof}.  
In order to prove Theorem~\ref{L:upperboundlemma}, we will first establish some properties that are satisfied with high probability by long i.i.d. sequences generated from the source distribution $(I_n, Z^n, X^n, Y^n)$ satisfying the assumptions of the theorem. We now state and discuss these properties in Section~\ref{sec:typicality} and then use them to prove Theorem~\ref{L:upperboundlemma} in Section~\ref{SS:Proofoutline}. 

\subsection{Typicality Lemmas}\label{sec:typicality}
Assume $(I_n, Z^n, X^n,Y^n)$ satisfy the assumptions of Theorem~\ref{L:upperboundlemma}.   Consider the $B$-length i.i.d. sequence
\begin{align}
\{   (I_n(b), Z^n(b), X^n(b),Y^n(b) )    \}_{b=1}^{B},
\end{align}
where for any $b\in [1:B]$, $(I_n(b), Z^n(b), X^n(b),Y^n(b) )$ has the same distribution as $(I_n, Z^n, X^n,Y^n)$. For notational convenience,  in the sequel we write the $B$-length sequence $[X^n(1),X^n(2),\ldots,X^n(B)]$ as $\mathbf X$ and similarly define $\mathbf Y, \mathbf Z$ and $\mathbf I$; note that we have
$\mathbf I=[f_n(Z^n(1)), f_n(Z^n(2)),\ldots,f_n(Z^n(B)) ]=: f(\mathbf Z)$.  Also let $\text{Shell}\left( \mathbf c, r_1,r_2 \right) $  denote the spherical shell 
\begin{align*}
\text{Shell}\left( \mathbf c, r_1,r_2 \right):= \left\{ \mathbf a\in \mathbb R^{nB}: r_1\leq \|\mathbf a - \mathbf c \|\leq   r_2 \right\},
\end{align*}
and  let $\mbox{Ball}(\mathbf{c},r)$ denote the Euclidean ball 
\begin{align*}
\text{Ball}\left( \mathbf c,  r \right):= \left\{ \mathbf a\in \mathbb R^{nB}: \|\mathbf a - \mathbf c \|\leq   r \right\}.
\end{align*}
We next state several properties that $\mathbf X, \mathbf Y, \mathbf Z, \mathbf I$ satisfy with high probability when $B$ is large. The proofs of these properties are given in Appendix \ref{A:Proof_Typicality}.
\medbreak 
\begin{lemma}\label{L:E1}
For any $\delta>0$ and $B$ sufficiently large, we have
\begin{align*}
&\mbox{Pr}(E_1)\geq 1-\delta\\
 \text{ and \ \ }& \mbox{Pr}(E_2)\geq 1-\delta,
\end{align*}
where $E_1$ and $E_2$ are defined to be the following two events respectively: 
\begin{align}
& \left\{ \mathbf Z \in \text{Shell}\left( \mathbf 0, \sqrt{nB(P+N-\delta)},\sqrt{nB(P+N+\delta)}  \right)  \right\}, \label{E:defE1}
\end{align}
and
\begin{align}
 & \left\{ \mathbf Y \in \text{Shell}\left( \mathbf 0, \sqrt{nB(P+N-\delta)},\sqrt{nB(P+N+\delta)}  \right) \right\}. \label{E:defE2}
\end{align}
\end{lemma}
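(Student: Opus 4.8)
The plan is to reduce both statements to a weak law of large numbers across the $B$ independent blocks, since $E_1$ and $E_2$ merely assert that $\|\mathbf Z\|$ and $\|\mathbf Y\|$ concentrate around $\sqrt{nB(P+N)}$. First I would observe that $\|\mathbf Z\|^2=\sum_{b=1}^{B}\|Z^n(b)\|^2$ is a sum of $B$ i.i.d.\ random variables, each distributed as $\|Z^n\|^2$, and that this common summand has finite mean: writing $Z^n=X^n+W_1^n$ with the Gaussian noise $W_1^n$ independent of $X^n$, the cross term has zero expectation, so $\E[\|Z^n\|^2]=\E[\|X^n\|^2]+\E[\|W_1^n\|^2]=nP+nN=n(P+N)$. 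By the weak law of large numbers (only a finite first moment of the summand is needed), $\frac{1}{nB}\|\mathbf Z\|^2\to P+N$ in probability as $B\to\infty$ with $n$ held fixed.

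Next I would translate this into the shell event: assuming without loss of generality that $0<\delta<P+N$, the event $E_1$ defined in \eqref{E:defE1} is precisely $\{\,nB(P+N-\delta)\le\|\mathbf Z\|^2\le nB(P+N+\delta)\,\}=\{\,|\tfrac{1}{nB}\|\mathbf Z\|^2-(P+N)|\le\delta\,\}$, so the convergence in probability gives $\mbox{Pr}(E_1)\to 1$, and hence $\mbox{Pr}(E_1)\ge 1-\delta$ for all $B$ sufficiently large. The bound on $\mbox{Pr}(E_2)$ in \eqref{E:defE2} follows from the identical argument with $Y^n=X^n+W_2^n$ in place of $Z^n=X^n+W_1^n$; the joint law of $(Y^n,Z^n)$ plays no role, only that each of $\|Z^n\|^2$ and $\|Y^n\|^2$ has finite mean $n(P+N)$ with i.i.d.\ copies across the $B$ blocks.

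I do not expect a genuine obstacle here; this is the most elementary of the typicality lemmas. The one point to be careful about is that $X^n$ is not assumed Gaussian---only $\E[\|X^n\|^2]=nP$ is given---so one cannot invoke Gaussian measure concentration for $\mathbf Z$ directly; it is the block-i.i.d.\ structure together with this single second-moment hypothesis that makes the plain weak law the appropriate tool, and accordingly the lemma claims only ``$B$ sufficiently large'' rather than an exponential rate. An exponential-in-$B$ strengthening would instead exploit the almost-sure power constraint $\|X^n\|^2\le nP$ available in the relay application together with standard $\chi^2$ tail bounds for $\|W_1^n\|^2$, but that is not needed for the statement as written.
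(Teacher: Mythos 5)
Your proposal is correct and follows essentially the same route as the paper: both decompose $\|\mathbf Z\|^2=\sum_{b=1}^{B}\|Z^n(b)\|^2$, invoke the weak law of large numbers across the i.i.d.\ blocks using $\E[\|Z^n\|^2]=n(P+N)$, and then note that $\mathbf Y$ is handled identically. Your added remarks (finite first moment suffices, no Gaussianity of $X^n$ needed) are accurate but do not change the argument.
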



The proof of this lemma is a simple application of the law of large numbers and is included in Appendix \ref{SS:typicalityproof1}. The lemma simply states that when $B$ is large, $\mathbf Y$ and $\mathbf Z$ will concentrate in a thin $nB$-dimensional shell of radius $\sqrt{nB(P+N)}$.
\bigbreak
\begin{lemma}\label{L:prob_s(x,i)}
Given any $\epsilon>0$ and a pair of $(\mathbf x, \mathbf i)$, let $S_{\epsilon}(Z^n| \mathbf{x},\mathbf{i}  )$ be a set of $\mathbf z$'s defined as\footnote{Note that under this definition of $S_\epsilon(Z^n|\mathbf x, \mathbf i) $, if a pair $(\mathbf x, \mathbf i)$ doesn't satisfy
$2^{ nB(  \log \text{sin} \theta_n-\epsilon)}\leq p( \mathbf i|\mathbf x) \leq 2^{ nB( \log \text{sin} \theta_n+\epsilon)},$
then the set $S_\epsilon(Z^n|\mathbf x, \mathbf i) $ is empty because no $\mathbf z$ can satisfy the condition in \dref{E:thirdconditionS}.}
\begin{align}
  &   S_{\epsilon}(Z^n| \mathbf{x},\mathbf{i} )  :=\Big \{ \mathbf{z} \in f^{-1}(\mathbf i):\nonumber \\
    & ~~~~~~~  { \|\mathbf x-\mathbf z\| \in [\sqrt{nB(N-\epsilon)} ,  \sqrt{nB(N+\epsilon)}  ]}\\
 & ~~~~~~~\mathbf z \in \text{Ball}\left( \mathbf 0, \sqrt{nB(P+N+\epsilon)}  \right)\\
 &~~~~~~~  2^{ nB(  \log \text{sin} \theta_n-\epsilon)}\leq p( f(\mathbf{z})|\mathbf{x})\leq 2^{ nB( \log \text{sin} \theta_n+\epsilon)}
 \Big \}\label{E:thirdconditionS}
\end{align}
where $\theta_n=\arcsin 2^{- \frac{1}{n}H(I_n|X^n)}$ as in Theorem~\ref{L:upperboundlemma}. Then for $B$ sufficiently large, there exists a set $S_{\epsilon}(X^n,I_n)$ of $(\mathbf x, \mathbf i)$ pairs, such that
\begin{align}
\mbox{Pr}((\mathbf X, \mathbf I)\in S_{\epsilon}(X^n,I_n))\geq 1- \sqrt{\epsilon},\label{E:part1}
\end{align}
and for any $(\mathbf x, \mathbf i)\in S_{\epsilon}(X^n,I_n)$,
\begin{align}\mbox{Pr}(\mathbf Z \in S_{\epsilon}(Z^n| \mathbf{x},\mathbf{i})|\mathbf{x})\geq 2^{ nB(  \log \text{sin} \theta_n-2\epsilon)}.
\label{E:part2}
\end{align}
\end{lemma}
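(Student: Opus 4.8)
The set $S_\epsilon(Z^n|\mathbf x,\mathbf i)$ imposes three conditions on $\mathbf z$: that $\mathbf z$ lies in a thin shell of radius $\sqrt{nBN}$ around $\mathbf x$, that $\mathbf z$ lies in a ball of radius $\approx\sqrt{nB(P+N)}$ around the origin, and that $p(f(\mathbf z)|\mathbf x)$ is exponentially close to $2^{nB\log\sin\theta_n}$. The strategy is to argue that, when $B$ is large, each of these conditions holds with high probability for $\mathbf Z$ drawn from $p(\mathbf z|\mathbf x)$ — for \emph{most} $\mathbf x$ — and then to lower-bound the intersection probability by $p(\mathbf i|\mathbf x)$ times a factor that is at least $1/2$, say. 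Since $p(\mathbf i|\mathbf x)\geq 2^{nB(\log\sin\theta_n-\epsilon)}$ on the good set of $(\mathbf x,\mathbf i)$ pairs, multiplying by a constant still leaves $2^{nB(\log\sin\theta_n-2\epsilon)}$ for $B$ large.

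\textbf{Step 1 (define the good set).} I would define $S_\epsilon(X^n,I_n)$ as the set of $(\mathbf x,\mathbf i)$ pairs such that simultaneously: (a) $\|\mathbf x\|^2$ concentrates around $nBP$; (b) $p(\mathbf i|\mathbf x)$ is in the window $[2^{nB(\log\sin\theta_n-\epsilon)},2^{nB(\log\sin\theta_n+\epsilon)}]$; and (c) conditioned on $(\mathbf x,\mathbf i)$, the conditional probability that $\mathbf Z$ falls in the ``shell $\cap$ ball'' region is at least $1-\sqrt\epsilon$, say. Condition (a) holds with probability $\to 1$ by the weak law of large numbers applied to $\|X^n(b)\|^2$ (using $E[\|X^n\|^2]=nP$). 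Condition (b): since $-\frac1{n B}\log p(\mathbf I|\mathbf X)=-\frac1{nB}\sum_b\log p(I_n(b)|X^n(b))\to \frac1n H(I_n|X^n)=-\log\sin\theta_n$ by the law of large numbers, the pair $(\mathbf X,\mathbf I)$ satisfies (b) with probability $\to 1$. For (c), I would invoke Lemma~\ref{L:E1} together with Gaussian concentration: given $\mathbf x$ with $\|\mathbf x\|^2\approx nBP$, the vector $\mathbf Z=\mathbf x+W$ with $W\sim\mathcal N(0,NI_{nB})$ satisfies $\|\mathbf Z-\mathbf x\|^2=\|W\|^2\approx nBN$ and, since $\langle \mathbf x,W\rangle$ concentrates at $0$, $\|\mathbf Z\|^2\approx nB(P+N)$; hence $\Pr(\mathbf Z\in\text{Shell}(\mathbf x,\ldots)\cap\text{Ball}(\mathbf 0,\ldots)\mid\mathbf x)\to 1$. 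By Markov's inequality in reverse (averaging over $\mathbf x$), the set of $\mathbf x$ for which this conditional probability exceeds $1-\sqrt\epsilon$ has $\mathbf X$-probability at least $1-\sqrt\epsilon$ (up to constants in $\epsilon$; one then tunes the $\sqrt\epsilon$ bookkeeping as in the statement). Intersecting the three good events gives \eqref{E:part1}.

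\textbf{Step 2 (lower-bound the intersection probability).} Fix $(\mathbf x,\mathbf i)\in S_\epsilon(X^n,I_n)$. I decompose
\[
\Pr(\mathbf Z\in S_\epsilon(Z^n|\mathbf x,\mathbf i)\mid\mathbf x)
=\Pr\big(\{\mathbf Z\in f^{-1}(\mathbf i)\}\cap\{\text{shell}\}\cap\{\text{ball}\}\cap\{\text{prob.\ window}\}\mid\mathbf x\big).
\]
On the good set, the probability-window condition \eqref{E:thirdconditionS} is automatically satisfied whenever $f(\mathbf z)=\mathbf i$, because it only constrains $p(\mathbf i|\mathbf x)$ which is already in the window; so that condition drops out. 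Thus the quantity equals $\Pr(\{f(\mathbf Z)=\mathbf i\}\cap\{\text{shell}\}\cap\{\text{ball}\}\mid\mathbf x)$. I bound this below by $\Pr(f(\mathbf Z)=\mathbf i\mid\mathbf x)-\Pr(\text{shell}^c\cup\text{ball}^c\mid\mathbf x)\geq p(\mathbf i|\mathbf x)-\sqrt\epsilon$ using condition (c). Since $p(\mathbf i|\mathbf x)\geq 2^{nB(\log\sin\theta_n-\epsilon)}$ and this dominates $\sqrt\epsilon$ for $B$ large (the exponent $nB\log\sin\theta_n$ can be as negative as we like, but $\sqrt\epsilon$ is a fixed constant, so actually one must be slightly more careful here — see below), one concludes $\Pr(\mathbf Z\in S_\epsilon(Z^n|\mathbf x,\mathbf i)\mid\mathbf x)\geq \frac12 p(\mathbf i|\mathbf x)\geq 2^{nB(\log\sin\theta_n-2\epsilon)}$ for $B$ large enough, giving \eqref{E:part2}.

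\textbf{The main obstacle} is exactly the last comparison: $p(\mathbf i|\mathbf x)$ is exponentially small in $B$, so one cannot simply say it ``dominates'' the fixed constant $\sqrt\epsilon$. The fix is to make the conditional-concentration statement (c) itself \emph{exponentially} strong — i.e. to show $\Pr(\text{shell}^c\cup\text{ball}^c\mid\mathbf x)\leq 2^{-cnB}$ for some $c>0$ (available from Gaussian large-deviations / Chernoff bounds on $\|W\|^2$ and $\langle\mathbf x,W\rangle$, which decay exponentially in $nB$) — and to choose the constants so that this exponential rate is smaller than $|\log\sin\theta_n|$ plus the $\epsilon$-slack, so that $p(\mathbf i|\mathbf x)-2^{-cnB}\geq 2^{nB(\log\sin\theta_n-2\epsilon)}$. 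Concretely, one partitions the good-$\mathbf x$ set by whether $\theta_n$ is bounded away from $\pi/2$ or not: if $-\log\sin\theta_n$ is itself $o(1)$ in the relevant regime the exponent $nB\log\sin\theta_n$ is only mildly negative and the crude bound suffices, whereas if $-\log\sin\theta_n$ is bounded below one needs the exponential concentration with a sufficiently good rate. Getting these two exponential rates to cooperate, uniformly over the good $\mathbf x$, is the delicate part; everything else is a routine law-of-large-numbers bookkeeping exercise.
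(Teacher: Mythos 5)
There is a genuine gap in Step 2, and you have put your finger on it yourself without resolving it. Your lower bound $\Pr(\{f(\mathbf Z)=\mathbf i\}\cap\{\text{shell}\}\cap\{\text{ball}\}\mid\mathbf x)\geq p(\mathbf i|\mathbf x)-\Pr(\text{shell}^c\cup\text{ball}^c\mid\mathbf x)$ pits an exponentially small target $p(\mathbf i|\mathbf x)\approx 2^{nB\log\sin\theta_n}$ against an atypicality probability whose exponential rate is controlled by the fixed shell width $\epsilon$ (a Chernoff rate of order $\epsilon^2$ per dimension). These two rates are not yours to tune: $-\log\sin\theta_n=\frac1n H(I_n|X^n)$ can be as large as $C_0$, a fixed problem parameter unrelated to $\epsilon$, while the lemma quantifies over an arbitrary (small) $\epsilon$ fixed before $B\to\infty$. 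Whenever $-\log\sin\theta_n$ exceeds the concentration rate at scale $\epsilon$ (e.g.\ $C_0$ moderate, $\epsilon$ small), the subtrahend dominates $p(\mathbf i|\mathbf x)$ and the bound is vacuous; your proposed repair by ``making the two exponential rates cooperate'' therefore cannot work in general, and the case split on $\theta_n$ does not rescue the hard case.

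The fix is to never subtract: since $S_\epsilon(Z^n|\mathbf x,\mathbf i)\subseteq f^{-1}(\mathbf i)$, you have the exact factorization
\begin{align*}
\Pr\left(\mathbf Z\in S_{\epsilon}(Z^n|\mathbf x,\mathbf i)\,\middle|\,\mathbf x\right)=p(\mathbf i|\mathbf x)\,\Pr\left(\mathbf Z\in S_{\epsilon}(Z^n|\mathbf x,\mathbf i)\,\middle|\,\mathbf x,\,f(\mathbf Z)=\mathbf i\right),
\end{align*}
so it suffices to make the second factor a constant close to $1$, not to beat an exponential. This is what the paper does: it first shows by the law of large numbers that the joint typicality event (shell, ball, and the window on $p(f(\mathbf Z)|\mathbf X)$) has unconditional probability $\geq 1-\epsilon$, then defines $S_\epsilon(X^n,I_n)$ as the set of pairs $(\mathbf x,\mathbf i)$ for which the conditional probability \emph{given both} $\mathbf x$ and $\mathbf i$ is at least $1-\sqrt\epsilon$; a reverse-Markov argument over $(\mathbf x,\mathbf i)$ gives \eqref{E:part1}, nonemptiness of the conditional set places $p(\mathbf i|\mathbf x)$ in the window of \eqref{E:thirdconditionS}, and the displayed identity yields $p(\mathbf i|\mathbf x)(1-\sqrt\epsilon)\geq 2^{nB(\log\sin\theta_n-2\epsilon)}$, i.e.\ \eqref{E:part2}. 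Note that your own condition (c) already gestures at conditioning on $(\mathbf x,\mathbf i)$, but in Step 2 you silently drop the conditioning on $\mathbf i$ and fall back on the union bound; conditioning on $\mathbf i$ throughout and multiplying is exactly what makes the argument go through.
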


This lemma establishes the existence of a high probability set $S_{\epsilon}(X^n,I_n)$ of $(\mathbf x, \mathbf i)$ sequences, and a conditional typical set $S_{\epsilon}(Z^n| \mathbf{x},\mathbf{i})$ for each $(\mathbf x, \mathbf i)\in S_{\epsilon}(X^n,I_n)$ such that $ \mathbf z \in S_{\epsilon}(Z^n| \mathbf{x},\mathbf{i})$ satisfies some natural properties. Note that all properties  in the definition of  $S_{\epsilon}(Z^n| \mathbf{x},\mathbf{i})$ as well as \eqref{E:part2} are analogous to properties of strongly typical sets as stated in  \cite[Ch. 2]{ElGamalKim}. However, the notion of strong typicality does not apply to the current case since $Z^n$ and $Y^n$ are continuous random vectors and $X^n$ may or may not be continuous. Nevertheless, analogous properties can still be proved in this case; see the proof of this lemma in Appendix~\ref{SS:ProofStrongTyp}.
\medbreak

The following result has a slightly different flavor from the previous two lemmas in that it is simply a corollary of Theorem~\ref{thm:strongisoperimetryshell} from Section~\ref{sec:geometry}.

{\begin{corollary}\label{C:isoperi}
{For any $N,\epsilon$ such that $N>\epsilon>0$}, consider the spherical shell in $\mathbb R^m$
\begin{align*} &\mbox{Shell}\left(\mathbf 0,\sqrt{m(N-\epsilon)}  ,  \sqrt{m(N+\epsilon)}  \right)\\
&= \left\{\mathbf y\in \mathbb{R}^m : \; \sqrt{m(N-\epsilon)} \leq \|\mathbf y\| \leq \sqrt{m(N+\epsilon)}\right\}.
\end{align*}
Let $A \subseteq  \mbox{Shell}\left(\mathbf 0,\sqrt{m(N-\epsilon)}  ,  \sqrt{m(N+\epsilon)}  \right)$ be an arbitrary subset on this shell with volume 
\begin{equation}\label{eq:measureA}
|A| \geq 2^{ \frac{m}{2}\log 2\pi e (N+\epsilon)  \text{sin}^2 \theta},
\end{equation}
where $\theta\in (0,\pi/2)$. For any $\omega\in (\pi/2-\theta, \pi/2]$ 
and $m$ sufficiently large, we have 
\begin{align}
\text{Pr}\Bigg(& \left|A\cap \text{Ball}\left(\mathbf Y,  2\sqrt{m(N+\epsilon)} \sin \frac{\omega+\epsilon}{2} +2\sqrt{m\epsilon} \right)\right| \nonumber \\
&\geq 2^{\frac{m}{2}[\log(2\pi eN(\text{sin}^2 \theta   - \cos^2 \omega))-\epsilon]}\Bigg)\geq 1-\epsilon, \label{E:euctoshow}
\end{align}
where $\mathbf Y$ is drawn from any rotationally invariant 
distribution on the $\mbox{Shell}\left(\mathbf 0,\sqrt{m(N-\epsilon)}  ,  \sqrt{m(N+\epsilon)}  \right)$.
\end{corollary}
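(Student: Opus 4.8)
The plan is to deduce Corollary~\ref{C:isoperi} directly from Theorem~\ref{thm:strongisoperimetryshell}, applied on the shell $\mathbb{L}^{m}=\text{Shell}\big(\mathbf 0,\sqrt{m(N-\epsilon)},\sqrt{m(N+\epsilon)}\big)$, combined with the exponential volume characterizations of a shell cap and of the intersection of two orthogonal shell caps that are established in Appendix~\ref{A:Miscellaneous}. Concretely, those characterizations give, for any fixed angles $\alpha,\beta$ and as $m\to\infty$,
\[
\tfrac1m\log\big|\text{ShellCap}(\mathbf z_0,\alpha)\big|\;\longrightarrow\;\tfrac12\log\big(2\pi e(N+\epsilon)\sin^2\alpha\big),
\]
and, whenever $\angle(\mathbf z_0,\mathbf y_0)=\pi/2$ and $\alpha+\beta>\pi/2$,
\[
\tfrac1m\log\big|\text{ShellCap}(\mathbf z_0,\alpha)\cap\text{ShellCap}(\mathbf y_0,\beta)\big|\;\longrightarrow\;\tfrac12\log\big(2\pi e(N+\epsilon)(\sin^2\alpha-\cos^2\beta)\big).
\]
The crucial feature here is that these exponents are controlled by the \emph{outer} radius $\sqrt{m(N+\epsilon)}$, so they carry the factor $N+\epsilon$ rather than $N$; this is exactly the margin we will exploit. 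Throughout I would also assume, without loss of generality, that $\epsilon$ is small enough that $\tfrac{\omega+\epsilon}{2}\le\pi/2$ (the corollary is only ever invoked for small $\epsilon$), so that $\sin$ is monotone on the relevant interval.

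Fix $\omega\in(\pi/2-\theta,\pi/2]$. Since $\theta+\omega>\pi/2$ strictly, I can choose $\theta'\in(\pi/2-\omega,\theta)$ as close to $\theta$ as I like, and a small $\epsilon_0\in(0,\epsilon)$; then $\theta'>0$ and $\theta'+\omega>\pi/2$. By hypothesis \eqref{eq:measureA}, $\tfrac1m\log|A|\ge\tfrac12\log\big(2\pi e(N+\epsilon)\sin^2\theta\big)>\tfrac12\log\big(2\pi e(N+\epsilon)\sin^2\theta'\big)=\lim_m\tfrac1m\log|\text{ShellCap}(\mathbf z_0,\theta')|$, so for all sufficiently large $m$ we have $|A|>|\text{ShellCap}(\mathbf z_0,\theta')|$, and hence there is a measurable $A'\subseteq A$ of effective angle exactly $\theta'$. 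Now apply Theorem~\ref{thm:strongisoperimetryshell} to $A'$ on $\mathbb{L}^{m}$ with cap angle $\omega$ and parameter $\epsilon_0$ (legitimate since $M(\epsilon_0)$ does not depend on the radius parameters): for $m>M(\epsilon_0)$, with probability at least $1-\epsilon_0\ge1-\epsilon$ over $\mathbf Y$,
\[
\big|A'\cap\text{ShellCap}(\mathbf Y,\omega+\epsilon_0)\big|>(1-\epsilon_0)\,V',\qquad V':=\big|\text{ShellCap}(\mathbf z_0,\theta')\cap\text{ShellCap}(\mathbf y_0,\omega)\big|,
\]
with $\angle(\mathbf z_0,\mathbf y_0)=\pi/2$. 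By the second characterization above, $\lim_m\tfrac1m\log\big((1-\epsilon_0)V'\big)=\tfrac12\log\big(2\pi e(N+\epsilon)(\sin^2\theta'-\cos^2\omega)\big)$, and because $N+\epsilon>N$ this limit already exceeds $\tfrac12\log\big(2\pi eN(\sin^2\theta-\cos^2\omega)\big)-\tfrac\epsilon2$ once $\theta'$ is chosen close enough to $\theta$; hence $(1-\epsilon_0)V'\ge 2^{\frac m2[\log(2\pi eN(\sin^2\theta-\cos^2\omega))-\epsilon]}$ for all large $m$.

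It remains to pass from the shell cap to the Euclidean ball by an elementary geometric inclusion. Let $\mathbf z$ satisfy $\|\mathbf z\|,\|\mathbf Y\|\in[\sqrt{m(N-\epsilon)},\sqrt{m(N+\epsilon)}]$ and $\angle(\mathbf Y,\mathbf z)\le\omega+\epsilon_0$, and let $\hat{\mathbf z}=(\|\mathbf Y\|/\|\mathbf z\|)\mathbf z$ be the radial projection of $\mathbf z$ onto the sphere of radius $\|\mathbf Y\|$ (well defined as $\|\mathbf z\|\ge\sqrt{m(N-\epsilon)}>0$). Since $\angle(\mathbf Y,\hat{\mathbf z})=\angle(\mathbf Y,\mathbf z)$, the law of cosines gives $\|\mathbf Y-\hat{\mathbf z}\|=2\|\mathbf Y\|\sin\tfrac{\angle(\mathbf Y,\mathbf z)}{2}\le 2\sqrt{m(N+\epsilon)}\sin\tfrac{\omega+\epsilon}{2}$, while $\|\hat{\mathbf z}-\mathbf z\|=\big|\|\mathbf Y\|-\|\mathbf z\|\big|\le\sqrt{m(N+\epsilon)}-\sqrt{m(N-\epsilon)}\le\sqrt{2m\epsilon}\le 2\sqrt{m\epsilon}$; by the triangle inequality $\text{ShellCap}(\mathbf Y,\omega+\epsilon_0)\subseteq\text{Ball}\big(\mathbf Y,\,2\sqrt{m(N+\epsilon)}\sin\tfrac{\omega+\epsilon}{2}+2\sqrt{m\epsilon}\big)$. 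Combining this inclusion with $A'\subseteq A$ and the two volume bounds from the previous paragraph gives, with probability at least $1-\epsilon$, that $\big|A\cap\text{Ball}\big(\mathbf Y,\,2\sqrt{m(N+\epsilon)}\sin\tfrac{\omega+\epsilon}{2}+2\sqrt{m\epsilon}\big)\big|\ge 2^{\frac m2[\log(2\pi eN(\sin^2\theta-\cos^2\omega))-\epsilon]}$, which is \eqref{E:euctoshow}.

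The genuine computational content — the exponential formulas for the volumes of shell caps and of their pairwise orthogonal intersections — is not proved here but quoted from Appendix~\ref{A:Miscellaneous}; granting those, no step is hard, and the only point requiring care is the parameter bookkeeping: one must check that the strict gap $N+\epsilon>N$ in the volume exponents suffices to absorb, simultaneously, the perturbation $\theta'\uparrow\theta$, the shrinkage $\epsilon_0\downarrow0$, and the constant factor $1-\epsilon_0$, all while keeping $\theta'>0$, $\theta'+\omega>\pi/2$, and $\omega+\epsilon_0<\omega+\epsilon$. That bookkeeping, together with verifying the measurability of the subset $A'$ of prescribed effective angle, is where essentially all the (routine) work lies.
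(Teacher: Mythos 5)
Your proposal is correct and follows essentially the same route as the paper's own proof: control the effective angle via the shell-cap volume estimates of Appendix~\ref{A:Miscellaneous}, apply Theorem~\ref{thm:strongisoperimetryshell} on the shell (you apply it to a trimmed subset $A'$ of effective angle $\theta'<\theta$, the paper applies it to $A$ itself after showing its effective angle is at least $\theta-\epsilon_2$ -- a cosmetic difference), lower-bound the two-cap intersection volume, and pass from the shell cap to the Euclidean ball via the triangle inequality with the $2\sqrt{m\epsilon}$ thickness bound. One small inaccuracy: Lemma~\ref{L:volumeintersectionshellcap} gives the intersection lower bound with exponent $\frac{1}{2}\log\left(2\pi e N(\sin^2\theta'-\cos^2\omega)\right)$, not with $N+\epsilon$ as you quote, but your argument goes through verbatim with this weaker bound, since the target in \eqref{E:euctoshow} also carries $N$ and the $-\epsilon$ slack in its exponent absorbs the loss from taking $\theta'$ close to $\theta$.
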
}

This is a simple corollary of Theorem~\ref{thm:strongisoperimetryshell} when applied to a specific shell and a subset $A$ of this shell with measure prescribed by \eqref{eq:measureA}. The prescribed measure means that $A$ has an effective angle (asymptotically) greater than or equal to $\theta$. The corollary follows by observing that due to the triangle inequality (see also  Fig.~\ref{F:capinball}), for any $\mathbf y$ in the shell, 
$\text{ShellCap}(\mathbf y, \omega+\epsilon)$ considered in Theorem~\ref{thm:strongisoperimetryshell} is contained in the  Euclidean ball
$$ \text{Ball}\left(\mathbf y,  2\sqrt{m(N+\epsilon)} \sin \frac{\omega+\epsilon}{2}+ 2\sqrt{m\epsilon} \right).$$
The lower bound on the intersection volume in \dref{E:euctoshow} follows from an explicit characterization of 
$$V=|\text{ShellCap}(\mathbf z_0, \theta) \cap \text{ShellCap}(\mathbf y_0, \omega)|$$ in Theorem~\ref{thm:strongisoperimetryshell}, where $\angle(\mathbf z_0,\mathbf y_0)=\pi/2$ and $\theta+\omega>\pi/2$; see Appendix \ref{AS:twocap}, and in particular Lemma \ref{L:volumeintersectionshellcap}, for this characterization. A formal proof of  Corollary \ref{C:isoperi} is given in Appendix \ref{SS:ProofCor}.

\begin{figure}[t!]
\centering
\includegraphics[width=0.40\textwidth]{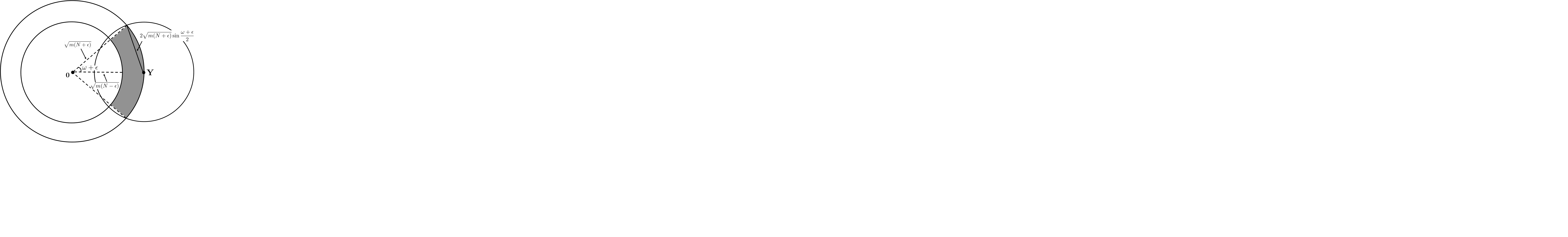}
\caption{Euclidean ball contains the shell cap.}
\label{F:capinball}
\end{figure}

\medbreak

The above corollary together with Lemma~\ref{L:prob_s(x,i)} leads to the following lemma.
 
\begin{lemma}\label{L:Keylemma}
For any $\delta>0$ and $B$ sufficiently large, we have
\begin{align*}
\mbox{Pr}(E_3)\geq 1-\delta,
\end{align*}
where $E_3$ is defined to be the following event:
\begin{align}
\Bigg\{ \Bigg| & f^{-1}(\mathbf I)\cap \text{Ball} \left(\mathbf 0, \sqrt{nB  (P+N +\delta}) \right) \nonumber \\
& \cap  \text{Ball} \left(\mathbf Y, \sqrt{nB N  \left( 4\sin^2 \frac{\omega}{2} + \delta \right)}  \right)       \Bigg| \nonumber \\
 &~~~~~~~~~~~~\geq   2^{nB[\frac{1}{2}\log(2\pi eN(\text{sin} ^2 \theta_n - \cos^2 \omega))-\delta]}     \Bigg \}  \label{E:defE3}
\end{align}
in which $f^{-1}(\mathbf I):=\{\mathbf a \in \mathbb R^{nB}: f(\mathbf a)=\mathbf I \}$ and {$\omega\in (\pi/2-\theta_n+\delta,\pi/2 ]$}.
\end{lemma}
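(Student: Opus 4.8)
The plan is to obtain $E_3$ by conditioning on the high-probability events $E_1$, $E_2$ from Lemma~\ref{L:E1} and the event that $(\mathbf X,\mathbf I)\in S_\epsilon(X^n,I_n)$ from Lemma~\ref{L:prob_s(x,i)}, and then on each such conditioning invoking Corollary~\ref{C:isoperi} with $m=nB$, $N$ as in the lemma, and an appropriate choice of $\epsilon$ (small relative to $\delta$). The key point is that the set to which we apply the extended isoperimetric inequality should be $A := S_\epsilon(Z^n\mid \mathbf x,\mathbf i)$ for a fixed typical pair $(\mathbf x,\mathbf i)$: by \eqref{E:part2} this set has conditional probability at least $2^{nB(\log\sin\theta_n - 2\epsilon)}$ under the Gaussian measure of $\mathbf Z$ given $\mathbf x$, and by the first two defining conditions of $S_\epsilon(Z^n\mid\mathbf x,\mathbf i)$ it lives in the shell $\mathrm{Shell}(\mathbf x,\sqrt{nB(N-\epsilon)},\sqrt{nB(N+\epsilon)})$. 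First I would convert this probability lower bound into a volume lower bound: since on that shell the Gaussian density $\mathcal N(\mathbf x, N I)$ is bounded above by roughly $(2\pi N)^{-nB/2} e^{-nB(N-\epsilon)/(2N)}$, the measure bound \eqref{eq:measureA} of Corollary~\ref{C:isoperi} holds (with $\theta$ slightly smaller than $\theta_n$, which is harmless since we have slack $\delta$ in the statement and the final bound is monotone in $\theta$), i.e. $|A|\ge 2^{\frac{nB}{2}\log 2\pi e(N+\epsilon)\sin^2\theta}$ for $\theta$ close to $\theta_n$.

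Next I would translate the shell containing $A$ to be centered at the origin rather than at $\mathbf x$: on the event $E_1$ we have $\|\mathbf Z\|\le \sqrt{nB(P+N+\delta)}$, and more to the point the third defining condition of $S_\epsilon(Z^n\mid\mathbf x,\mathbf i)$ already forces $\mathbf z\in\mathrm{Ball}(\mathbf 0,\sqrt{nB(P+N+\epsilon)})$, so $A$ is a subset of a shell about the origin of inner/outer radii $\sqrt{nB(N-\epsilon)}$ and $\sqrt{nB(N+\epsilon)}$ after the obvious recentering — here one uses that $\|\mathbf x-\mathbf z\|\in[\sqrt{nB(N-\epsilon)},\sqrt{nB(N+\epsilon)}]$ and $\|\mathbf x\|^2\approx nBP$. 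Now apply Corollary~\ref{C:isoperi} to this $A$ (which is a subset of $f^{-1}(\mathbf i)$ by the very definition $\mathbf z\in f^{-1}(\mathbf i)$), with $\mathbf Y$ drawn from the conditional law of $\mathbf Y$ given $\mathbf X=\mathbf x$, restricted to $E_2$. The conditional law of $\mathbf Y$ given $\mathbf X=\mathbf x$ is $\mathcal N(\mathbf x, N I)$, which is rotationally invariant about $\mathbf x$; composing with the same recentering makes it rotationally invariant about the origin on the shell, as Corollary~\ref{C:isoperi} requires (modulo the thin-shell conditioning, which only costs $\delta$ in probability by Lemma~\ref{L:E1}). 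The conclusion \eqref{E:euctoshow} then gives, with probability $\ge 1-\epsilon$ over $\mathbf Y$, that $|A\cap \mathrm{Ball}(\mathbf Y, 2\sqrt{nB(N+\epsilon)}\sin\frac{\omega+\epsilon}{2}+2\sqrt{nB\epsilon})|\ge 2^{\frac{nB}{2}[\log(2\pi eN(\sin^2\theta-\cos^2\omega))-\epsilon]}$; since $A\subseteq f^{-1}(\mathbf i)\cap\mathrm{Ball}(\mathbf 0,\sqrt{nB(P+N+\epsilon)})$ and the ball radius $2\sqrt{nB(N+\epsilon)}\sin\frac{\omega+\epsilon}{2}+2\sqrt{nB\epsilon}$ is at most $\sqrt{nBN(4\sin^2\frac\omega2+\delta)}$ for $\epsilon$ small and $\omega\in(\pi/2-\theta_n+\delta,\pi/2]$, this is exactly the event $E_3$ with $\theta$ replaced by $\theta_n$ after absorbing the $\epsilon$-slack into $\delta$.

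Finally I would assemble the probabilities: let $\mathcal G$ be the event $\{(\mathbf X,\mathbf I)\in S_\epsilon(X^n,I_n)\}\cap E_1\cap E_2$, which has probability $\ge 1-\sqrt\epsilon-2\delta'$ for suitable small $\delta'$ by Lemmas~\ref{L:E1} and \ref{L:prob_s(x,i)} and the union bound. On $\mathcal G$, for the realized $(\mathbf x,\mathbf i)$ the above argument shows $E_3$ fails with conditional probability at most $\epsilon$ (plus the $O(\delta')$ probability of the thin-shell conditioning on $\mathbf Y$), so $\mathrm{Pr}(E_3)\ge \mathrm{Pr}(\mathcal G)(1-\epsilon-O(\delta'))\ge 1-\delta$ once $\epsilon,\delta'$ are chosen small enough in terms of $\delta$, and $B$ (equivalently $m=nB$) large enough for Corollary~\ref{C:isoperi} and all the typicality lemmas to apply.

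\textbf{Main obstacle.} The delicate part is the bookkeeping of the three nested $\epsilon$'s: the $\epsilon$ in the definition of $S_\epsilon(Z^n\mid\mathbf x,\mathbf i)$ (shell thickness and the entropy-typicality slack in \eqref{E:part2}), the $\epsilon$ fed into Corollary~\ref{C:isoperi} (which controls both the angular blow-up $\omega+\epsilon$ and the exponent loss), and the target $\delta$ in $E_3$. One must verify that the volume-from-probability step really does produce a set with effective angle converging to $\theta_n$ (not something strictly smaller) as $\epsilon\to 0$, and that the ball radius inflation $2\sqrt{m(N+\epsilon)}\sin\frac{\omega+\epsilon}{2}+2\sqrt{m\epsilon}$ is genuinely dominated by $\sqrt{mN(4\sin^2\frac\omega2+\delta)}$ uniformly over the relevant range of $\omega$ — this is where the restriction $\omega\in(\pi/2-\theta_n+\delta,\pi/2]$ and the monotonicity/continuity of $\sin^2\theta-\cos^2\omega$ in $\theta$ near $\theta_n$ are used to keep the exponent $\log(2\pi eN(\sin^2\theta-\cos^2\omega))$ from degrading below the target $\log(2\pi eN(\sin^2\theta_n-\cos^2\omega))-\delta$. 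Everything else is a routine union bound and Gaussian density estimate.
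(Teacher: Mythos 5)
Your proposal is correct and follows essentially the same route as the paper's proof: lower-bound the volume of $S_{\epsilon}(Z^n|\mathbf{x},\mathbf{i})$ via \dref{E:part2} and the Gaussian density bound on the thin shell around $\mathbf x$, apply Corollary~\ref{C:isoperi} on that shell to $\mathbf Y$ conditioned on $\mathbf X=\mathbf x$ (rotationally invariant there), use $S_{\epsilon}(Z^n|\mathbf{x},\mathbf{i})\subseteq f^{-1}(\mathbf i)\cap \text{Ball}(\mathbf 0,\sqrt{nB(P+N+\epsilon)})$, and finish with a union bound over the typical $(\mathbf x,\mathbf i)$, handling the $\theta'_n$ versus $\theta_n$ and radius-inflation slack exactly as you describe. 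The only nitpick is that the thin-shell conditioning on $\mathbf Y$ is around $\mathbf x$ (radius $\approx\sqrt{nBN}$), so it is justified by a fresh law-of-large-numbers argument for the noise rather than by $E_2$ or Lemma~\ref{L:E1} itself, which concern the origin-centered shell of radius $\approx\sqrt{nB(P+N)}$.
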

 
This  lemma can also be regarded as a typicality lemma as it states a property satisfied  by $(\mathbf I, \mathbf Y)$ pair with high probability  when $B$ is large. However, this is a non-trivial property. The lemma follows by first fixing a pair $(\mathbf x, \mathbf i)\in S_{\epsilon}(X^n,I_n)$ and showing that the volume of the set $S_{\epsilon}(Z^n| \mathbf{x},\mathbf{i})$ defined in Lemma~\ref{L:prob_s(x,i)} can be lower bounded by
$$2^{\frac{nB}{2}\log(2\pi eN\text{sin} ^2 \theta_n )},$$
up to the first order term in the exponent. Since by definition $S_{\epsilon}(Z^n| \mathbf{x},\mathbf{i})$ is a subset of the shell
$$
\mbox{Shell}\left(\mathbf{x},\sqrt{nB(N-\epsilon)}  ,  \sqrt{nB(N+\epsilon)}  \right), $$
and given $\mathbf X= \mathbf x$, $\mathbf Y$ is isotropic Gaussian (therefore rotationally invariant around $\mathbf x$ when constrained to this shell), we can apply Corollary~\ref{C:isoperi} to the above shell by choosing the set $A$  to be $S_{\epsilon}(Z^n| \mathbf{x},\mathbf{i})$. This allows us to conclude that
\begin{align}\label{eq:intermediate}
&\mbox{Pr}\Bigg( \left|S_{\epsilon}(Z^n| \mathbf{x},\mathbf{i})\cap \text{Ball}\left(\mathbf Y, \sqrt{nB N  \bigg( 4\sin^2 \frac{\omega}{2} + \epsilon \bigg)}\right) \right| \nonumber \\
&~~~~~~~ \geq 2^{nB\left[\frac{1}{2}\log(2\pi eN(\text{sin} ^2 \theta_n - \cos^2 \omega))-\epsilon \right]} \Bigg| \mathbf X =  \mathbf x\Bigg)\geq 1-\epsilon.
\end{align}
The conclusion of Lemma~\ref{L:Keylemma} then follows by observing that by definition
$$S_{\epsilon}(Z^n| \mathbf{x},\mathbf{i})\subseteq f^{-1}(\mathbf i) \cap \text{Ball}\left( \mathbf 0, \sqrt{nB(P+N+\epsilon)}  \right),$$ and removing the conditioning with respect to $\mathbf X$ in \eqref{eq:intermediate}. The formal proof of Lemma \ref{L:Keylemma} is given in Appendix~\ref{SS:appendixproofkeylemma}.

\subsection{Proof of Theorem~\ref{L:upperboundlemma}}\label{SS:Proofoutline}

We are now ready to prove Theorem~\ref{L:upperboundlemma}, which mainly builds on Lemma~\ref{L:Keylemma}. Consider a $\mathbf Y$ that with high probability lies in the ball with center $\mathbf 0$ and approximate radius $\sqrt{nB(P+N)}$, and draw another ball around $\mathbf Y$ of approximate radius  $\sqrt{nB N   4\sin^2 \frac{\omega}{2}   }$ and intersect this ball with the original ball; equivalently, this corresponds to considering a cap around $\mathbf{Y}$ of angle $\phi$ on the original ball (see Fig.~\ref{F:twoball}). Lemma~\ref{L:Keylemma} asserts that this cap around $\mathbf{Y}$ will have a certain minimal intersection volume with $f^{-1}(\mathbf I)$. In other words, there is a subset of this cap with certain minimal volume that is mapped to $\mathbf I$. This naturally lends itself to a packing argument: the number of distinct $\mathbf I$ values plausible under a given $\mathbf Y$ can be upper bounded by the ratio between the volume of the cap around $\mathbf Y$ and the minimal intersection volume occupied for each distinct $\mathbf I$. This in turn leads to a bound on $H(\mathbf I|\mathbf Y)$.

\begin{figure}[t!]
\centering
\includegraphics[width=0.40\textwidth]{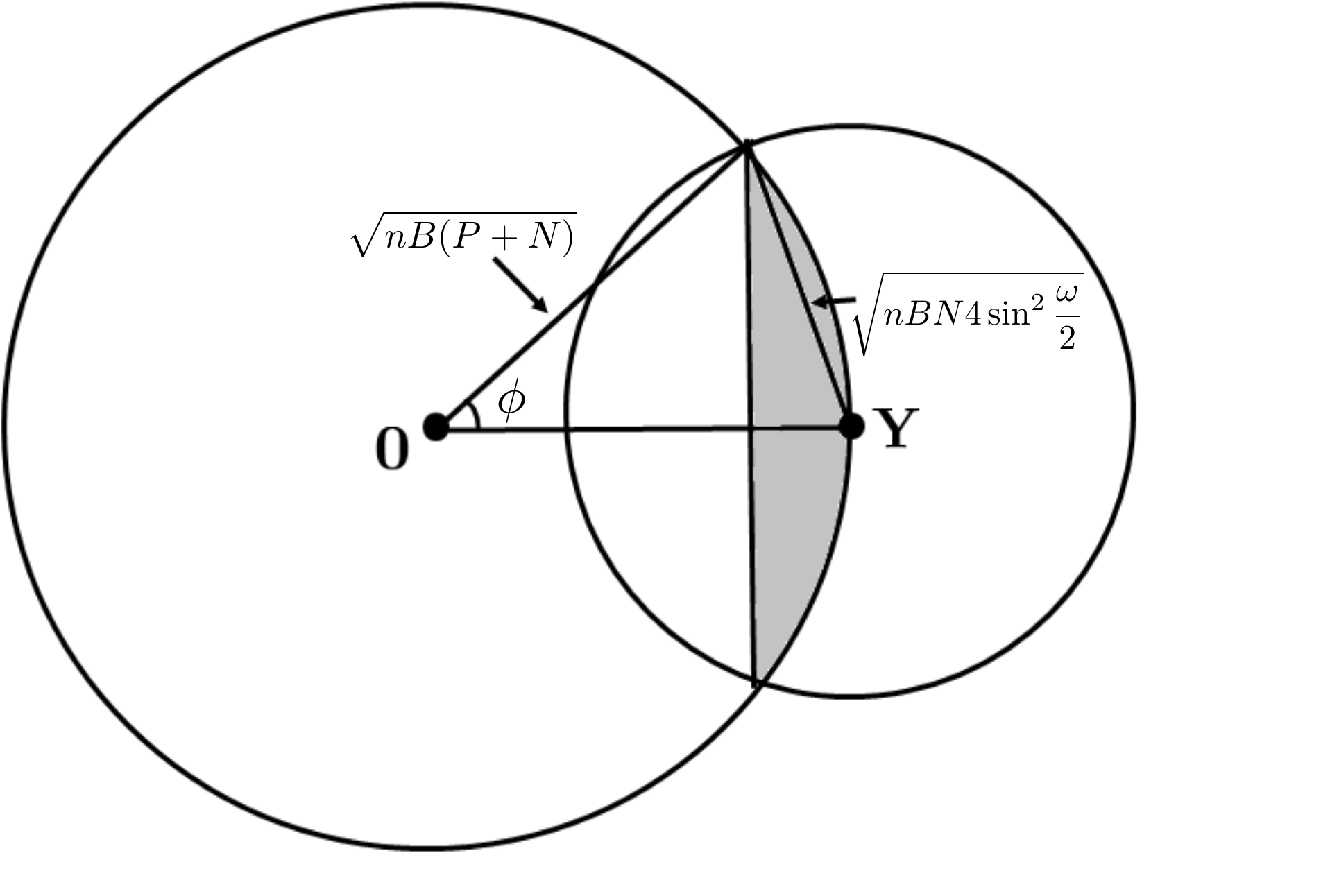}
\caption{A spherical cap with angle $\phi=2\arcsin\sqrt{\frac{N\text{sin}^2\frac{\omega }{2}}{P+N}}$.}
\label{F:twoball}
\end{figure}

We now proceed with the formal proof. Consider
the indicator function
$$F=\mathbb I (E_1,E_2,E_3) $$
where $\mathbb{I}(\cdot)$ is defined as
\begin{numcases}{\mathbb{I}(A)=}
1  \text{~~~if $A$ holds} \nonumber \\
0  \text{~~~otherwise, } \nonumber
\end{numcases}
and the  events $E_1,E_2$ and $E_3$ are as given by \dref{E:defE1}, \dref{E:defE2} and \dref{E:defE3} respectively.
Obviously, by the union bound, we have
$$\mbox{Pr}(F=1)\geq 1-3\delta$$
for any $\delta>0$ and $B$ sufficiently large, and therefore
\begin{align}
 H(\mathbf I |\mathbf Y)
& \leq H(\mathbf I , F|\mathbf Y)\nonumber \\
&= H( F|\mathbf Y)+H(\mathbf I |\mathbf Y,F)\nonumber \\
&\leq H(\mathbf I |\mathbf Y,F)+1\nonumber \\
&=\mbox{Pr}(F=1) H(\mathbf I |\mathbf Y,F=1)\nonumber \\
&~~~+\mbox{Pr}(F=0) H(\mathbf I |\mathbf Y,F=0)+ 1\nonumber \\
&\leq H(\mathbf I |\mathbf Y,F=1) +3\delta nBC_0 +1  .\label{E:plug}
\end{align}

{To bound $H(\mathbf I |\mathbf Y,F=1) $, it suffices to bound $H(\mathbf I |\mathbf Y=\mathbf y, F=1)$ for any
\begin{align}
\mathbf y \in \text{Shell}\left( \mathbf 0, \sqrt{nB(P+N-\delta)},\sqrt{nB(P+N+\delta)}  \right). \label{E:ensure}
\end{align}
For this, we apply a packing argument as follows. Consider a ball centered at any $\mathbf y$ satisfying \dref{E:ensure} and of radius $\sqrt{nB N  \left( 4\sin^2 \frac{\omega}{2} + \delta \right)}$, i.e.,
$$\text{Ball}  \left( \mathbf y, \sqrt{nB N  \left( 4\sin^2 \frac{\omega}{2} + \delta \right)}  \right) ,$$
where $\omega$ satisfies
$$\pi/2-\theta_n+\delta<\omega\leq \pi/2.$$We now use the following lemma (whose proof is included in Appendix \ref{A:twoball}) to upper bound the volume of the intersection between this ball and $\text{Ball}\left( \mathbf 0, \sqrt{nB(P+N+\delta)}  \right) $, i.e.,
\begin{align*}
&\Bigg|\text{Ball}\left( \mathbf y, \sqrt{nB N  \left( 4\sin^2 \frac{\omega}{2} + \delta \right)} \right) \\
&~~~~~~~~~ \cap \text{Ball}\left( \mathbf 0, \sqrt{nB(P+N+\delta)}  \right) \Bigg|.
\end{align*}
\begin{lemma}\label{L:twoball}
Let $\text{Ball}(\mathbf c_1, \sqrt{mR_1})$ and $\text{Ball}(\mathbf c_2, \sqrt{mR_1})$ be two balls in $\mathbb R^m$ with $\|\mathbf c_1 -\mathbf c_2\|=\sqrt{mD}$, where $D$ satisfies $ (\sqrt{R_1}-\sqrt{R_2})^2< D< (\sqrt{R_1}+\sqrt{R_2})^2$. Then for any $\epsilon>0$ and $m$ sufficiently large, we have
\begin{align*}
&\left|\text{Ball}(\mathbf c_1, \sqrt{mR_1})\cap \text{Ball}(\mathbf c_2, \sqrt{mR_1})\right|\\
&\leq  2^{m\left(\frac{1}{2}\log \pi e \lambda(R_1,R_2,D)    +\epsilon \right)}
\end{align*}
where
\begin{align*}
\lambda(R_1,R_2,D):=\frac{ 2R_1D+2R_1R_2+2DR_2-R_1^2-R_2^2-D^2 }{2D} .
\end{align*}
\end{lemma}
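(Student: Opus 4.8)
The plan is to realize the lens $L := \text{Ball}(\mathbf c_1,\sqrt{mR_1})\cap\text{Ball}(\mathbf c_2,\sqrt{mR_2})$ (I read the second radius as $\sqrt{mR_2}$, as forced by the occurrence of $R_2$ in $\lambda$) as a subset of a single Euclidean ball whose radius is the radius of the $(m-2)$-dimensional ``waist'' along which the two bounding spheres meet, and then to estimate that ball's volume by Stirling. First I would normalize: since Lebesgue measure is invariant under rigid motions, place $\mathbf c_1=\mathbf 0$ and $\mathbf c_2=(\sqrt{mD},0,\dots,0)$. Subtracting the two sphere equations $\|\mathbf x\|^2=mR_1$ and $\|\mathbf x-\mathbf c_2\|^2=mR_2$ shows that the spheres meet exactly on the hyperplane $\{x_1=t^\ast\}$ with $t^\ast=\tfrac{\sqrt m\,(R_1-R_2+D)}{2\sqrt D}$, and there they coincide on the $(m-2)$-sphere of radius $\rho$ about $\mathbf p^\ast:=(t^\ast,0,\dots,0)$, where $\rho^2=mR_1-(t^\ast)^2$. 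Completing the square (equivalently, using the Heron-type identity for a triangle with side lengths $\sqrt{R_1},\sqrt{R_2},\sqrt D$) gives
\begin{equation*}
\rho^2=\frac{m}{4D}\bigl(2R_1D+2R_1R_2+2DR_2-R_1^2-R_2^2-D^2\bigr)=\frac m2\,\lambda(R_1,R_2,D).
\end{equation*}
The hypothesis $(\sqrt{R_1}-\sqrt{R_2})^2<D<(\sqrt{R_1}+\sqrt{R_2})^2$ is precisely the strict triangle inequality for that triangle, so $\lambda>0$ and $\rho$ is a genuine positive radius; in particular the two balls overlap without one containing the other.

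The geometric heart is to show $L\subseteq\text{Ball}(\mathbf p^\ast,\rho)$. Split $L$ by the waist hyperplane. A direct computation gives, for every $\mathbf x$ with $\|\mathbf x\|\le\sqrt{mR_1}$ and $x_1\ge t^\ast$, that $\|\mathbf x-\mathbf c_2\|^2\le mR_1-2x_1\sqrt{mD}+mD\le mR_2$, so on $\{x_1\ge t^\ast\}$ the second ball constraint is automatic and $L\cap\{x_1\ge t^\ast\}=\text{Ball}(\mathbf c_1,\sqrt{mR_1})\cap\{x_1\ge t^\ast\}$; symmetrically $L\cap\{x_1\le t^\ast\}=\text{Ball}(\mathbf c_2,\sqrt{mR_2})\cap\{x_1\le t^\ast\}$. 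Thus $L$ is a union of two spherical caps, one from each ball. For a point $\mathbf x$ of the first cap, $\|\mathbf x-\mathbf p^\ast\|^2\le mR_1-2x_1t^\ast+(t^\ast)^2\le mR_1-(t^\ast)^2=\rho^2$ as soon as $x_1\ge t^\ast\ge 0$, and the analogous bound for the second cap holds when $t^\ast\le\sqrt{mD}$. Hence $L\subseteq\text{Ball}(\mathbf p^\ast,\rho)$ whenever $0\le t^\ast\le\sqrt{mD}$, i.e.\ $|R_1-R_2|\le D$, which is the regime in which this lemma is invoked. (To avoid the sign bookkeeping one may instead slice $L$ by the hyperplanes $\{x_1=t\}$: each slice is an $(m-1)$-ball of radius at most $\rho$ and the $x_1$-extent of $L$ is at most $2\sqrt{m\max(R_1,R_2)}$, costing only a polynomial factor below.)

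Finally I would estimate the ball volume. With $V_m(r)=\pi^{m/2}r^m/\Gamma(\tfrac m2+1)$ and $\rho^2=\tfrac m2\lambda$, Stirling's formula gives
\begin{equation*}
|L|\le V_m(\rho)=\frac{(\pi m\lambda/2)^{m/2}}{\Gamma(\tfrac m2+1)}=\frac{1+o(1)}{\sqrt{\pi m}}\,(\pi e\lambda)^{m/2},
\end{equation*}
so for any $\epsilon>0$ and all $m$ large enough, $|L|\le 2^{m(\frac12\log\pi e\lambda+\epsilon)}$, which is the claim; the subexponential prefactor (or the extra polynomial factor from the slicing variant) is absorbed into $2^{m\epsilon}$.

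I expect the main obstacle to be the geometric containment step: establishing that not merely the waist $(m-2)$-sphere but the \emph{entire} lens lies inside the radius-$\rho$ ball about $\mathbf p^\ast$ (equivalently, that $\rho$ is the largest slice radius). This is exactly where the parameter hypotheses — and, for the sharp constant, the fact that $|R_1-R_2|\le D$ in the application — are needed; everything else is routine manipulation of the ball-volume formula and Stirling's approximation.
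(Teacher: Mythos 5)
Your proposal is correct and takes a genuinely different route from the paper's. The paper splits the lens by the waist hyperplane into one cap from each ball, computes the half-angles $\theta_1,\theta_2$ by the cosine rule, and invokes its exponential characterization of spherical-cap volume from Appendix C.1, the two caps contributing the same exponent because $2R_1\sin^2\theta_1=2R_2\sin^2\theta_2=\lambda$; you instead enclose the entire lens in the single ball $\text{Ball}(\mathbf p^*,\rho)$ of waist radius $\rho=\sqrt{m\lambda/2}$ and finish with the ball-volume formula and Stirling. Your route is more self-contained (no appeal to the cap-area asymptotics) and isolates a clean geometric fact, the containment $L\subseteq\text{Ball}(\mathbf p^*,\rho)$. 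Both arguments in fact need the condition $|R_1-R_2|\le D$ beyond the stated hypothesis $(\sqrt{R_1}-\sqrt{R_2})^2<D<(\sqrt{R_1}+\sqrt{R_2})^2$: you make it explicit (it is what puts $t^*\in[0,\sqrt{mD}]$, so the waist is the widest slice), while the paper leaves it implicit, since its cap-volume upper bound $2^{\frac m2[\log(2\pi eR\sin^2\theta)+\epsilon]}$ is only valid for cap angle $\theta\le\pi/2$, which is exactly $D\ge|R_1-R_2|$. The condition is not cosmetic: with the intended reading of the second radius as $\sqrt{mR_2}$ (the typo you and the paper's own proof both correct), taking $R_1=4$, $R_2=1$, $D=2$ satisfies the stated hypothesis, yet the lens then contains more than half of the second ball, so its volume is $\approx(2\pi eR_2)^{m/2}$ while $\pi e\lambda=1.75\,\pi e<2\pi eR_2$, and the claimed bound fails; as you observe, the application in Section III only uses the lemma where $R_1\approx D\approx P+N$ and $R_2\le 2N+O(\delta)$, so $|R_1-R_2|\le D$ holds there. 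One small caveat: your parenthetical slicing shortcut does not actually remove the sign conditions, since outside the regime $|R_1-R_2|\le D$ the slice through the center of the smaller ball already has radius larger than $\rho$; keep the main containment argument as your proof of record.
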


Using the above lemma, we have for $B$ sufficiently large,
\begin{align*}
&\Bigg|\text{Ball}\left( \mathbf y, \sqrt{nB N  \left( 4\sin^2 \frac{\omega}{2} + \delta \right)} \right) \\
&~~~~~~~~ \cap \text{Ball}\left( \mathbf 0, \sqrt{nB(P+N+\delta)}  \right) \Bigg|\\
&\leq 2^{nB\left[\frac{1}{2}\log \pi e \lambda\left(P+N+\delta, N \left( 4\text{sin}^2 \frac{\omega}{2} + \delta\right)  ,\|\mathbf y \|\right)    +\delta \right]}\\
&= 2^{nB\left[\frac{1}{2}\log \pi e \lambda\left(P+N,   4N\text{sin}^2 \frac{\omega}{2}    ,P+N \right)    +\delta_1 \right]} \\
&=2^{ nB \left[\frac{1}{2}\log \frac{8\pi e N\text{sin}^2 \frac{\omega}{2}(P+N-N\text{sin}^2 \frac{\omega}{2})}{P+N}     +\delta_1\right]},
\end{align*}
for some $\delta_1\to 0$ as $\delta \to 0$, where the first inequality is an immediate application of Lemma \ref{L:twoball}, the first equality follows from the fact that 
$$\mathbf y \in \text{Shell}\left( \mathbf 0, \sqrt{nB(P+N-\delta)},\sqrt{nB(P+N+\delta)}  \right)$$ 
and the continuity of the function $\lambda(R_1,R_2,D)$ in its arguments, and the second equality follows from a simple evaluation of $\lambda\left(P+N,   4N\text{sin}^2 \frac{\omega}{2}    ,P+N \right)$.

On the other hand,  the condition $F=1$ (c.f. the definition of $E_3$ in Lemma \ref{L:Keylemma}) also ensures that
\begin{align*}
&\Bigg|  f^{-1}(\mathbf I)\cap \text{Ball} \left(\mathbf 0, \sqrt{nB  (P+N +\delta} \right)
 \\  
&~~~~~~ \cap  \text{Ball} \left(\mathbf y, \sqrt{nB N  \left( 4\sin^2 \frac{\omega}{2} + \delta \right)}  \right)       \Bigg|\\
& \geq   2^{nB[\frac{1}{2}\log(2\pi eN(\text{sin} ^2 \theta_n - \cos^2 \omega))-\delta]}.
\end{align*}
Since $f^{-1}(\mathbf i)$ are disjoint sets for different $\mathbf i$, given $F=1$ and $\mathbf{Y}=\mathbf{y}$, 
the number of different possible values for $\mathbf I$  can be upper bounded by the ratio between
\begin{align*}
&\Bigg|\text{Ball}\left( \mathbf y, \sqrt{nB N  \left( 4\sin^2 \frac{\omega}{2} + \delta \right)} \right) \\
&~~~~~~~~~ \cap \text{Ball}\left( \mathbf 0, \sqrt{nB(P+N+\delta)}  \right) \Bigg|
\end{align*}
and}
$$2^{nB[\frac{1}{2}\log(2\pi eN(\text{sin} ^2 \theta_n - \cos^2 \omega))-\delta]},$$
which can be further upper bounded by
\begin{align*}
&2^{nB\left[\frac{1}{2}\log \frac{8\pi e N\text{sin} ^2\frac{\omega}{2}(P+N-N\text{sin}^2 \frac{\omega}{2})}{P+N} -\frac{1}{2}\log(2\pi eN(\text{sin} ^2 \theta_n - \cos^2 \omega)) +\delta+\delta_1   \right]} \\
& =2^{nB\left[\frac{1}{2}\log \frac{4\text{sin} ^2\frac{\omega}{2}(P+N-N\text{sin}^2 \frac{\omega}{2})}{(P+N)(\text{sin} ^2 \theta_n - \cos^2 \omega)}   + \delta_2     \right]} ,
\end{align*}
where $\delta_2\to 0$ as $\delta\to 0$.
This immediately implies the following upper bound on $H(\mathbf I |\mathbf Y= \mathbf y,F=1) $ and therefore $H(\mathbf I |\mathbf Y,F=1) $,
\begin{align*}
&H(\mathbf I |\mathbf Y,F=1)\\
& \leq nB\left[\frac{1}{2}\log \frac{4\text{sin} ^2\frac{\omega}{2}(P+N-N\text{sin} ^2\frac{\omega}{2})}{(P+N)(\text{sin} ^2 \theta_n - \cos^2 \omega)}  +\delta_2   \right],\nonumber
\end{align*}
which combined with \dref{E:plug} yields that
\begin{align*}
H(\mathbf I |\mathbf Y) \leq \ &  nB \left[\frac{1}{2}\log \frac{4\text{sin}^2 \frac{\omega}{2}(P+N-N\text{sin}^2 \frac{\omega}{2})}{(P+N)(\text{sin} ^2 \theta_n - \cos^2 \omega)}   +\delta_2   \right] \\
& +3\delta nBC_0 +1.
\end{align*}
Dividing both sides of the above inequality by $B$ and noting that
$$H(\mathbf I |\mathbf Y)=\sum_{b=1}^B H(I_n(b)|Y^n(b))=BH(I_n|Y^n),$$
we have
\begin{align}
&H(I_n|Y^n)\nonumber \\
&  \leq  n\bigg( \frac{1}{2}\log \frac{4\text{sin}^2 \frac{\omega}{2}(P+N-N\text{sin}^2 \frac{\omega}{2})}{(P+N)(\text{sin} ^2 \theta_n - \cos^2 \omega)}  +\delta_2+3\delta C_0+\frac{1}{nB}\bigg),\label{E:arbitrarysmall1}
\end{align}
{which holds for any
\begin{align}
\omega\in (\pi/2-\theta_n+\delta, \pi/2].\label{E:arbitrarysmall2}
\end{align}
Since $\delta, \delta_2$ and $\frac{1}{nB}$ in \dref{E:arbitrarysmall1}--\dref{E:arbitrarysmall2} can all be made arbitrarily small by choosing $B$ sufficiently large, we obtain
\begin{align}
H(I_n|Y^n) \leq  n \left(  \frac{1}{2}\log \frac{4\text{sin} ^2\frac{\omega}{2}(P+N-N\text{sin}^2 \frac{\omega}{2})}{(P+N)(\text{sin} ^2 \theta_n - \cos^2 \omega)}    \right), \label{E:omegatrans}
\end{align}
for any $\omega\in \left(\frac{\pi}{2}-\theta_n, \frac{\pi}{2}\right]$. This completes the proof of Theorem \ref{L:upperboundlemma}.}

\section{Proofs of Theorems~\ref{thm:maintheorem} and \ref{thm:maintheorem2}}\label{S:proof}

We now prove Theorem~\ref{thm:maintheorem2} by using Theorem~\ref{L:upperboundlemma}, and use Theorem~\ref{thm:maintheorem2}  to prove Theorem~\ref{thm:maintheorem}.

\subsection{Proof of Theorem~\ref{thm:maintheorem2}}
Suppose a rate $R$ is achievable. Then there exists a sequence of $(2^{nR},n)$ codes
such that the average probability of error $P_e^{(n)} \to 0$ as $n \to \infty$.
Let the relay's transmission be denoted by $I_n=f_n(Z^n)$. By standard information theoretic arguments, for this sequence of codes we have
\begin{align}
n R &=  H(M)\nonumber \\
&=I(M;Y^n,I_n)+H(M|Y^n,I_n)\nonumber \\
&\leq I (X^n;Y^n,I_n)+n\mu \label{eq:Fano}  \\
&=I(X^n;Y^n)+I(X^n;I_n|Y^n)+ n\mu  \nonumber \\
&=I(X^n;Y^n)+H(I_n|Y^n)-H(I_n|X^n)+ n\mu  \label{E:markov}\\
&\leq nI(X_Q;Y_Q)+H(I_n|Y^n)-H(I_n|X^n)+ n\mu \label{E:Timesharing}\\
&\leq \frac{n}{2}\log \left(1+\frac{P}{N}\right)+H(I_n|Y^n)-H(I_n|X^n)+ n\mu, \label{E:power}
\end{align}
for any $\mu>0$ and $n$ sufficiently large. In the above, \dref{eq:Fano} follows from applying the data processing inequality to the Markov chain $M-X^n-(Y^n,I_n)$ and Fano's inequality, \dref{E:markov} uses the fact that $I_n-X^n-Y^n$ form a Markov chain and thus $H(I_n|X^n,Y^n)=H(I_n|X^n)$, \dref{E:Timesharing} follows by defining the time sharing random variable $Q$ to be uniformly distributed over $[1:n]$, and \dref{E:power} follows because
\begin{align*} 
E[X^2_Q]&=  \frac{1}{{2^{nR}}}\sum_{m=1}^{2^{nR}}\frac{1}{n}\sum_{i=1}^{n} x^2_i(m)  \\
&=  \frac{1}{n} \frac{1}{{2^{nR}}} \sum_{m=1}^{{2^{nR}}}\|x^n(m)\|^2  \\
&\leq P.
\end{align*}
Given \eqref{E:power}, the standard way to proceed would be to upper bound the first entropy term by $H(I_n|Y^n)\leq H(I_n)\leq nC_0$ and lower bound the second entropy term $H(I_n|X^n)$ simply by $0$. This would lead to the so-called multiple-access bound in the well-known cut-set bound on the capacity of this channel \cite{covelg79}. However, as we already point out in our previous works \cite{IZS2016}--\cite{ISIT2016}, this leads to a loose bound since it does not capture the inherent tension between how large the first entropy term can be and how small the second one can be. Instead, we can use Theorem~\ref{L:upperboundlemma} to more tightly  upper bound the difference $H(I_n|Y^n)-H(I_n|X^n)$ in \dref{E:power}.

We start by verifying that the random variables $I_n, X^n, Z^n$ and $Y^n$ associated with a code of blocklength $n$ satisfy the conditions in Theorem~\ref{L:upperboundlemma}. It is trivial to observe that they satisfy the required Markov chain condition and $Z^n$ and $Y^n$ are i.i.d. Gaussian given $X^n$ due to the channel structure. Also assume that  
$$
E[\|X^n\|^2]= \frac{1}{{2^{nR}}}\sum_{m=1}^{2^{nR}}\|x^n(m)\|^2= nP'
$$
with $P'\leq P$, and assume that $H(I_n|X^n)=-n\log \sin \theta_n$.  Then, applying Theorem~\ref{L:upperboundlemma} to the random variables associated with a code for the relay channel, we have
\begin{align*}
&H(I_n|Y^n) \nonumber \\
&\leq   n\,\cdot  \min_{\omega\in \left(\frac{\pi}{2}-\theta_n, \frac{\pi}{2}\right]}   \frac{1}{2}\log \left(\frac{4\text{sin} ^2\frac{\omega}{2}(P'+N-N\text{sin}^2 \frac{\omega}{2})}{(P'+N)(\text{sin} ^2 \theta_n - \cos^2 \omega)} \right)\\
&\leq  n\,\cdot  \min_{\omega\in \left(\frac{\pi}{2}-\theta_n, \frac{\pi}{2}\right]}   \frac{1}{2}\log \left(\frac{4\text{sin} ^2\frac{\omega}{2}(P +N-N\text{sin}^2 \frac{\omega}{2})}{(P +N)(\text{sin} ^2 \theta_n - \cos^2 \omega)} \right),
\end{align*}
and therefore, 
\begin{align}
H(I_n|Y^n) -H(I_n|X^n) &\leq   n\,\cdot  \min_{\omega\in \left(\frac{\pi}{2}-\theta_n, \frac{\pi}{2}\right]}  h_{\theta_n}(\omega) \label{E:plugdiffer}
\end{align}
where $h_{\theta_n}(\omega)$ is defined as
\begin{align}&~h_{\theta_n}(\omega)=    \frac{1}{2}\log \left(\frac{4\text{sin} ^2\frac{\omega}{2}(P+N-N\text{sin}^2 \frac{\omega}{2})\sin^2\theta_n}{(P+N)(\text{sin} ^2 \theta_n - \cos^2 \omega)} \right),\label{E:hdef}
 \end{align}
in which $\theta_n =\arcsin 2^{- \frac{1}{n}H(I_n|X^n)}$  satisfies
 \begin{align}\label{thetanlb}
\theta_0:=\arcsin (2^{- C_0})\leq\arcsin 2^{- \frac{1}{n}H(I_n|X^n)}=\theta_n\leq \frac{\pi}{2}.
\end{align}
Plugging \dref{E:plugdiffer} into \dref{E:power}, we conclude that for any achievable rate $R$,
\begin{align}\label{eq:upperboundR}
R
&\leq \frac{1}{2}\log \left(1+\frac{P}{N}\right)+ \min_{\omega\in \left(\frac{\pi}{2}-\theta_n, \frac{\pi}{2}\right]}  h_{\theta_n}(\omega)+ \mu.
\end{align}

At the same time, for any achievable rate $R$, we also have
\begin{equation}\label{eq:upperboundR2}
R\leq \frac{1}{2}\log \left(1+\frac{P}{N}\right)+C_0+\log \sin \theta_n+\mu,
\end{equation}
which simply follows from \eqref{E:power} by upper bounding $H(I_n|Y^n)$ with $nC_0$ and plugging in the definition of $\theta_n$. Therefore, if a rate $R$ is achievable, then for any $\mu>0$ and $n$ sufficiently large it should simultaneously satisfy both \eqref{eq:upperboundR} and \eqref{eq:upperboundR2}  for some $\theta_n$ that satisfies the condition in \eqref{thetanlb}. This concludes the proof of the theorem.

\subsection{Proof of Theorem~\ref{thm:maintheorem}}
In order to show that Theorem~\ref{thm:maintheorem} follows from Theorem~\ref{thm:maintheorem2}, consider the following bound on $C(C_0)$ implied by Theorem~\ref{thm:maintheorem2}:
\begin{align}
C(C_0)  \leq & \ \frac{1}{2}\log \left(1+\frac{P}{N}\right)\nonumber \\
& ~~~~ + \sup_{\theta\in\left[\arcsin (2^{-C_0}),\frac{\pi}{2}\right]} \min_{\omega\in \left(\frac{\pi}{2}-\theta, \frac{\pi}{2}\right]}  h_{\theta}(\omega).\label{thm2:bound2}
\end{align}
With $\theta_0$ defined as $\arcsin (2^{- C_0})$, we can upper bound the right-hand side of \eqref{thm2:bound2} to obtain
\begin{align*}
C(C_0)
&\leq \frac{1}{2}\log \left(1+\frac{P}{N}\right)+ \sup_{\theta\in\left[\theta_0,\frac{\pi}{2}\right]}  \min_{\omega\in \left(\frac{\pi}{2}-\theta_0, \frac{\pi}{2}\right]}  h_{\theta}(\omega).
\end{align*}
Also because given any fixed $\omega\in \left(\frac{\pi}{2}-\theta_0, \frac{\pi}{2}\right]$, $h_{\theta}(\omega)\leq h_{\theta_0}(\omega)$ for any $\theta\in [\theta_0, \pi/2]$,  we further have
\begin{align}
C(C_0)
&\leq \frac{1}{2}\log \left(1+\frac{P}{N}\right)+ \min_{\omega\in \left(\frac{\pi}{2}-\theta_0, \frac{\pi}{2}\right]}  h_{\theta_0}(\omega).   \label{E:divide}
\end{align}
The significance of the function $h_{\theta_0}(\omega)$ is that for any $\theta_0>0$,
\begin{equation}\label{eq:h(0)}
h_{\theta_0 }\left(\frac{\pi}{2}\right)=\frac{1}{2}\log \left( \frac{2P+N}{P+N} \right),
\end{equation}
 and $h_{\theta_0 }(\omega)$ is increasing at $\omega=\frac{\pi}{2}$, or more precisely,
$$h'_{\theta_0 }\left(\frac{\pi}{2}\right)= \frac{P }{(2P+N)\ln 2 }>0.$$ Therefore, as long as $\theta_{0}>0$, which is the case when $C_0$ is finite,
the minimization of $h_{\theta_0}(\omega)$ with respect to $\omega$ in \eqref{E:divide} yields a value strictly smaller than $h_{\theta_0}\left(\frac{\pi}{2}\right)$ in \eqref{eq:h(0)}. This would allow us to conclude that the capacity $C(C_0)$ for any finite $C_0$ is strictly smaller than $\frac{1}{2}\log\left(1+\frac{2P}{N}\right)$.

We now formalize the above argument. Using the definition of the derivative, one obtains
$$h'_{\theta_0 }\left(\frac{\pi}{2}\right)=\lim_{\Delta\to 0} \frac{h_{\theta_0 }\left(\frac{\pi}{2}\right)-h_{\theta_0 }\left(\frac{\pi}{2}-\Delta\right)}{\Delta}.$$
Therefore, there exists a sufficiently small $\Delta_1>0$ such that
$0<\Delta_1<\theta_0$  and
$$\left| \frac{h_{\theta_0 }\left(\frac{\pi}{2}\right)-h_{\theta_0 }\left(\frac{\pi}{2}-\Delta_1\right)}{\Delta_1}-h'_{\theta_0 }\left(\frac{\pi}{2}\right)\right| \leq  \frac{h'_{\theta_0 }\left(\frac{\pi}{2}\right)}{2}.$$
For such $\Delta_1$ we have
\begin{align*}
h_{\theta_0 }\left(\frac{\pi}{2}-\Delta_1\right)&\leq h_{\theta_0 }\left(\frac{\pi}{2}\right)- \frac{\Delta_1 h'_{\theta_0 }\left(\frac{\pi}{2}\right)}{2}\\
&=\frac{1}{2}\log \left( \frac{2P+N}{P+N} \right)-\frac{P\Delta_1 }{2(2P+N)\ln 2 } ,
\end{align*}
which further implies that
\begin{align}
\min_{\omega\in \left(\frac{\pi}{2}-\theta_0, \frac{\pi}{2}\right]}  h_{\theta_0}(\omega)\leq \frac{1}{2}\log \left( \frac{2P+N}{P+N} \right)-\frac{P\Delta_1 }{2(2P+N)\ln 2 } . \label{E:combo2}
\end{align}
Combining \dref{E:divide} and \dref{E:combo2} we obtain that for any finite $C_0$, there exists some $\Delta_1>0$ such that
\begin{align}
C(C_0)\leq \frac{1}{2}\log \left( 1+\frac{2P}{N} \right)-\frac{P\Delta_1 }{2(2P+N)\ln 2 }.
\end{align}
This proves Theorem~\ref{thm:maintheorem}.

\section{Conclusion}

We have proved a new upper bound on the capacity of the Gaussian relay channel and solved a problem  posed by Cover in \cite{cg87}, which has remained open since 1987. The derivation of our upper bound focuses on directly characterizing the tension between information measures of pertinent $n$-letter random variables. In particular, this is done via the following steps:
\begin{itemize}
\item we first use ``typicality'' to translate the information tension problem to a problem regarding the geometry of the typical sets of these $n$-letter random variables;
\item we then use results and tools in the (broadly defined) field of concentration of measure, in particular rearrangement theory, to establish non-trivial geometric properties for these typical sets;
\item we finally use these geometric properties to construct a packing argument, which leads to an inequality between the original $n$-letter information measures.
\end{itemize} 

In contrast, the typical program for proving converses in network information theory focuses on ``single-letterizing'' $n$-letter information measures. This makes it difficult to invoke tools from geometry and concentration of measure, which in retrospect appear well-suited for quantifying information tensions that lie at the hearth of network problems. Indeed, to the best of our knowledge, the use of concentration of measure in information theory has been mostly limited to establishing strong converses for problems whose capacity is already known (c.f., e.g. \cite{Csiszarbook, mono}), and it has been rarely used to derive first-order results, i.e. bounds on the capacity of multi-user networks. Our proof suggests that measure concentration, in particular geometric inequalities and their functional counterparts, can have a bigger role to play in network information theory. It would be interesting to better understand this role and see if the program developed in this paper can be used to prove converses for other open problems in network information theory.

\appendices

\section{Proofs of Extended Isoperimetric Inequalities}\label{A:proofisop}
In this appendix, we prove the extended isoperimetric inequalities on the sphere and on the shell, as stated in Theorems \ref{thm:strongisoperimetrysphere} and \ref{thm:strongisoperimetryshell} respectively. In particular, we will first prove the shell case and then show that the sphere case follows as a corollary.

\subsection{Preliminaries}
We begin with some preliminaries that will be used in the proofs. 
Our main tool for proving Theorems \ref{thm:strongisoperimetrysphere} and \ref{thm:strongisoperimetryshell}  is the symmetric decreasing rearrangement of functions on the sphere, along with a version of the Riesz rearrangement inequality on the sphere due to Baernstein and Taylor \cite{baernstein}.

For any measurable function $f:\mathbb{S}^{m-1} \to \mathbb{R}$ and pole $\mathbf z_0$, the symmetric decreasing rearrangement of $f$ about $\mathbf z_0$ is defined to be the function $f^*:\mathbb{S}^{m-1} \to \mathbb{R}$ such that $f^*(\mathbf y)$ depends only on the angle $\angle(\mathbf y,\mathbf z_0)$, is nonincreasing in $\angle (\mathbf y,\mathbf z_0)$, and has super-level sets of the same Haar measure as $f$, i.e.

$$\mu\big(\{\mathbf y:f^*(\mathbf y)>d\}\big) = \mu\big(\{\mathbf y:f(\mathbf y)>d\}\big)$$
for all $d$. The function $f^*$ is unique up to its value on sets of measure zero.

One important special case is when the function $f=1_A$ is the characteristic function for a subset $A$. The function $1_A$ is just the function such that
$$1_A(\mathbf y) = \begin{cases} 1 & \mathbf y \in A \\ 0 &  \text{otherwise.} \end{cases} $$
In this case, $1_A^*$ is equal to the characteristic function associated with a spherical cap of the same size as $A$. In other words, if $A^*$ is a spherical cap about the pole $\mathbf z_0$ such that $\mu(A^*) = \mu(A)$, then $1_A^* = 1_{A^*}$.

\begin{lemma}[Baernstein and Taylor \cite{baernstein}]\label{thm:riesz}
Let $K$ be a nondecreasing bounded measurable function on the interval $[-1,1]$. Then for all functions $f,g \in L^1(\mathbb{S}^{m-1})$,

\begin{eqnarray*}
& &\int_{\mathbb{S}^{m-1}} \left( \int_{\mathbb{S}^{m-1}} f(\mathbf z)K\left(\langle \mathbf z/R,\mathbf y /R\rangle\right) d\mathbf z \right) g(\mathbf y)d\mathbf{y} \\
& & \leq \int_{\mathbb{S}^{m-1}} \left( \int_{\mathbb{S}^{m-1}} f^*(\mathbf z)K\left(\langle \mathbf z/R, \mathbf y/R \rangle\right) d\mathbf z \right) g^*(\mathbf y)d\mathbf{y}.
\end{eqnarray*}
\end{lemma}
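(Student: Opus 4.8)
The statement is a known theorem of Baernstein and Taylor \cite{baernstein}; here is the route I would take to reprove it. The plan is first to reduce to indicator functions. Since $K$ is bounded, nondecreasing and measurable on $[-1,1]$, one can write $K(u)=K(-1)+\int 1_{\{u>\tau\}}\,d\nu(\tau)$ for a finite nonnegative Borel measure $\nu$ on $(-1,1]$. After truncating $f$ and $g$ (and passing to an $L^1$ limit at the very end) I may assume both are bounded, and then use the layer-cake identities $f=c_f+\int_0^\infty 1_{\{f>s\}}\,ds$ and $g=c_g+\int_0^\infty 1_{\{g>s\}}\,ds$, noting that $\{f^*>s\}$ is exactly the spherical cap $\{f>s\}^*$ about the common pole $\mathbf z_0$ and similarly for $g^*$. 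Expanding the bilinear functional over these representations, every term carrying a constant factor $c_f$ or $c_g$ involves $\int_{\mathbb{S}^{m-1}}K(\langle\mathbf z/R,\mathbf y/R\rangle)\,d\mathbf z$, which is independent of $\mathbf y$ by rotation invariance and is therefore unchanged by rearrangement; so by Tonelli's theorem it suffices to show, for measurable $A,B\subseteq\mathbb{S}^{m-1}$ and each fixed $\alpha\in[0,\pi)$, that $\int_{\mathbb{S}^{m-1}}\int_{\mathbb{S}^{m-1}}1_A(\mathbf z)\,1_{\{\angle(\mathbf z,\mathbf y)\le\alpha\}}\,1_B(\mathbf y)\,d\mathbf z\,d\mathbf y\le\int_{\mathbb{S}^{m-1}}\int_{\mathbb{S}^{m-1}}1_{A^*}(\mathbf z)\,1_{\{\angle(\mathbf z,\mathbf y)\le\alpha\}}\,1_{B^*}(\mathbf y)\,d\mathbf z\,d\mathbf y$, where $A^*,B^*$ are the caps about $\mathbf z_0$ of the same measure as $A,B$.

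The indicator inequality I would then establish by two-point symmetrization (polarization). Fix a hyperplane $H$ through the origin with $\mathbf z_0\in H$, let $\sigma$ be the orthogonal reflection across $H$, let $\mathbf n$ be the unit normal pointing into the half-space $H^+$ containing $\mathbf z_0$, and for a function $\phi$ let $\phi^\sigma$ equal $\max(\phi(\mathbf y),\phi(\sigma\mathbf y))$ on $H^+$ and $\min(\phi(\mathbf y),\phi(\sigma\mathbf y))$ on $H^-$. Writing $\mathbf z=\mathbf z_\parallel+t\mathbf n$ and $\mathbf y=\mathbf y_\parallel+s\mathbf n$ with $\mathbf z_\parallel,\mathbf y_\parallel\in H$, a one-line computation gives $\langle\mathbf z,\mathbf y\rangle-\langle\sigma\mathbf z,\mathbf y\rangle=2ts$, so whenever $\mathbf z,\mathbf y$ lie on the same side of $H$ one has $\langle\mathbf z/R,\mathbf y/R\rangle\ge\langle\sigma\mathbf z/R,\mathbf y/R\rangle$ and hence, $K$ being nondecreasing, $K(\langle\mathbf z/R,\mathbf y/R\rangle)\ge K(\langle\sigma\mathbf z/R,\mathbf y/R\rangle)$; also $\langle\sigma\mathbf z/R,\sigma\mathbf y/R\rangle=\langle\mathbf z/R,\mathbf y/R\rangle$ and $\langle\mathbf z/R,\sigma\mathbf y/R\rangle=\langle\sigma\mathbf z/R,\mathbf y/R\rangle$ since $\sigma$ is orthogonal. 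Grouping the double integral into orbits $\{\mathbf z,\sigma\mathbf z\}\times\{\mathbf y,\sigma\mathbf y\}$ and applying, on each orbit, the elementary four-point inequality $p(a_\vee b_\vee+a_\wedge b_\wedge)+q(a_\vee b_\wedge+a_\wedge b_\vee)\ge p(ab+a'b')+q(ab'+a'b)$ — valid for $p\ge q\ge 0$ and nonnegative $a,a',b,b'$ because the difference of the two sides equals $(p-q)\bigl((a_\vee b_\vee+a_\wedge b_\wedge)-(ab+a'b')\bigr)\ge 0$, with $a_\vee=\max(a,a')$, $a_\wedge=\min(a,a')$, etc. — I obtain $\int\int\phi\,K\,\psi\le\int\int\phi^\sigma\,K\,\psi^\sigma$ for all nonnegative $\phi,\psi$, in particular for $\phi=1_A,\ \psi=1_B$ (polarizing a set means polarizing its indicator, and polarization preserves measure).

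To finish I would pass from polarizations to the rearrangement: there is a single sequence of reflection hyperplanes through $\mathbf z_0$ along which the iterated polarizations of $1_A$ converge in $L^1(\mathbb{S}^{m-1})$ to $1_{A^*}$ and, simultaneously, those of $1_B$ converge to $1_{B^*}$. Since $K$ is bounded, the map $(\phi,\psi)\mapsto\int_{\mathbb{S}^{m-1}}\int_{\mathbb{S}^{m-1}}\phi(\mathbf z)\,K(\langle\mathbf z/R,\mathbf y/R\rangle)\,\psi(\mathbf y)\,d\mathbf z\,d\mathbf y$ is continuous on $L^1\times L^1$, and combining this with the monotonicity above yields the indicator inequality; undoing the layer-cake reductions then gives the theorem. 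The step I expect to be the main obstacle is exactly this last one — verifying that iterated polarizations about a prescribed pole drive a set (indeed two sets at once) to its spherical-cap rearrangement — which is where one must either invoke an approximation theorem for polarizations on the sphere or reproduce its proof; the measure-theoretic bookkeeping of the reduction step (signed, merely integrable $f,g$; merely measurable nondecreasing $K$) is routine but needs care. An alternative to the entire polarization route is to follow Baernstein and Taylor's original argument via subharmonic $*$-functions, which sidesteps the convergence issue at the cost of heavier machinery.
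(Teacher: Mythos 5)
Your proposal cannot be compared against an in-paper argument because the paper does not prove this lemma at all: it is quoted as a known result of Baernstein and Taylor \cite{baernstein}, whose original proof goes through subharmonic $\ast$-functions. What you sketch is the standard alternative route via two-point symmetrization, and it is sound. Your reduction is correct: writing $K$ as a constant plus an integral of indicators of superlevel sets (the countably many discontinuities of a monotone $K$ are harmless because the distribution of $\langle \mathbf z/R,\mathbf y/R\rangle$ is atomless for $m\ge 2$), truncating and layer-caking $f$ and $g$, and observing that all terms involving constants are rearrangement-invariant by rotation invariance of $\int K(\langle \mathbf z/R,\mathbf y/R\rangle)\,d\mathbf z$, does reduce the claim to the two-set, indicator-kernel inequality; this reduction is also what guarantees the kernel values $p,q$ in your four-point inequality are nonnegative, which that inequality genuinely needs (for a general bounded nondecreasing $K$ taking negative values the orbit-wise inequality would not hold without the constant shift). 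The polarization step itself is verified correctly: $\langle\mathbf z,\mathbf y\rangle-\langle\sigma\mathbf z,\mathbf y\rangle=2ts$, the kernel is symmetric under simultaneous reflection, and the elementary four-point inequality with $p\ge q\ge 0$ closes the orbit-wise comparison. The one substantive ingredient you defer --- that a single sequence of polarizations through hyperplanes containing the pole drives two sets simultaneously to their cap rearrangements in $L^1$ --- is exactly where the real work lies if the proof is to be self-contained; it is a genuine theorem in the literature (approximation of symmetrization by polarizations, e.g. Brock--Solynin and later refinements by Van Schaftingen), so invoking it is legitimate, but reproducing it would constitute the bulk of the proof. In short: your route is correct and arguably more elementary than the $\ast$-function machinery behind the cited result, at the price of importing (or reproving) the polarization-approximation theorem, whereas the paper simply outsources the entire lemma to \cite{baernstein}.
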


{

For any $f\in L^1(\mathbb{S}^{m-1})$, define
$$\psi(\mathbf y) = \int_{\mathbb{S}^{m-1}} f(\mathbf z)K\left(\langle \mathbf z/R,\mathbf y /R\rangle\right) d\mathbf z$$
to be the inner integral in Lemma \ref{thm:riesz}. When applying Lemma \ref{thm:riesz} we will use test functions $g$ that are characteristic functions. Let $g=1_C$ where $C = \{\mathbf y:\psi(\mathbf y) > d\}$ for some $d$ (i.e. $C$ is a super-level set of $\psi$). For a fixed measure $\mu(C)$, the left-hand side of the inequality from Lemma \ref{thm:riesz} will be maximized by this choice of $C$. With this choice we have the following equality:
\begin{eqnarray*}
\int_{\mathbb{S}^{m-1}} \psi(\mathbf y) 1_C(\mathbf y)d\mathbf{y} & = & \int_{\mathbb{S}^{m-1}} \psi^*(\mathbf y) 1^*_C(\mathbf y)d\mathbf{y} \\
& = & \int_{C^*} \psi^*(\mathbf{y})d\mathbf {y} .
\end{eqnarray*}
This follows from the layer-cake decomposition for any non-negative and measurable function $\psi$ in that
\begin{align}
\int_{\mathbb{S}^{m-1}} \psi(\mathbf{y}) 1_C(\mathbf{y})d\mathbf{y} & = \int_C \psi(\mathbf{y})d\mathbf{y} \nonumber \\
& = \int_C \int_0^\infty 1_{\{\psi(\mathbf{y})>t\}}dt d\mathbf{y} \nonumber\\
& = \int_0^\infty \int_C 1_{\{\psi(\mathbf y)>t\}}d\mathbf{y} dt \nonumber\\
& = \int_0^\infty \int_{\mathbb{S}^{m-1}} 1_{\{\psi(\mathbf y)>\max(t,d)\}}d\mathbf{y} dt \nonumber\\
& = \int_0^\infty \int_{\mathbb{S}^{m-1}} 1_{\{\psi^*(\mathbf y)>\max(t,d)\}}d\mathbf{y} dt \nonumber\\
& = \int_0^\infty \int_{C^*} 1_{\{\psi^*(\mathbf y)>t\}}d\mathbf{y} dt \nonumber\\
& =  \int_{C^*} \psi^*(\mathbf{y})d\mathbf{y} \; . \label{eq:layer_cake}
\end{align}
Using this equality and our choice for $g$ we will rewrite the inequality from Lemma \ref{thm:riesz} as
\begin{equation}\label{eq:ineq}
\int_{C^*} \psi^*(\mathbf y)d\mathbf{y} \leq \int_{C^*} \bar{\psi}(\mathbf y) d\mathbf{y}
\end{equation}
where
$$\bar{\psi}(\mathbf y) = \int_{\mathbb{S}^{m-1}} f^*(\mathbf z)K\left(\langle \mathbf z/R,\mathbf y/R \rangle\right) d\mathbf{z} \; .$$

Note that both $\psi^*(\mathbf y)$ and $\bar{\psi}(\mathbf y)$ are spherically symmetric. More concretely, they both depend only on the angle $\angle(\mathbf y,\mathbf z_0)$, so in an abuse of notation we will write $\bar{\psi}(\alpha)$ and $\psi^*(\alpha)$ where $\alpha = \angle(\mathbf y,\mathbf z_0)$.

For convenience we will define a measure $\nu$ by
$$d\nu(\phi) = A_{m-2}(R\sin\phi)R d\phi$$
where $A_m(R)$ denotes the Haar measure of the $m$-sphere with radius $R$. We do this so that an integral like
$$\int_{\mathbb{S}^{m-1}} \psi^* d\mathbf{y} = \int_0^\pi \psi^*(\phi)A_{m-2}(R\sin\phi)Rd\phi$$ can be expressed as
$$\int_0^\pi \psi^* d\nu \; .$$

\subsection{Proof of Theorem \ref{thm:strongisoperimetryshell} (The Shell Case)}
Let $A\subseteq \mathbb{L}^m$ be a given subset with effective angle $\theta$. In order to apply Lemma \ref{thm:riesz}, note that
\begin{align*}
 |A \cap \text{ShellCap}(\mathbf y, \omega+\epsilon)| = \int_{\mathbb R^m} 1_{A \cap \text{ShellCap}(\mathbf y, \omega+\epsilon)}(\mathbf z) \; d\mathbf z \\
 =   \int_{\mathbb S^{m-1}} \left(\int_{R_L}^{R_U} \left(\frac{r}{R}\right)^{m-1}1_{A\cap \text{ShellCap}(\mathbf y, \omega+\epsilon)}\left(\frac{r}{R}\mathbf z\right) \; dr \right) d\mathbf z
\end{align*}
by using spherical coordinates, so that if we define
\begin{equation}\label{def:fA}
f_A(\mathbf z) = \int_{R_L}^{R_U} \left(\frac{r}{R}\right)^{m-1}1_A\left(\frac{r}{R}\mathbf z\right) \; dr
\end{equation}
for $A\subseteq \mathbb{L}^m$ and
$$K(\cos \alpha) = \begin{cases} 0 & \omega+\epsilon<\alpha\leq \pi \\ 1 & 0\leq \alpha \leq \omega+\epsilon \end{cases},$$
then
\begin{align*}
\psi(\mathbf y) & = |A \cap \text{ShellCap}(\mathbf y, \omega+\epsilon)| \\
 & = \int_{\mathbb S^{m-1}} f_A(\mathbf z)K(\langle \mathbf z/R, \mathbf y/R\rangle) d\mathbf z  \; .
\end{align*}
Both $\psi$ and $f_A$ can be thought of as functions on the sphere $\mathbb{S}^{m-1}$. Let $\psi^*,f_A^*$ be their respective symmetric decreasing rearrangements about a pole $\mathbf z_0$.
Define
\begin{align*}
\bar\psi(\mathbf y) & = \int_{\mathbb S^{m-1}} f^*_A(\mathbf z)K(\langle \mathbf z/R, \mathbf y/R\rangle) d\mathbf z
\end{align*}
so that by definition we have \eqref{eq:ineq}.

The inequality \eqref{eq:ineq} allows to compare $\psi$ and $\bar\psi$, but we require a way to compare $\psi$ with the function arising from a shell cap of angle $\theta$. Let
$$A' = \text{ShellCap}(\mathbf z_0,\theta)$$
and
$$\bar{\bar\psi}(\mathbf y)  = |A' \cap \text{ShellCap}(\mathbf y, \omega+\epsilon)|  \; .$$
We will show that 
\begin{equation} \label{eq:ineq2}
\int_{C^*} \bar{\psi}(\mathbf y)d\mathbf{y} \leq \int_{C^*} \bar{\bar\psi}(\mathbf y) d\mathbf{y}
\end{equation}
so that along with \eqref{eq:ineq},
\begin{equation} \label{eq:ineq3}
\int_{C^*} \psi^*(\mathbf y)d\mathbf{y} \leq \int_{C^*} \bar{\bar\psi}(\mathbf y) d\mathbf{y} \; .
\end{equation}
To show the inequality \eqref{eq:ineq2} note
\begin{align} \label{eq:construct}
\int_{C^*} & \bar{\psi}(\mathbf y)d\mathbf{y} \nonumber\\
& = \int_{\mathbb{S}^{m-1}}\int_{\mathbb{S}^{m-1}}1_{C^*}(\mathbf y)f_{A}^*(\mathbf z) K(\langle \mathbf z/R, \mathbf y/R\rangle) d\mathbf y d\mathbf z \nonumber \\
& = \int_{\mathbb{S}^{m-1}} f_{A}^*(\mathbf z) \left(  \int_{\mathbb{S}^{m-1}}1_{C^*}(\mathbf y)K(\langle \mathbf z/R, \mathbf y/R\rangle) d\mathbf y \right) d\mathbf z \; .
\end{align}
The term inside the parentheses is the measure of the intersection between the cap $C^*$ centered at $\mathbf z_0$ and a cap of angle $\omega+\epsilon$ centered at $\mathbf z$. This intersection measure is a function only of the angle $\angle(\mathbf z_0,\mathbf z)$ and is nonincreasing in that angle. Consider functions $f:\mathbb{S}^{m-1}\to\mathbb{R}$ with $0\leq f(\mathbf z)\leq \int_{R_L}^{R_U} \left(\frac{r}{R}\right)^{m-1}dr$ and $\int f(\mathbf z)d\mathbf z = |A|$. Both $f_{A}^*$ and $f_{A'}$ satisfy these properties and moreover $f_{A'}$ is extremal in the sense that $f_{A'}(\mathbf z)=\int_{R_L}^{R_U} \left(\frac{r}{R}\right)^{m-1}dr$ when $\angle(\mathbf z_0,\mathbf z)\leq\theta$ and $0$ when $\angle(\mathbf z_0,\mathbf z)>\theta$. Therefore \eqref{eq:construct} is maximized by replacing $f_{A}^*$ with $f_{A'}$, and 
\begin{align*}
\int_{C^*} & \bar{\psi}(\mathbf y)d\mathbf{y} \\
& = \int_{\mathbb{S}^{m-1}} f_{A}^*(\mathbf z) \left(  \int_{\mathbb{S}^{m-1}}1_{C^*}(\mathbf y)K(\langle \mathbf z/R, \mathbf y/R\rangle) d\mathbf y \right) d\mathbf z \nonumber\\
& \leq  \int_{\mathbb{S}^{m-1}} f_{A'}(\mathbf z) \left(  \int_{\mathbb{S}^{m-1}}1_{C^*}(\mathbf y)K(\langle \mathbf z/R, \mathbf y/R\rangle) d\mathbf y \right) d\mathbf z \nonumber \\
& = \int_{C^*} \bar{\bar\psi}(\mathbf y) d\mathbf{y} \; .
\end{align*}

Equipped with \eqref{eq:ineq3}, we are now ready to finish the proof of Theorem~\ref{thm:strongisoperimetryshell}. Proposition \ref{P:measurecon} implies that for any $0 < \epsilon < 1$, there exists an $M(\epsilon)$ such that for $m > M(\epsilon)$ we have
\begin{eqnarray} \label{eq:con1}
\mathbb P \left( \angle(\mathbf z_0,\mathbf Y) \in [\pi/2-\epsilon,\pi/2+\epsilon] \right) & \geq & 1-\frac{\epsilon^2}{2}
\end{eqnarray}
where $\mathbf Y$ is drawn from any rotationally invariant distribution on $\mathbb{L}^m$. Because the random quantity $|A\cap\text{ShellCap}(\mathbf Y,\omega+\epsilon)|$ depends only on the direction of $\mathbf Y$, and not on its magnitude, we can instead consider $\mathbf Y$ to be distributed according to the Haar measure on $\mathbb{S}^{m-1}$.
The constant $M(\epsilon)$ is determined only by the concentration of measure phenomenon cited above, and it does not depend on any parameters in the problem other than $\epsilon$. From now on, let us restrict our attention to dimensions $m > M(\epsilon)$. Due to the triangle inequality for the geodesic metric, for $\mathbf y$ such that $\angle(\mathbf z_0,\mathbf y) \in [\pi/2-\epsilon,\pi/2+\epsilon]$ we have
$$
A' \cap \text{ShellCap}(\mathbf y_0, \omega) \subseteq A' \cap \text{ShellCap}(\mathbf y,\omega+\epsilon)
$$
where $\mathbf y_0$ is such that $\angle(\mathbf z_0,\mathbf y_0)=\pi/2$. Therefore,
\begin{equation} \label{eq:greaterV}
 \bar{\bar{\psi}}(\angle(\mathbf z_0,\mathbf y)) = |A' \cap \text{ShellCap}(\mathbf y, \omega + \epsilon)| \geq V
\end{equation}
for all for $\mathbf y$ such that $\angle(\mathbf z_0,\mathbf y) \in [\pi/2-\epsilon,\pi/2+\epsilon]$ and

\begin{eqnarray}
\mathbb P\left( \bar{ \bar{\psi}}(\mathbf{Y})\geq V\right) & = &
\mathbb P\left( |A'\cap \text{ShellCap}(\mathbf Y,\omega+\epsilon)| \geq V \right) \nonumber \\
& \geq & 1-\frac{\epsilon^2}{2} \nonumber\\
& \geq & 1 - \frac{\epsilon}{2} \label{eq:con2} \; .
\end{eqnarray}

To prove the theorem, we need to show that
\begin{align} \label{eq:lem31goal1}
\mathbb P &\left(  \psi(\mathbf Y) >  (1-\epsilon)V \right) \nonumber\\
& = \mathbb P\left( |A\cap \text{ShellCap}(\mathbf Y,\omega+\epsilon)|  > (1-\epsilon)V \right)\nonumber \\
& \geq 1-\epsilon
\end{align}
for any arbitrary set $A\subseteq \mathbb{L}^m$. Recall that by the definition of a decreasing symmetric rearrangement, we have
\begin{equation*}
\mathbb P\left(  \psi^*(\mathbf Y)>d \right) = \mathbb P\left( \psi(\mathbf Y)>d \right)
\end{equation*}
for any threshold $d$ and this implies
\begin{equation} \label{eq:levelseteq}
\mathbb P\left(  \psi^*(\mathbf Y) \leq (1-\epsilon)V \right) = \mathbb P\left( \psi(\mathbf Y) \leq (1-\epsilon)V \right) \; .
\end{equation}
 Therefore, the desired statement in \eqref{eq:lem31goal1} can be equivalently written as
\begin{equation} \label{eq:lem31step1}
\mathbb P\left(  \psi^*(\mathbf Y) \leq (1-\epsilon)V \right) \leq \epsilon.
\end{equation}

{Turning to proving \eqref{eq:lem31step1}, recall that by the definition  of a decreasing symmetric rearrangement, $\psi^*(\alpha)$ is nonincreasing in the angle $\alpha=\angle(\mathbf y,\mathbf z_0)$ over the interval $0\leq \alpha \leq \pi$. Let $\beta$ be the smallest value such that $\psi^*(\beta) = (1-\epsilon)V$, or more explicitly, $$\beta = \inf\{\alpha : \psi^*(\alpha) \leq (1-\epsilon)V\} \; .$$ If  $\beta \geq \pi/2 + \epsilon$, then \eqref{eq:lem31step1} would follow trivially from \eqref{eq:con1} and the fact that $\psi^*(\alpha)$ would be greater than $(1-\epsilon)V$ for all $0<\alpha<\pi/2 + \epsilon$. We will therefore assume that $0< \beta < \pi/2 + \epsilon$. It remains to show that even if this is the case, we have \eqref{eq:lem31step1}.}

By the definition of $\beta$ and the fact that $\psi^*$ is nonincreasing,
\begin{align}
\mathbb P\left(  \psi^*(\mathbf Y) \leq (1-\epsilon)V \right) & =  \frac{1}{A_{m-1}(R)}\int^\pi_\beta d\nu \nonumber\\
& =  \frac{1}{A_{m-1}(R)}\int^{\max\{\beta,\frac{\pi}{2}-\epsilon\}}_\beta d\nu \nonumber \\
& + \frac{1}{A_{m-1}(R)}\int^{\frac{\pi}{2}+\epsilon}_{\max\{\beta,\frac{\pi}{2}-\epsilon\}} d\nu \nonumber \\
& +  \frac{1}{A_{m-1}(R)}\int^\pi_{\frac{\pi}{2}+\epsilon} d\nu \label{eq:psilessthanV2} \; .
\end{align}
To bound the first and third terms of \eqref{eq:psilessthanV2} note that
\begin{align}
 \frac{1}{A_{m-1}(R)}\int^{\max\{\beta,\frac{\pi}{2}-\epsilon\}}_\beta d\nu + \frac{1}{A_{m-1}(R)}\int^\pi_{\frac{\pi}{2}+\epsilon} d\nu & \leq  \frac{\epsilon^2}{2} \label{eq:conineq0}\\
& \leq  \frac{\epsilon}{2} \label{eq:conineq}
\end{align}
as a consequence of \eqref{eq:con1}. In order to bound the second term, we establish the following chain of (in)equalities which will be justified below.
\begin{align}
\frac{1}{A_{m-1}(R)}\int^\pi_{\frac{\pi}{2}+\epsilon}d\nu & \geq \frac{1}{(1-\epsilon)V A_{m-1}(R)}\int^\pi_{\frac{\pi}{2}+\epsilon}(\psi^*-\bar{\bar{\psi}})d\nu \label{eq:step1}\\
& =  \frac{1}{(1-\epsilon)V A_{m-1}(R)}\int_0^{\frac{\pi}{2}+\epsilon}(\bar{\bar{\psi}}-\psi^*)d\nu \label{eq:step2}\\
& \geq  \frac{1}{(1-\epsilon)V A_{m-1}(R)}\int_{\beta}^{\frac{\pi}{2}+\epsilon}(\bar{\bar{\psi}}-\psi^*)d\nu \label{eq:step3}\\
& \geq  \frac{\epsilon}{(1-\epsilon)A_{m-1}(R)}\int_{\max\{\beta,\frac{\pi}{2}-\epsilon\}}^{\frac{\pi}{2}+\epsilon}d\nu \label{eq:step4} \\
& \geq \frac{\epsilon}{A_{m-1}(R)}\int_{\max\{\beta,\frac{\pi}{2}-\epsilon\}}^{\frac{\pi}{2}+\epsilon}d\nu \label{eq:step5}
\end{align}
Combining \eqref{eq:step5} with \eqref{eq:conineq0} reveals that the second term in \eqref{eq:psilessthanV2} is also bounded by $\epsilon/2$, therefore
$$\mathbb P\left(  \psi^*(\mathbf Y) \leq (1-\epsilon)V \right)$$
must be bounded by $\epsilon$, which proves Theorem \ref{thm:strongisoperimetryshell}.

The first inequality \eqref{eq:step1} is a consequence of the fact that over the range of the integral, $\psi^*$ is less than or equal to $(1-\epsilon)V$ and $\bar{\bar{\psi}}$ is non-negative. The equality in \eqref{eq:step2} follows from
$$\int_0^\pi \psi^*d\nu = \int_0^\pi \bar{\bar{\psi}}d\nu \; ,$$
which is itself a consequence of \eqref{eq:layer_cake} with $C = \mathbb{S}^{m-1}$ and
\begin{align}
\int_{\mathbb{S}^{m-1}} \psi(\mathbf y) d\mathbf y & =\int_{\mathbb{S}^{m-1}} \int_{\mathbb{S}^{m-1}} f_A(\mathbf z)  K(\langle\mathbf z/R, \mathbf y/R\rangle)d\mathbf zd\mathbf y \nonumber\\
& = \int \int K(\langle \mathbf y/R,\mathbf z/R\rangle)d\mathbf y f_A(\mathbf z)d\mathbf z \nonumber\\
& =  \int \mu(\text{Cap}(\mathbf y,\omega)) f_A(\mathbf z) d\mathbf z \nonumber\\
& =  \mu(\text{Cap}(\mathbf y,\omega))|A| \nonumber\\
& =  \int \mu(\text{Cap}(\mathbf y,\omega)) f_{A'}(\mathbf z) d\mathbf z\nonumber \\
& =  \int \int f_{A'}(\mathbf z)K(\langle\mathbf z/R,\mathbf y/R \rangle) d\mathbf zd\mathbf y \nonumber\\ 
&= \int_{\mathbb{S}^{m-1}} \bar{\bar{\psi}}(\mathbf y) d\mathbf y \label{eq:totalmeas}
 \; .
\end{align}
Next we have \eqref{eq:step3} which is due to the rearrangement inequality \eqref{eq:ineq3} when $C$ is the super-level set $\{{\mathbf y} : \psi(\mathbf y) > (1-\epsilon)V\}$. By the definition of a symmetric decreasing rearrangement, $\mu(\{{\mathbf y} : \psi(\mathbf y) > (1-\epsilon)V\}) = \mu(\{{\mathbf y} : \psi^*(\mathbf y) > (1-\epsilon)V\})$, and the set on the right-hand side is an open or closed spherical cap of angle $\beta$. Thus $C^*$ is a spherical cap with angle $\beta$ and the rearrangement inequality \eqref{eq:ineq3} gives $$\int_0^\beta \psi^* d\nu \leq \int_0^\beta \bar{\bar\psi} d\nu \; .$$
Finally, for the inequality \eqref{eq:step4}, we first replace the lower integral limit with $\max\{\beta,\pi/2-\epsilon\} \geq \beta$. Then $\bar{\bar{\psi}} \geq V$ over the range of the integral due to \eqref{eq:greaterV}.  Additionally, $\psi^* \leq (1-\epsilon)V$ over the range of the integral, and the inequality follows.

\subsection{Proof of Theorem \ref{thm:strongisoperimetrysphere} (The Sphere Case)} 
Given any $A \subseteq \mathbb{S}^{m-1}$ with effective angle $\theta > 0$, construct a corresponding
$$A_\text{shell} = \left\{ {\bf y} \in \mathbb{L}^m : \; R\frac{{\bf y}}{\|\bf{y}\|} \in A \right\} \; .$$
The set $A_\text{shell}$ also has effective angle $\theta$ as a subset of $\mathbb{L}^m$ since
\begin{align*}
|A_\text{shell}| &= \int_{\mathbb R^m} 1_{A_\text{shell}} (\mathbf z) \; d\mathbf z \\
& =   \int_{\mathbb S^{m-1}} \left(\int_{R_L}^{R_U} \left(\frac{r}{R}\right)^{m-1}1_{A_\text{shell}}\left(\frac{r\mathbf z}{R}\right) \; dr \right) d\mathbf z \\
& =   \int_{\mathbb S^{m-1}} 1_{A}(\mathbf z)d\mathbf z \int_{R_L}^{R_U} \left(\frac{r}{R}\right)^{m-1}dr \\
& = \mu(A) \int_{R_L}^{R_U} \left(\frac{r}{R}\right)^{m-1}dr \\
& = \mu(\text{Cap}(\mathbf y,\theta)) \int_{R_L}^{R_U} \left(\frac{r}{R}\right)^{m-1}dr \\
& = \int_{\mathbb R^m} 1_{\text{ShellCap}(\mathbf y,\theta)} (\mathbf z) \; d\mathbf z \\
& = |\text{ShellCap}(\mathbf y,\theta)| \; .
\end{align*}
For any $\epsilon >0$, we can apply Theorem \ref{thm:strongisoperimetryshell} to find an $M(\epsilon)$ such that for $m>M(\epsilon)$,
\begin{equation} \label{eq:cor21_1}
\mathbb{P}\left(|A_\text{shell}\cap \text{ShellCap}(\mathbf Y,\omega+\epsilon)|> (1-\epsilon)V_\text{shell}\right)\geq 1-\epsilon \; ,
\end{equation}
where $V_\text{shell} =|\text{ShellCap}(\mathbf z_0, \theta) \cap \text{ShellCap}(\mathbf y_0, \omega)|$ with $\angle(\mathbf z_0,\mathbf y_0)=\pi/2$. Because the set $\text{ShellCap}(\mathbf y, \omega)$ depends only on the direction of $\mathbf y$, and not on its magnitude, the probability in \eqref{eq:cor21_1} is the same whether we consider $\mathbf Y$ to be uniformly distributed on $\mathbb{S}^{m-1}$ or from some rotationally invariant probability distribution on $\mathbb{L}^m$. Using spherical coordinates, we have
\begin{align*}
 & |A_\text{shell} \cap \text{ShellCap}(\mathbf y, \omega+\epsilon)| \\
& = \int_{\mathbb R^m} 1_{A_\text{shell} \cap \text{ShellCap}(\mathbf y, \omega+\epsilon)}(\mathbf z) \; d\mathbf z \\
& =   \int_{\mathbb S^{m-1}} \left(\int_{R_L}^{R_U} \left(\frac{r}{R}\right)^{m-1}1_{A_\text{shell}\cap \text{ShellCap}(\mathbf y, \omega+\epsilon)}\left(\frac{r\mathbf z}{R}\right) \; dr \right) d\mathbf z \\
& =   \int_{\mathbb S^{m-1}} 1_{A \cap \text{Cap}(\mathbf y, \omega+\epsilon)}(\mathbf z)d\mathbf z \int_{R_L}^{R_U} \left(\frac{r}{R}\right)^{m-1}dr \\
& =   \mu(A \cap \text{Cap}(\mathbf y, \omega+\epsilon)) \int_{R_L}^{R_U} \left(\frac{r}{R}\right)^{m-1}dr
\end{align*}
and similarly,
\begin{align*}
& |\text{ShellCap}(\mathbf z_0, \theta) \cap \text{ShellCap}(\mathbf y_0, \omega)| \\
& =\int_{\mathbb S^{m-1}} 1_{\text{Cap}(\mathbf z_0, \theta) \cap \text{Cap}(\mathbf y_0, \omega)}(\mathbf z) d\mathbf z  \int_{R_L}^{R_U} \left(\frac{r}{R}\right)^{m-1}dr  \\
& = \mu(\text{Cap}(\mathbf z_0, \theta) \cap \text{Cap}(\mathbf y_0, \omega)) \int_{R_L}^{R_U} \left(\frac{r}{R}\right)^{m-1}dr\; .
\end{align*}
By dividing out the $\int_{R_L}^{R_U} \left(\frac{r}{R}\right)^{m-1}dr$ term, \eqref{eq:cor21_1} implies
\begin{equation}
\mathbb{P}\left(\mu(A \cap \text{Cap}(\mathbf Y,\omega+\epsilon))> (1-\epsilon)V\right)\geq 1-\epsilon
\end{equation}
where $V=\mu(\text{Cap}(\mathbf z_0, \theta) \cap \text{Cap}(\mathbf y_0, \omega))$ as desired.

\section{Proofs of Typicality Lemmas}\label{A:Proof_Typicality}
Here we prove the typicality lemmas presented in Section \ref{sec:typicality}.

\subsection{Proof of Lemma \ref{L:E1}}\label{SS:typicalityproof1}
Recalling that $\mathbf Z=[Z^n(1),Z^n(2),\ldots,Z^n(B)]$, we have
$$\|\mathbf Z \|^2 =\sum_{b=1}^{B} \|Z^n(b)\|^2.$$
Therefore by the weak law of large numbers, for any $\delta>0$ and $B$ sufficiently large we have
\begin{align*}
\text{Pr}\left(  \left|\frac{1}{B} \|\mathbf Z \|^2 - E[  \|Z^n \|^2] \right|\leq \delta  \right)\geq 1-\delta,
\end{align*}
i.e.,
\begin{align*}
\text{Pr}( \| \mathbf Z\|^2 \in [nB(P+N- \delta), nB(P+N+ \delta)] )\geq 1- \delta,
\end{align*}
since by assumption $E[  \|X^n \|^2]=nP$ and thus $E[\|Z^n\|^2]=n(P+N)$. Because $\mathbf Z$ and $\mathbf Y$ are identically distributed, the above relation also holds with $\| \mathbf Z\|^2$ replaced by $\| \mathbf Y\|^2$. This completes the proof of the lemma.

\subsection{Proof of Lemma \ref{L:prob_s(x,i)}}\label{SS:ProofStrongTyp}

We now present the proof of Lemma \ref{L:prob_s(x,i)}. By the law of large numbers and Lemma \ref{L:E1}, we have
for any $\epsilon >0$ and sufficiently large $B$,
\begin{align*}
\mbox{Pr}((\mathbf{X},\mathbf{Z})\in  S_{\epsilon}(X^n,Z^n))\geq 1-\epsilon
\end{align*}
where
\begin{align*}
  &S_{\epsilon}(X^n,Z^n)\\
 &:=\Big \{(\mathbf{x},\mathbf{z} ):   
  \|\mathbf x-\mathbf z\| \in [\sqrt{nB(N-\epsilon)} ,  \sqrt{nB(N+\epsilon)}  ],\\
 &~~~~~~~ \mathbf z \in \text{Ball}\left( \mathbf 0, \sqrt{nB(P+N+\epsilon)}  \right),\\
 &~~~~~~~ 2^{ nB(  \log \text{sin} \theta_n-\epsilon)}\leq p( f(\mathbf{z})|\mathbf{x})\leq 2^{ nB( \log \text{sin} \theta_n+\epsilon)}
 \Big \}.
\end{align*}

Note that in terms of $S_{\epsilon}(X^n,Z^n)$, the set $S_{\epsilon}(Z^n|\mathbf{x},\mathbf{i})$ in Lemma \ref{L:prob_s(x,i)} can be simply written as
 $$S_{\epsilon}(Z^n|\mathbf{x},\mathbf{i})=\{ \mathbf{z}:  f(\mathbf{z})=\mathbf{i}, (\mathbf{x},\mathbf{z})\in S_{\epsilon}(X^n,Z^n)   \}.$$
Therefore, for $B$ sufficiently large, we have
\begin{align*}
 \mbox{Pr}( \mathbf{Z} \notin S_{\epsilon}(Z^n|\mathbf{X},\mathbf{I})   )&=\mbox{Pr}( f(\mathbf{Z})=\mathbf{I},  (\mathbf{X},\mathbf{Z})\notin  S(X^n,Z^n)    )\\
&\leq  {\epsilon}.
\end{align*}
On the other hand, defining $S_{\epsilon}(X^n,I_n):= \{(\mathbf{x},\mathbf{i}): \mbox{Pr}(\mathbf{Z} \in S_{\epsilon}(Z^n|\mathbf{x},\mathbf{i}) | \mathbf{x},\mathbf{i}   )\geq 1-\sqrt{\epsilon}  \}$, we have
\begin{align*}
& \mbox{Pr}( \mathbf{Z} \notin S_{\epsilon}(Z^n|\mathbf{X},\mathbf{I})    )\\
&=\sum_{(\mathbf{x},\mathbf{i})\in S_{\epsilon}(X^n,I_n)}\mbox{Pr}(\mathbf{Z} \notin S_{\epsilon}(Z^n|\mathbf{x},\mathbf{i}) | \mathbf{x},\mathbf{i}   ) p(\mathbf{x},\mathbf{i})\\
&~~~~+\sum_{(\mathbf{x},\mathbf{i})\notin S_{\epsilon}(X^n,I_n)}\mbox{Pr}(\mathbf{Z} \notin S_{\epsilon}(Z^n|\mathbf{x},\mathbf{i}) | \mathbf{x},\mathbf{i}   ) p(\mathbf{x},\mathbf{i})\\
&\geq \sqrt{\epsilon}\cdot \mbox{Pr}( S_{\epsilon}^c (X^n,I_n)  ).
\end{align*}
Therefore, we have for $B$ sufficiently large,
$$\mbox{Pr}( S_{\epsilon}^c (X^n,I_n)  ) \leq  \frac{\epsilon}{\sqrt{\epsilon}}=  \sqrt{\epsilon},$$
and thus $$\mbox{Pr}(S_{\epsilon}(X^n,I_n))\geq 1- \sqrt{\epsilon},$$
which proves \dref{E:part1}.

To prove \dref{E:part2}, consider any $(\mathbf{x},\mathbf{i}) \in S_{\epsilon}(X^n,I_n)$. From the definition of $S_{\epsilon}(X^n,I_n)$,
$\mbox{Pr}(S_{\epsilon}(Z^n|\mathbf{x},\mathbf{i})|\mathbf{x},\mathbf{i})\geq 1-\sqrt{\epsilon}$. Therefore,
$S_{\epsilon}(Z^n|\mathbf{x},\mathbf{i})$ must be nonempty, i.e., there exists at least one $\mathbf{z}\in S_{\epsilon}(Z^n|\mathbf{x},\mathbf{i})$.
Consider any $\mathbf{z}\in S_{\epsilon}(Z^n|\mathbf{x},\mathbf{i})$. By the definition of $S_{\epsilon}(Z^n|\mathbf{x},\mathbf{i})$,
 we have $f(\mathbf{z})=\mathbf{i}$ and $(\mathbf{x},\mathbf{z})\in S_{\epsilon}(X^n,Z^n)$.  Then, it follows from the definition of $S_{\epsilon}(X^n,Z^n)$ that
$$2^{ nB(  \log \text{sin} \theta_n-\epsilon)}\leq p( f(\mathbf{z})|\mathbf{x})=p(  \mathbf{i} |\mathbf{x})\leq 2^{ nB( \log \text{sin} \theta_n+\epsilon)}.$$
This further implies that
\begin{align*}
 &\mbox{Pr}(\mathbf{Z}\in S_{\epsilon}(Z^n|\mathbf{x},\mathbf{i})|\mathbf{x})\\
=\ &\frac{\mbox{Pr}(f(\mathbf{Z})=\mathbf{i}|\mathbf{x})\mbox{Pr}(\mathbf{Z}\in S_{\epsilon}(Z^n|\mathbf{x},\mathbf{i})|\mathbf{x},f(\mathbf{Z})=\mathbf{i})}
             {\mbox{Pr}(f(\mathbf{Z})=\mathbf{i}|\mathbf{Z}\in S_{\epsilon}(Z^n|\mathbf{x},\mathbf{i}),\mathbf{x})}\\
=\ &p( \mathbf{i}|\mathbf{x}) \mbox{Pr}(S_{\epsilon}(Z^n|\mathbf{x},\mathbf{i})|\mathbf{x}, \mathbf{i})\\
\geq \ & 2^{ nB(  \log \text{sin} \theta_n-\epsilon)} (1-\sqrt{\epsilon})\\
\geq\ &2^{ nB(  \log \text{sin} \theta_n-2\epsilon)}
\end{align*}
for sufficiently large $B$, which concludes the proof of \dref{E:part2} and Lemma \ref{L:prob_s(x,i)}.

\subsection{Proof of Corollary \ref{C:isoperi}}\label{SS:ProofCor}
Let the effective angle of $A$ be denoted by $\theta'$, i.e.,
$$|A|=|\text{ShellCap}(\mathbf z_0,\theta')|$$ for some  $$\mathbf z_0\in  \mbox{Shell}\left(\mathbf 0,\sqrt{m(N-\epsilon)}  ,  \sqrt{m(N+\epsilon)}  \right),$$  where 
\begin{align*}
&\text{ShellCap}(\mathbf z_0,\theta')\\
&:=\bigg\{\mathbf z\in \mbox{Shell}(\mathbf 0,\sqrt{m(N-\epsilon)}  ,  \sqrt{m(N+\epsilon)}  ):\\
&~~~~~~~~~~~~~~~~~  \angle(\mathbf z_0,\mathbf z)\leq \theta' \bigg\}.
\end{align*}
Then using the formula for the volume of a shell cap (c.f. Appendix \ref{AS:singlecap} and in particular \dref{AE:shellcapvol}), we have
\begin{align*}
 |A| \leq 2^{\frac{m}{2}[\log (2\pi e (N+\epsilon)\text{sin}^2  \theta' )+\epsilon_1 ]}
\end{align*}
for some $\epsilon_1 \to 0$ as $m\to \infty$. Recall that by assumption 
\begin{align*}
|A| \geq  2^{\frac{m}{2}[\log (2\pi e (N+\epsilon)\text{sin}^2  \theta )  ]} ,
\end{align*}
and we hence have
\begin{align*}
\theta'\geq \theta - \epsilon_2
\end{align*}
for some $\epsilon_2 \to 0$ as $m\to \infty$.

We now apply Theorem~\ref{thm:strongisoperimetryshell} to this specific shell and subset $A$. First, using the formula of the intersection volume of two shell caps (c.f. Appendices \ref{AS:twocap} and in particular Lemma \ref{L:volumeintersectionshellcap}), we have 
\begin{align*}
&|\text{ShellCap}(\mathbf z_0, \theta') \cap \text{ShellCap}(\mathbf y_0, \omega)|\\
&\geq 2^{\frac{m}{2}[\log(2\pi e N (\text{sin}^2 \theta'  - \cos^2 \omega))-\epsilon_3]} \\
&\geq 2^{\frac{m}{2}[\log(2\pi e N  (\text{sin}^2 \theta   - \cos^2 \omega))-\epsilon_4]}
\end{align*}
for some $\epsilon_3, \epsilon_4 \to 0$ as $m\to \infty$, where $\angle(\mathbf z_0,\mathbf y_0)=\pi/2$
and $\theta'+\omega>\pi/2$. Then Theorem~\ref{thm:strongisoperimetryshell}  asserts that for any $\omega\in (\pi/2-\theta',\pi/2]$ and $m$ sufficiently large,
\begin{align*}
&\text{Pr}\Big(|A\cap \text{ShellCap}(\mathbf Y,\omega+\epsilon)| \\
&~~~~~\geq (1-\epsilon)2^{\frac{m}{2}[\log(2\pi eN(\text{sin}^2 \theta   - \cos^2 \omega))-\epsilon_4]}\Big)\geq 1-\epsilon,
\end{align*}
where $\mathbf Y$ is a random vector drawn from any rotationally invariant distribution on the shell.  Since 
$\pi/2-\theta'\leq \pi/2-\theta+\epsilon_2$, the condition $\omega\in (\pi/2-\theta',\pi/2]$ in the above can be replaced with the weaker condition $\omega\in (\pi/2-\theta+\epsilon_2,\pi/2]$. Now by choosing $m$ sufficiently large we can make $\epsilon_2,\epsilon_4$ and $\frac{2}{m}\log(1-\epsilon)$ as small as desired, so we have
\begin{align*}
&\text{Pr}\left(|A\cap \text{ShellCap}(\mathbf Y,\omega+\epsilon)|\geq 2^{\frac{m}{2}[\log(2\pi eN(\text{sin}^2 \theta   - \cos^2 \omega))-\epsilon]}\right)\\
&\geq 1-\epsilon,
\end{align*}
for any $\omega\in (\pi/2-\theta,\pi/2]$ and $m$ sufficiently large.  Finally, observe that for any $\mathbf y$ in the considered shell,
\begin{align*} 
&\text{ShellCap}(\mathbf y, \omega+\epsilon)\\
&\subseteq \text{Ball}\left(\mathbf y,  2\sqrt{m(N+\epsilon)} \sin \frac{\omega+\epsilon}{2} +2\sqrt{m\epsilon}\right).
\end{align*} 
This simply follows from the geometry illustrated in  Fig.~\ref{F:capinball} combined with the triangle inequality and the fact that the thickness of the shell can be trivially bounded by $2\sqrt{m\epsilon}$. Therefore, we can conclude that 
\begin{align*}
\text{Pr}\Bigg(& \left|A\cap \text{Ball}\left(\mathbf Y,  2\sqrt{m(N+\epsilon)} \sin \frac{\omega+\epsilon}{2} +2\sqrt{m\epsilon} \right)\right|\\
&\geq 2^{\frac{m}{2}[\log(2\pi eN(\text{sin}^2 \theta   - \cos^2 \omega))-\epsilon]}\Bigg)\geq 1-\epsilon
\end{align*}
for any $\omega\in (\pi/2-\theta,\pi/2]$ and $m$ sufficiently large. This completes the proof of  Corollary \ref{C:isoperi}.

\subsection{Proof of Lemma \ref{L:Keylemma}}\label{SS:appendixproofkeylemma}

Fix $\epsilon>0$ and consider a pair $(\mathbf x, \mathbf i)\in S_{\epsilon}(X^n,I_n)$.  From Lemma \ref{L:prob_s(x,i)}, we have 
$$\mbox{Pr}(\mathbf Z \in S_{\epsilon}(Z^n| \mathbf{x},\mathbf{i})|\mathbf{x})\geq 2^{ nB(  \log \text{sin} \theta_n-2\epsilon)},$$
for $B$ sufficiently large. We also have
\begin{align*}
\mbox{Pr}(\mathbf Z \in S_{\epsilon}(Z^n| \mathbf{x},\mathbf{i})|\mathbf{x})&\leq | S_{\epsilon}(Z^n| \mathbf{x},\mathbf{i})|\sup_{\mathbf z \in  S_{\epsilon}(Z^n| \mathbf{x},\mathbf{i})}p(\mathbf z|\mathbf x)\\
&\leq |  S_{\epsilon}(Z^n| \mathbf{x},\mathbf{i})| 2^{-nB\left(\frac{1}{2}\log 2\pi eN -\epsilon_1\right)},
\end{align*}
for some $\epsilon_1\to 0$ as $\epsilon\to 0$, where $p(\mathbf z|\mathbf x)$ refers to the conditional density of $\mathbf z$ given $\mathbf x$. The second inequality in the above follows because for any $\mathbf z \in  S_{\epsilon}(Z^n| \mathbf{x},\mathbf{i})$, we have
$$ { \|\mathbf x-\mathbf z\| \in [\sqrt{nB(N-\epsilon)} ,  \sqrt{nB(N+\epsilon)}  ]},$$
and therefore using the fact that $\mathbf Z$ is Gaussian distributed given $\mathbf x$, we have for any $\mathbf z \in  S_{\epsilon}(Z^n| \mathbf{x},\mathbf{i})$,
\begin{align*}
p(\mathbf z| \mathbf x)&=  \frac{1}{ (2\pi N)^{ \frac{nB}{2} } } e^{ -\frac{|| \mathbf z- \mathbf x  ||^2     }{2N}  } \\
&\leq    2^{ -\frac{     nB (N-\epsilon)     }{2N}  \log e  -\frac{nB}{2}\log 2\pi N   } \\
&=   2^{-nB\left(\frac{1}{2}\log 2\pi eN -\epsilon_1\right)}
\end{align*}
where $\epsilon_1\to 0$ as $\epsilon\to 0$. Therefore, for $B$ sufficiently large, the volume of $S_{\epsilon}(Z^n| \mathbf{x},\mathbf{i})$ can be lower bounded by
$$| S_{\epsilon}(Z^n| \mathbf{x},\mathbf{i})|\geq 2^{nB\left(\frac{1}{2}\log(2\pi eN\text{sin} ^2 \theta_n )-2\epsilon-\epsilon_1\right)}.$$

 Let $\theta'_n$ be defined such that
$$\log 2\pi e (N+\epsilon)  \text{sin}^2 \theta'_n=\frac{1}{2}\log(2\pi eN\text{sin} ^2 \theta_n )-2\epsilon-\epsilon_1.$$
Obviously, we have $\theta'_n\leq \theta_n$ and $\theta'_n\to \theta_n$ as $\epsilon\to 0$.
Noting that $ S_{\epsilon}(Z^n| \mathbf{x},\mathbf{i})$ is a subset of $$ \mbox{Shell}\left(\mathbf x,\sqrt{nB(N-\epsilon)} ,  \sqrt{nB(N+\epsilon)} \right),$$
by Corollary \ref{C:isoperi}, for any $\omega\in (\pi/2-\theta'_n, \pi/2]$ we have 
\begin{align}
\mbox{Pr}\Bigg( & \left|S_{\epsilon}(Z^n| \mathbf{x},\mathbf{i})\cap \text{Ball}\left(\mathbf U, \sqrt{nBN\left(4\text{sin}^2 \frac{\omega}{2}+\epsilon_2\right)}\right) \right| \nonumber \\
& \geq 2^{nB\left[\frac{1}{2}\log(2\pi eN(\text{sin} ^2 \theta_n - \cos^2 \omega))-\epsilon_3\right]}   \Bigg)\geq 1-\epsilon \label{E: transprob}
\end{align}
for any $\mathbf U$ drawn from a rotationally invariant distribution around $\mathbf x$ on $ \mbox{Shell}\left(\mathbf x,\sqrt{nB(N-\epsilon)} ,  \sqrt{nB(N+\epsilon)} \right)$, where $\epsilon_2$ is defined such that
$$\sqrt{nBN\left(4\text{sin}^2 \frac{\omega}{2}+\epsilon_2\right)}=2\sqrt{nB(N+\epsilon)} \sin \frac{\omega+\epsilon}{2}+2\sqrt{m\epsilon},$$
and $\epsilon_3$ is defined such that
\begin{align*}
&\frac{1}{2}\log(2\pi eN(\text{sin} ^2 \theta_n - \cos^2 \omega))-\epsilon_3\\
=\ &  \frac{1}{2}\log(2\pi eN(\text{sin} ^2 \theta'_n - \cos^2 \omega))-\epsilon,
\end{align*}
and both $\epsilon_2$ and $\epsilon_3$ tend to zero as $\epsilon$ goes to zero.

We now translate the bound \dref{E: transprob} on the probability involving a rotationally invariantly distributed $\mathbf U$ on the shell to a bound on the probability involving $\mathbf Y$. Define $\mathcal Y_{(\mathbf x,\mathbf i)}$ to be the following set of $\mathbf y$: 
\begin{align*}
  \Bigg\{ \mathbf y:
&  \left|S_{\epsilon}(Z^n| \mathbf{x},\mathbf{i})\cap \text{Ball}\left(\mathbf y, \sqrt{nBN\left(4\text{sin}^2 \frac{\omega}{2}+\epsilon_2\right)}\right) \right|   \\
&  \geq 2^{nB\left[\frac{1}{2}\log(2\pi eN(\text{sin} ^2 \theta_n - \cos^2 \omega))-\epsilon_3\right]} \Bigg\}  .
\end{align*}
Then we have for $(\mathbf x, \mathbf i)\in S_{\epsilon}(X^n,I_n)$ and $B$ sufficiently large,
\begin{align*}
&\mbox{Pr}(\mathbf Y\in \mathcal Y_{(\mathbf x,\mathbf i)}|\mathbf x)\\
&\geq \mbox{Pr}\Big(\mathbf Y\in \mathcal Y_{(\mathbf x,\mathbf i)}, \\
&~~~~~~~~~~~~~\mathbf Y\in \mbox{Shell}\left(\mathbf x,\sqrt{nB(N-\epsilon)} ,  \sqrt{nB(N+\epsilon)} \right)\Big|\mathbf x\Big)\\
&=\mbox{Pr}\left(\mathbf Y\in \mbox{Shell}\left(\mathbf x,\sqrt{nB(N-\epsilon)} ,  \sqrt{nB(N+\epsilon)} \right)\Big|\mathbf x\right)\\
&~~~\times \mbox{Pr}\Big(\mathbf Y\in \mathcal Y_{(\mathbf x,\mathbf i)} \Big|\mathbf x, \\
&~~~~~~~~~~~~~~\mathbf Y\in \mbox{Shell}\left(\mathbf x,\sqrt{nB(N-\epsilon)} ,  \sqrt{nB(N+\epsilon)} \right)\Big)\\
&\geq (1-\epsilon) \mbox{Pr}\Big(\mathbf Y\in \mathcal Y_{(\mathbf x,\mathbf i)} \Big|\mathbf x, \\
&~~~~~~~~~~~~~~~~~\mathbf Y\in \mbox{Shell}\left(\mathbf x,\sqrt{nB(N-\epsilon)} ,  \sqrt{nB(N+\epsilon)} \right)\Big)\\
&\geq (1-\epsilon)^2,
\end{align*}
where the second inequality simply follows by applying the law of large numbers in a manner similar to the proof of Lemma  \ref{L:E1}, and the  last inequality follows from combining  \dref{E: transprob}  and the fact that if $\mathbf x$ is known and $\mathbf Y$ is restricted to  $\mbox{Shell}\left(\mathbf x,\sqrt{nB(N-\epsilon)} ,  \sqrt{nB(N+\epsilon)} \right)$ then $\mathbf Y$ is rotationally invariant around $\mathbf x$  on this shell.

Since by definition $S_{\epsilon}(Z^n| \mathbf{x},\mathbf{i})$ is a subset of
$ f^{-1}(\mathbf i) \cap \text{Ball}\left( \mathbf 0, \sqrt{nB(P+N+\epsilon)}  \right),$
we  have
\begin{align*}
& \Bigg|f^{-1}(\mathbf i)   \cap \text{Ball}\left( \mathbf 0, \sqrt{nB(P+N+\epsilon)}  \right)\\
 &~~~~~\cap \text{Ball}\left(\mathbf y, \sqrt{nBN\left(4\text{sin}^2 \frac{\omega}{2}+\epsilon_2\right)}\right)   \Bigg| \\
 &\geq 2^{nB\left[\frac{1}{2}\log(2\pi eN(\text{sin} ^2 \theta_n - \cos^2 \omega))-\epsilon_3\right]}
\end{align*}
for any $\mathbf y\in \mathcal Y_{(\mathbf x,\mathbf i)} $, and therefore  for $B$ sufficiently large,
\begin{align*}
&\mbox{Pr}\Bigg(\Bigg| f^{-1}(\mathbf I) \cap \text{Ball}\left( \mathbf 0, \sqrt{nB(P+N+\epsilon)}  \right)\\
&~~~~~~~~~~~\cap \text{Ball}\left(\mathbf Y, \sqrt{nBN\left(4\text{sin}^2 \frac{\omega}{2}+\epsilon_2\right)}\right)   \Bigg| \\
&~~~~~~\geq 2^{nB\left[\frac{1}{2}\log(2\pi eN(\text{sin} ^2 \theta_n - \cos^2 \omega))-\epsilon_3\right]}\Bigg)\\
&\geq \sum_{(\mathbf x, \mathbf i)}\mbox{Pr}(\mathbf Y\in  \mathcal Y_{(\mathbf x,\mathbf i)} |\mathbf x)p(\mathbf x, \mathbf i)\\
&\geq \sum_{(\mathbf x, \mathbf i)\in S_{\epsilon}(X^n,I_n)}\mbox{Pr}(\mathbf Y\in \mathcal Y_{(\mathbf x,\mathbf i)} |\mathbf x)p(\mathbf x, \mathbf i)\\
&\geq (1-\epsilon)^2(1-\sqrt{\epsilon})\\
&\geq 1-4\sqrt{\epsilon},
\end{align*}
for any $\omega\in (\pi/2-\theta'_n, \pi/2]$.
Finally, choosing $\delta=\max\{ 4\sqrt{\epsilon},\epsilon_2, \epsilon_3, \theta_n-\theta'_n \}$ concludes the proof of Lemma \ref{L:Keylemma}. {Note that by choosing $B$ sufficiently large, $\epsilon$ and therefore $\delta$ can be made arbitrarily small.}

\section{Miscellaneous Results in High-Dimensional Geometry}\label{A:Miscellaneous}

This appendix derives some miscellaneous results in high-dimensional geometry, including the surface area (volume) of a spherical (shell) cap, the surface area (volume) of the intersection of two spherical (shell) caps, and the volume of the intersection of two balls.

\subsection{Surface Area (Volume) of A Spherical (Shell) Cap}\label{AS:singlecap}
We first derive the surface area (volume) formula for a spherical (shell) cap. See also \cite{singlecap}.

Let $C \subseteq \mathbb{S}^{m-1}$ be a spherical cap with angle $\theta$ on the $(m-1)$-sphere of radius $R=\sqrt{mN}$. The area $\mu(C)$ of $C$ can be written as
$$\mu(C) = \int_0^{\theta} A_{m-2}(R\sin\rho)Rd\rho$$
where $A_{m-2}(R\sin\rho)$ is the total surface area of the $(m-2)$-sphere of radius $R\sin\rho$. Plugging in the expression for the surface area of an $(m-2)$-sphere leads to
\begin{eqnarray*}
\mu(C) = \frac{2\pi^{\frac{m-1}{2}}}{\Gamma\left(\frac{m-1}{2}\right)}(mN)^{\frac{m-2}{2}}\int_0^{\theta}\sin^{m-2}\rho \; d\rho  .
\end{eqnarray*}
We now characterize the exponent of $\mu(C)$. First, by Stirling's approximation, $\frac{2\pi^{\frac{m-1}{2}}}{\Gamma\left(\frac{m-1}{2}\right)}(mN)^{\frac{m-2}{2}}$ in the above can be bounded as
\begin{align}
2^{\frac{m}{2}[\log (2\pi e N)-\epsilon_1]}\leq \frac{2\pi^{\frac{m-1}{2}}}{\Gamma\left(\frac{m-1}{2}\right)}(mN)^{\frac{m-2}{2}} \leq 2^{\frac{m}{2}[\log (2\pi e N)+\epsilon_1]}\label{E:citesphere}
\end{align}
for some $\epsilon_1 \to 0$ as $m\to \infty$. Also for $m$ sufficiently large, we have
\begin{align*}
\int_0^{\theta}\sin^{m-2}\rho  d\rho & =\int_0^{\theta}  2^{\frac{m-2}{2}\log \text{sin}^2 \rho}  d\rho \\
&\geq \int_{\theta-\frac{1}{m}}^{\theta}  2^{\frac{m-2}{2}\log \text{sin}^2 \rho}  d\rho \\
&\geq \frac{1}{m}  2^{\frac{m-2}{2}\log \text{sin}^2 (\theta-\frac{1}{m})}   \\
&\geq   2^{\frac{m}{2}\left(\log \text{sin}^2  \theta -\epsilon_2\right)}   \\
\end{align*}
and
\begin{align*}
\int_0^{\theta}\sin^{m-2}\rho  d\rho & =\int_0^{\theta}  2^{\frac{m-2}{2}\log \text{sin}^2 \rho}  d\rho \\
&\leq \theta\cdot 2^{\frac{m-2}{2}\log \text{sin}^2 \theta}   \\
&\leq   2^{\frac{m}{2}\left(\log \text{sin}^2  \theta +\epsilon_2\right)}   \\
\end{align*}
for some $\epsilon_2\to 0$ as $m\to \infty$. Therefore, the area $\mu(C)$ can be bounded as
\begin{align} \label{eq:appC_1}
2^{\frac{m}{2}[\log (2\pi e N\text{sin}^2  \theta )-\epsilon ]}\leq \mu(C) \leq 2^{\frac{m}{2}[\log (2\pi e N\text{sin}^2  \theta )+\epsilon ]}
\end{align}
for some $\epsilon\to 0$ as $m\to \infty$.

Now suppose that $C = \text{ShellCap}(\mathbf z_0,\theta) $ is a shell cap on
$$\mbox{Shell}\left(\mathbf 0,\sqrt{m(N-\delta)}  ,  \sqrt{m(N+\delta)}  \right)$$
where $\| \mathbf z_0 \|=  \sqrt{m(N-\delta)}$.
Let $R_L = \sqrt{m(N-\delta)}$, $R_U = \sqrt{m(N+\delta)}$ and define $\mathbb{S}_{R_L}^{m-1}$ to be the $m-1$ sphere of radius $R_L$ with Haar measure $\mu_{R_L}$. We use spherical coordinates to integrate over the surface areas of the individual caps that make up the shell cap,
\begin{align} \label{eq:appC_2}
 &|C| = \int_{\mathbb R^m} 1_{\text{ShellCap}(\mathbf z_0,\theta)} \; d\mathbf z \nonumber\\
& =   \int_{\mathbb S^{m-1}_{R_L}} \left(\int_{R_L}^{R_U} \left(\frac{r}{R_L}\right)^{m-1}1_{\text{ShellCap}(\mathbf z_0,\theta)}\left(\frac{r}{R_L}\mathbf z\right) \; dr \right) d\mathbf z \nonumber\\
& =   \int_{\mathbb S^{m-1}_{R_L}} 1_{\text{Cap}(\mathbf z_0,\theta)}(\mathbf z)d\mathbf z \int_{R_L}^{R_U} \left(\frac{r}{R_L}\right)^{m-1}dr \nonumber\\
& =   \mu_{R_L}(\text{Cap}(\mathbf z_0,\theta)) \int_{R_L}^{R_U} \left(\frac{r}{R_L}\right)^{m-1}dr \;
\end{align}
where the integral term on the right is bounded as
\begin{align} \label{eq:appC_3}
\int_{R_L}^{R_U} \left(\frac{r}{R_L}\right)^{m-1}dr & \geq \left(\sqrt{m(N+\delta)}-\sqrt{m(N-\delta)}\right) \; .
\end{align}

Together with \eqref{eq:appC_2}, \eqref{eq:appC_1} and \eqref{eq:appC_3} imply
$$|C| \geq 2^{\frac{m}{2}[\log(2\pi e(N-\delta)\text{sin}^2 \theta)-\epsilon]}$$
for sufficiently large $m$. In a similar way,
$$|C| \leq 2^{\frac{m}{2}[\log(2\pi e(N+\delta)\text{sin}^2 \theta)+\epsilon]} \; ,$$
and therefore
\begin{align}
2^{\frac{m}{2}[\log(2\pi e(N-\delta)\text{sin}^2 \theta)-\epsilon]} \leq |C| \leq 2^{\frac{m}{2}[\log(2\pi e(N+\delta)\text{sin}^2 \theta)+\epsilon]}\label{AE:shellcapvol}
\end{align}
where $\epsilon \to 0$ as $m \to \infty$.

\subsection{Surface Area (Volume) of the Intersection of Two Spherical (Shell) Caps}\label{AS:twocap}

Recall $\mathbb{S}^{m-1} \subset \mathbb{R}^m$ is the $(m-1)$-sphere of radius $R=\sqrt{mN}$. Let
$$C_i = \text{Cap}({\mathbf v_i},\theta_i) = \{{\bf v}\in \mathbb{S}^{m-1}: \angle({\bf v},{\mathbf v_i}) \leq \theta_i\}, \;  i=1,2$$
be two spherical caps on $\mathbb{S}^{m-1}$ such that $\angle({\mathbf v_1},{\mathbf v_2}) = \frac{\pi}{2}$, $\theta_i\leq\frac{\pi}{2}$, and $\theta_1+\theta_2 > \frac{\pi}{2}$. We have the following lemma that characterizes the intersection measure $\mu(C_1 \cap C_2)$ of these two caps.

\begin{lemma}\label{L:surfaceintersectioncap}
For any $\epsilon>0$ there exists an $M(\epsilon)$ such that for $m>M(\epsilon)$,
\begin{align*}
 \mu(C_1 \cap C_2) \leq 2^{\frac{m}{2}[\log(2\pi eN(\text{sin}^2 \theta_1 - \cos^2 \theta_2))+\epsilon]}
\end{align*}
and
\begin{align*}
  \mu(C_1 \cap C_2) \geq 2^{\frac{m}{2}[\log(2\pi eN(\text{sin}^2 \theta_1 - \cos^2 \theta_2))-\epsilon]} .
\end{align*}
\end{lemma}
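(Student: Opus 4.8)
The plan is to slice $\mathbb S^{m-1}$ by the geodesic distance to the first pole $\mathbf v_1$, recognize the intersection with $C_2$ on each slice as an ordinary spherical cap, and then integrate using the single spherical-cap area estimate \eqref{eq:appC_1} together with a Laplace-type argument. Concretely, I would choose coordinates so that $\mathbf v_1=\sqrt{mN}\,\mathbf e_1$ and $\mathbf v_2=\sqrt{mN}\,\mathbf e_2$, write $\alpha=\angle(\mathbf v,\mathbf v_1)$ for $\mathbf v\in\mathbb S^{m-1}$, and note that the slice $\{\mathbf v\in\mathbb S^{m-1}:v_1=\sqrt{mN}\cos\alpha\}$ is an $(m-2)$-sphere of radius $\rho(\alpha):=\sqrt{mN}\sin\alpha$ lying in the hyperplane $\{v_1=\sqrt{mN}\cos\alpha\}$. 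By the spherical (co-area) slicing identity,
\[
\mu(C_1\cap C_2)=\int_{0}^{\theta_1}\sqrt{mN}\;\mu^{(m-2)}_{\rho(\alpha)}\!\Big(C_2\cap\{v_1=\sqrt{mN}\cos\alpha\}\Big)\,d\alpha,
\]
where $\mu^{(m-2)}_{\rho}$ denotes the surface-area measure on the $(m-2)$-sphere of radius $\rho$.

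Next I would identify the slice of $C_2$. Since $C_2=\{\mathbf v:v_2\ge\sqrt{mN}\cos\theta_2\}$, on the $(m-2)$-sphere of radius $\rho(\alpha)$ centered at $(\sqrt{mN}\cos\alpha,0,\dots,0)$ a point at polar angle $\gamma$ from the $+\mathbf e_2$ pole has $v_2=\rho(\alpha)\cos\gamma$, so the slice of $C_2$ is exactly the cap of angular radius $\gamma(\alpha):=\arccos\!\big(\cos\theta_2/\sin\alpha\big)$, which is nonempty precisely when $\sin\alpha\ge\cos\theta_2$, i.e. $\alpha\ge\frac\pi2-\theta_2$. Hence the $\alpha$-integral effectively runs over $[\frac\pi2-\theta_2,\theta_1]$, a nonempty interval exactly because $\theta_1+\theta_2>\frac\pi2$. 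Applying the computation behind \eqref{eq:appC_1} to the $(m-2)$-sphere of radius $\rho(\alpha)$ (the passage from dimension $m$ to $m-2$ and from $\tfrac{\rho(\alpha)^2}{m}$ to $\tfrac{\rho(\alpha)^2}{m-1}$ changes the exponent only by $o(m)$) gives, for $\alpha$ in compact subintervals of $(\frac\pi2-\theta_2,\theta_1]$,
\[
\mu^{(m-2)}_{\rho(\alpha)}\!\big(\text{cap of angle }\gamma(\alpha)\big)=2^{\frac m2\left[\log\!\left(2\pi eN\sin^2\alpha\,\sin^2\gamma(\alpha)\right)+o(1)\right]},
\]
and the elementary identity $\sin^2\alpha\,\sin^2\gamma(\alpha)=\sin^2\alpha-\cos^2\theta_2$ rewrites the exponent as $\tfrac m2\log\!\big(2\pi eN(\sin^2\alpha-\cos^2\theta_2)\big)$.

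It remains to estimate the integral. On $[\frac\pi2-\theta_2,\theta_1]\subseteq[0,\frac\pi2]$ the map $\alpha\mapsto\sin^2\alpha-\cos^2\theta_2$ is strictly increasing and positive, maximized at $\alpha=\theta_1$ with value $\sin^2\theta_1-\cos^2\theta_2>0$ (positivity is again $\theta_1+\theta_2>\frac\pi2$). For the upper bound I would use the uniform upper half of \eqref{eq:appC_1} to bound the integrand over the whole range by $2^{\frac m2[\log(2\pi eN(\sin^2\theta_1-\cos^2\theta_2))+\epsilon]}$ and multiply by the length of the range and by $\sqrt{mN}$, both of which are $2^{o(m)}$. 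For the lower bound I would restrict the integral to $[\theta_1-\tfrac1m,\theta_1]$, where continuity of $\alpha\mapsto\sin^2\alpha-\cos^2\theta_2$ keeps the integrand at least $2^{\frac m2[\log(2\pi eN(\sin^2\theta_1-\cos^2\theta_2))-\epsilon]}$, the interval length $\tfrac1m$ again being $2^{o(m)}$ while the discarded part of the integral is nonnegative. Combining the two bounds yields both inequalities of the lemma.

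The main obstacle I expect is making the uniformity claims precise near the lower endpoint $\alpha=\frac\pi2-\theta_2$, where the slice radius stays positive but the slice-cap angle $\gamma(\alpha)\to0$, so that \eqref{eq:appC_1} (whose error term degrades as the cap angle shrinks) cannot be invoked uniformly there. This is not a genuine difficulty: in that regime $\sin^2\alpha-\cos^2\theta_2\to0$, so the slice contributes a term exponentially \emph{smaller} than the dominant one at $\alpha=\theta_1$, and a crude upper bound (the full $(m-2)$-sphere area) on a shrinking neighborhood of the endpoint — or simply working on $[\frac\pi2-\theta_2+\eta,\theta_1]$ and letting $\eta\to0$ after $m\to\infty$ — suffices. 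The remaining bookkeeping for the dimension shift $m\to m-2$ and for the slice radius $\rho(\alpha)=\sqrt{mN}\sin\alpha$ is routine via Stirling, exactly as in Appendix~\ref{AS:singlecap}.
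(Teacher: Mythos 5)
Your argument is correct in substance but takes a genuinely different route from the paper's. The paper splits $C_1\cap C_2$ by the hyperplane $H$ orthogonal to $\frac{\mathbf v_1}{\cos\theta_1}-\frac{\mathbf v_2}{\cos\theta_2}$ into two pieces $C^+\cup C^-$, slices each piece into $(m-2)$-dimensional caps cut by $H$ (with respect to the pole $\mathbf v_2$ for $C^+$ and $\mathbf v_1$ for $C^-$), writes the result exactly via regularized incomplete beta functions, and only then extracts the exponent by Stirling and a largest-term estimate. You instead slice once, with respect to $\mathbf v_1$ alone, identify each latitude $(m-2)$-sphere's intersection with $C_2$ as a cap of angle $\gamma(\alpha)=\arccos(\cos\theta_2/\sin\alpha)$, and use the single-cap estimate of Appendix~\ref{AS:singlecap} together with the identity $\sin^2\alpha\,\sin^2\gamma(\alpha)=\sin^2\alpha-\cos^2\theta_2$; the maximum at $\alpha=\theta_1$ gives the same exponent $\sin^2\theta_1-\cos^2\theta_2$ that the paper obtains as $\sin^2\rho-\tan^2\phi\cos^2\rho$ at $\rho=\theta_2$ with $\tan\phi=\cos\theta_1/\cos\theta_2$. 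Your decomposition avoids both the hyperplane splitting and the incomplete-beta machinery, at the price of slice caps whose angle degenerates at the left endpoint $\alpha=\tfrac{\pi}{2}-\theta_2$, which is exactly the uniformity issue you flag.

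On that point your instinct ("not a genuine difficulty") is right, but the two patches you sketch do not work as stated: truncating to $[\tfrac{\pi}{2}-\theta_2+\eta,\theta_1]$ is legitimate only for the lower bound (an upper bound cannot discard part of the domain), and bounding the slice by the full $(m-2)$-sphere area on a neighborhood of width fixed in $m$ gives roughly $2^{\frac m2\log(2\pi eN\cos^2\theta_2)}$, which exceeds the target exponentially whenever $\cos^2\theta_2>\sin^2\theta_1-\cos^2\theta_2$, e.g.\ when $\theta_1+\theta_2$ is only slightly above $\tfrac{\pi}{2}$. The clean fix stays entirely inside your framework: do not pass to the $2^{\frac m2[\,\cdot\,+\epsilon]}$ form pointwise in $\alpha$, but bound each slice-cap area by $\frac{2\pi^{(m-2)/2}}{\Gamma(\frac{m-2}{2})}\rho(\alpha)^{m-2}\int_0^{\gamma(\alpha)}\sin^{m-3}t\,dt\le\frac{\pi}{2}\,\rho(\alpha)\,\frac{2\pi^{(m-2)/2}}{\Gamma(\frac{m-2}{2})}\bigl(\rho(\alpha)\sin\gamma(\alpha)\bigr)^{m-3}$ and use $\rho(\alpha)\sin\gamma(\alpha)=\sqrt{mN(\sin^2\alpha-\cos^2\theta_2)}\le\sqrt{mN(\sin^2\theta_1-\cos^2\theta_2)}$, which holds uniformly over the whole range; Stirling then turns this into the stated upper bound with only polynomial-in-$m$ (hence $2^{o(m)}$) losses, exactly as in Appendix~\ref{AS:singlecap}. (The paper handles the analogous degeneracy at $\rho=\phi$ in the same spirit, by keeping the product $\sin^2\rho-\tan^2\phi\cos^2\rho$ together.) With this replacement for the uniform upper bound, and your $[\theta_1-\tfrac1m,\theta_1]$ restriction for the lower bound, the proof is complete and, if anything, shorter than the paper's.
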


\begin{proof}
To prove this lemma, we will first derive the surface area formula for the intersection of the above two caps (see also \cite{twocap}), and then characterize the exponent of this area.

\underline{Deriving the Surface Area Formula:} 
Consider the points ${\bf v}\in \mathbb{S}^{m-1}$ such that
$$\angle({\mathbf v_1},{\bf v}) = \theta_1$$
and
$$\angle({\mathbf v_2},{\bf v}) = \theta_2 .$$
These points satisfy the linear relations
$$\langle {\mathbf v_1},{\bf v} \rangle = R^2 \cos\theta_1$$
and
$$\langle {\mathbf v_2},{\bf v} \rangle = R^2 \cos\theta_2,$$
 and therefore all such ${\bf v}$ lie in the unique $m-1$ dimensional subspace $H$ defined by
$$\left\langle \frac{{\mathbf v_1}}{\cos\theta_1} - \frac{{\mathbf v_2}}{\cos\theta_2}, {\bf v} \right\rangle = 0  .$$
The angle between the hyperplane $H$ and the vector ${\mathbf v_2}$ is
{\small \begin{eqnarray*}
\phi = \frac{\pi}{2} - \arccos\left(\frac{1}{R\sqrt{\frac{1}{\cos^2\theta_1}+\frac{1}{\cos^2\theta_2}}}\left\langle \frac{{\mathbf v_1}}{\cos\theta_1} - \frac{{\mathbf v_2}}{\cos\theta_2}, {\mathbf v_2} \right\rangle\right)
\end{eqnarray*}}
and because ${\mathbf v_1}$ and ${\mathbf v_2}$ are orthogonal and $\|{\mathbf v_2}\| = R$,
\begin{align*}
\phi &= \frac{\pi}{2} - \arccos\left(\frac{1}{\cos\theta_2\sqrt{\frac{1}{\cos^2\theta_1}+\frac{1}{\cos^2\theta_2}}}\right) \\
& = \arctan \left(\frac{\cos\theta_1}{\cos\theta_2}\right) .
\end{align*}

The approach will be as follows. Divide the intersection $C_1 \cap C_2$ into two parts $C^+$ and $C^-$ that are on either side of the hyperplane $H$. More concretely,
$$C^+ = \left\{ {\bf v}\in C_1 \cap C_2 : \left\langle {\bf v},\frac{{\mathbf v_1}}{\cos\theta_1} - \frac{{\mathbf v_2}}{\cos\theta_2}\right\rangle \geq 0 \right\}$$
and
$$C^- = \left\{ {\bf v}\in C_1 \cap C_2 : \left\langle {\bf v},\frac{{\mathbf v_1}}{\cos\theta_1} - \frac{{\mathbf v_2}}{\cos\theta_2}\right\rangle < 0 \right\} .$$
Each part $C^+$ and $C^-$ can be written as a union of lower dimensional spherical caps. We will find the measure of each part by integrating the measures of these lower dimensional caps.

The measure of the cap $C_2$ can be expressed as the integral
$$\mu(C_2) = \int_0^{\theta_2} A_{m-2}(R\sin\rho)Rd\rho$$
where $A_{m-2}(R\sin\rho)$ is the surface area of the $(m-2)$-sphere with radius $R\sin\rho$. If we consider a single $(m-2)$-sphere at some angle $\rho$, then the hyperplane $H$ divides that $(m-2)$-sphere into two spherical caps. The claim is that each of these $m-2$ dimensional caps that is on the side of $H$ with ${\mathbf v_1}$ is contained in $C^+$ (and those on the side with ${\mathbf v_2}$ are contained in $C^-$). Furthermore, all points in $C^+$ are in one of these $m-2$ dimensional caps.
The claim follows because
$$\left\langle {\bf v},\frac{{\mathbf v_1}}{\cos\theta_1} - \frac{{\mathbf v_2}}{\cos\theta_2}\right\rangle \geq 0$$ implies
$$\cos\theta_2\cos\big(\angle({\bf v},{\mathbf v_1})\big) \geq \cos\theta_1\cos\big(\angle({\bf v},{\mathbf v_2})\big)$$
and since $\angle({\bf v},{\mathbf v_2}) \leq \theta_2$ and $\cos\big(\angle({\bf v},{\mathbf v_2})\big) \geq \cos\theta_2$, this implies
$$\cos\theta_2\cos\big(\angle({\bf v},{\mathbf v_1})\big) \geq \cos\theta_1\cos\theta_2  .$$
Finally, this implies $\angle({\bf v},{\mathbf v_1})\leq \theta_1$, ${\bf v} \in C_1$, and ${\bf v} \in C^+$.

Note that for $\rho<\phi$, the $(m-2)$-sphere at angle $\rho$ is entirely on the ${\mathbf v_2}$ side of $H$, and does not need to be included when computing the measure of $C^+$. This establishes the fact that
$$\mu(C^+) = \int_{\phi}^{\theta_2} C_{m-2}^{\theta_\rho}(R\sin\rho)Rd\rho$$
where $C_{m-2}^{\theta_\rho}(R\sin\rho)$ is the surface area of an $m-2$ dimensional spherical cap defined by angle $\theta_\rho$ on the $(m-2)$-sphere of radius $R\sin\rho$. Writing
$$\cos \theta_\rho = \frac{h}{R\sin\rho}$$
note that $h$ is the distance from the center of the $(m-2)$-sphere at angle $\rho$ to the $m-2$ dimensional hyperplane that divides the sphere into two caps. Furthermore, since the $(m-2)$-sphere has center $(R\cos\rho){\mathbf v_2}$, we have
$$\tan \phi = \frac{h}{R\cos\rho} \; .$$
Therefore,
$$\theta_\rho = \arccos\left(\frac{\text{tan}\phi}{\text{tan}\rho}\right) \; .$$
Combining this with the corresponding result for $\mu(C^-)$ yields
\begin{align*}
\mu(C_1\cap C_2) & =   \mu(C^+)+\mu(C^-) \\
& =   \int_{ \phi}^{\theta_2} C_{m-2}^{\arccos\left(\frac{\text{tan} \phi}{\text{tan}\rho}\right)}(R\sin\rho)R d\rho \\
& \;~~~~  + \int_{\frac{\pi}{2} -  \phi}^{\theta_1} C_{m-2}^{\arccos\left(\frac{\text{tan} (\pi/2-\phi)}{\text{tan}\rho}\right)}(R\sin\rho)R d\rho.
\end{align*}
This expression can be rewritten using known expressions for the area of a spherical cap in terms of the regularized incomplete beta function as
\begin{eqnarray*}
\mu(C_1 \cap C_2)   =J(\phi, \theta_2)+J(\pi/2-\phi, \theta_1),
\end{eqnarray*}
where $J(\phi, \theta_2)$ is defined as
{ \begin{align}
&J(\phi, \theta_2)\nonumber \\
&=\frac{{(\pi mN)}^{\frac{m-1}{2}}}{\Gamma\left(\frac{m-1}{2}\right)} \int_{\phi}^{\theta_2} (\sin^{m-2}\rho ) I_{1-\left(\frac{\text{tan}\phi}{\text{tan} \rho}\right)^2}\left(\frac{m-2}{2},\frac{1}{2} \right)d\rho \label{E:defJ}
\end{align}}
and $J(\pi/2-\phi, \theta_1)$ is defined similarly. Here in \dref{E:defJ}, $I_{x}(a,b)$ is the regularized incomplete beta function, given by
\begin{align}
I_{x}(a,b)=\frac{B(x;a,b)}{B(a,b)},\label{E:reg}
\end{align}
where $B(x;a,b)$ and $B(a,b)$ are the incomplete beta function and the complete beta function respectively:
\begin{align*}
B(x;a,b)&=\int_{0}^{x}t^{a-1}(1-t)^{b-1}dt\\
B(a,b)&=\frac{\Gamma(a)\Gamma(b)}{\Gamma(a+b)}.
\end{align*}

\underline{Characterizing the Exponent:} We now lower and upper bound $J(\phi, \theta_2)$ with exponential functions. First, using Stirling's approximation, $\frac{{(\pi mN)}^{\frac{m-1}{2}}}{\Gamma\left(\frac{m-1}{2}\right)}$ on the R.H.S. of \dref{E:defJ} can be bounded as
\begin{align} \label{E:sphere}
2^{\frac{m}{2}[\log (2\pi e N)-\epsilon_1]}\leq \frac{{(\pi mN)}^{\frac{m-1}{2}}}{\Gamma\left(\frac{m-1}{2}\right)}\leq 2^{\frac{m}{2}[\log (2\pi e N)+\epsilon_1]}
\end{align}
for some $\epsilon_1 \to 0$ as $m\to \infty$.

Now consider $$I_{1-\left(\frac{\text{tan}\phi}{\text{tan} \rho}\right)^2}\left(\frac{m-2}{2},\frac{1}{2} \right)$$ inside the integral on the R.H.S. of \dref{E:defJ}. In light of \dref{E:reg}, it can be written as
\begin{align}
I_{1-\left(\frac{\text{tan}\phi}{\text{tan} \rho}\right)^2}\left(\frac{m-2}{2},\frac{1}{2} \right)=\frac{B\left(1-\left(\frac{\text{tan}\phi}{\text{tan} \rho}\right)^2;\frac{m-2}{2},\frac{1}{2}\right)}{B\left(\frac{m-2}{2},\frac{1}{2}\right)}.\label{E:numden}
\end{align}
For the denominator in \dref{E:numden}, by Stirling's approximation, we have
\begin{align*}
B\left(\frac{m-2}{2},\frac{1}{2}\right)\sim \Gamma\left(\frac{1}{2}\right)\left(\frac{m-2}{2}\right)^{-\frac{1}{2}}.
\end{align*}
For the numerator in \dref{E:numden}, we have
\begin{align*}
&B\left(1-\left(\frac{\text{tan}\phi}{\text{tan} \rho}\right)^2;\frac{m-2}{2},\frac{1}{2}\right)\\
&=\int_{0}^{1-\left(\frac{\text{tan}\phi}{\text{tan} \rho}\right)^2} t^{\frac{m-4}{2}}(1-t)^{-\frac{1}{2}}dt\\
&\geq\int_{0}^{1-\left(\frac{\text{tan}\phi}{\text{tan} \rho}\right)^2} t^{\frac{m-4}{2}} dt\\
&=\frac{2}{m-2}  t^{\frac{m-2}{2}} \big|_{0}^{1-\left(\frac{\text{tan}\phi}{\text{tan} \rho}\right)^2} \\
&=\frac{2}{m-2}  \left[{1-\left(\frac{\text{tan}\phi}{\text{tan} \rho}\right)^2}\right]^{\frac{m-2}{2}}  \\
&\geq 2^{\frac{m}{2}\left[\log\left(1-\left(\frac{\text{tan}\phi}{\text{tan} \rho}\right)^2 \right)  -\epsilon_2\right]},
\end{align*}
for some $\epsilon_2 \to 0$ as $m \to \infty$, and
\begin{align*}
&B\left(1-\left(\frac{\text{tan}\phi}{\text{tan} \rho}\right)^2;\frac{m-2}{2},\frac{1}{2}\right)\\
&=\int_{0}^{1-\left(\frac{\text{tan}\phi}{\text{tan} \rho}\right)^2} t^{\frac{m-4}{2}}(1-t)^{-\frac{1}{2}}dt\\
&\leq\int_{0}^{1-\left(\frac{\text{tan}\phi}{\text{tan} \rho}\right)^2} t^{\frac{m-4}{2}} \left(1- \left(1-\left(\frac{\text{tan}\phi}{\text{tan} \rho}\right)^2 \right)  \right)^{-\frac{1}{2}} dt\\
&=\frac{\text{tan} \rho} {\text{tan}\phi}\int_{0}^{1-\left(\frac{\text{tan}\phi}{\text{tan} \rho}\right)^2} t^{\frac{m-4}{2}}  dt\\
&\leq\frac{\text{tan} \theta_2} {\text{tan}\phi}\int_{0}^{1-\left(\frac{\text{tan}\phi}{\text{tan} \rho}\right)^2} t^{\frac{m-4}{2}}  dt\\
&=\frac{2\text{tan} \theta_2}{(m-2)\text{tan}\phi}  \left[{1-\left(\frac{\text{tan}\phi}{\text{tan} \rho}\right)^2}\right]^{\frac{m-2}{2}}  \\
&\leq 2^{\frac{m}{2}\left[\log\left(1-\left(\frac{\text{tan}\phi}{\text{tan} \rho}\right)^2 \right)  +\epsilon_3\right]},
\end{align*}
for some $\epsilon_3 \to 0$ as $m \to \infty$. Also noting that
\begin{align*}
\sin^{m-2}\rho = 2^{\frac{m-2}{2}\log\text{sin}^2\rho}
\end{align*}
with $\rho\in [\phi,\theta_2]$, we can bound the integrand in \dref{E:defJ} as
\begin{align*}
&\left(\sin^{m-2}\rho\right) I_{1-\left(\frac{\text{tan}\phi}{\text{tan} \rho}\right)^2}\left(\frac{m-2}{2},\frac{1}{2} \right)\\
&\geq 2^{\frac{m}{2}\left[\log\left((\text{sin}^2\rho)\left(1-\left(\frac{\text{tan}\phi}{\text{tan} \rho}\right)^2 \right)\right)  -\epsilon_4\right]}\\
&= 2^{\frac{m}{2}\left[\log\left( \text{sin}^2\rho-\text{tan}^2\phi\cos^2\rho   \right)  -\epsilon_4\right]}
\end{align*}
and
\begin{align*}
&\left(\sin^{m-2}\rho\right) I_{1-\left(\frac{\text{tan}\phi}{\text{tan} \rho}\right)^2}\left(\frac{m-2}{2},\frac{1}{2} \right)\\
&\leq  2^{\frac{m}{2}\left[\log\left( \text{sin}^2\rho-\text{tan}^2\phi\cos^2\rho   \right)  +\epsilon_4\right]}
\end{align*}
for some $\epsilon_4 \to 0$ as $m \to \infty$.  For sufficiently large $m$,
\begin{align*}
&\int_{\phi}^{\theta_2} (\sin^{m-2}\rho ) I_{1-\left(\frac{\text{tan}\phi}{\text{tan} \rho}\right)^2}\left(\frac{m-2}{2},\frac{1}{2} \right)d\rho \\
&\geq  \int_{\theta_2-\frac{1}{m}}^{\theta_2} (\sin^{m-2}\rho ) I_{1-\left(\frac{\text{tan}\phi}{\text{tan} \rho}\right)^2}\left(\frac{m-2}{2},\frac{1}{2} \right)d\rho \\
&\geq \int_{\theta_2-\frac{1}{m}}^{\theta_2}   2^{\frac{m}{2}\left[\log\left( \text{sin}^2\rho-\text{tan}^2\phi\cos^2\rho   \right)   -\epsilon_4\right]}d\rho\\
&\geq \frac{1}{m}  2^{\frac{n}{2}\left[\log\left( \text{sin}^2(\theta_2-\frac{1}{m})-\text{tan}^2\phi\cos^2(\theta_2-\frac{1}{m})   \right)   -\epsilon_4\right]} \\
&\geq  2^{\frac{m}{2}\left[\log\left( \text{sin}^2\theta_2-\text{tan}^2\phi\cos^2\theta_2   \right)   -\epsilon_5\right]} \\
&= 2^{\frac{m}{2}\left[\log\left( \text{sin}^2\theta_2- \cos^2\theta_1   \right)   -\epsilon_5\right]},
\end{align*}
and
\begin{align*}
&\int_{\phi}^{\theta_2} (\sin^{m-2}\rho ) I_{1-\left(\frac{\text{tan}\phi}{\text{tan} \rho}\right)^2}\left(\frac{m-2}{2},\frac{1}{2} \right)d\rho \\
&\leq 2^{\frac{m}{2}\left[\log\left( \text{sin}^2\theta_2- \cos^2\theta_1   \right)   +\epsilon_5\right]}
\end{align*}
for some $\epsilon_5 \to 0$ as $m\to \infty$.

Combining this with \dref{E:sphere}, we can bound $J(\phi, \theta_2)$ as
\begin{align*}
 2^{\frac{m}{2}\left[\log2\pi e N\left( \text{sin}^2\theta_2- \cos^2\theta_1   \right)   -\epsilon_6\right]} & \leq J(\phi, \theta_2) \\
& \leq 2^{\frac{m}{2}\left[\log2\pi e N \left( \text{sin}^2\theta_2- \cos^2\theta_1   \right)   +\epsilon_6\right]}
\end{align*}
for some $\epsilon_6 \to 0$ as $m\to \infty$.

Due to symmetry, we can also bound $J(\pi/2-\phi, \theta_1)$ as
\begin{align*}
 2^{\frac{m}{2}\left[\log2\pi e N\left( \text{sin}^2\theta_1- \cos^2\theta_2   \right)   -\epsilon_6\right]} & \leq J(\pi/2-\phi, \theta_1) \\
& \leq 2^{\frac{m}{2}\left[\log2\pi e N\left( \text{sin}^2\theta_1- \cos^2\theta_2   \right)   +\epsilon_6\right]}.
\end{align*}
Noting that $\text{sin}^2\theta_2- \cos^2\theta_1 =\text{sin}^2\theta_1- \cos^2\theta_2$, we have
\begin{align*}
\mu(C_1 \cap C_2)&\geq J(\phi, \theta_2)+J(\pi/2-\phi, \theta_1)\\
&\geq  2^{\frac{m}{2}\left[\log2\pi e N\left( \text{sin}^2\theta_1- \cos^2\theta_2   \right)   -\epsilon\right]}
\end{align*}
and
\begin{align*}
\mu(C_1 \cap C_2)\leq   2^{\frac{m}{2}\left[\log2\pi e N\left( \text{sin}^2\theta_1- \cos^2\theta_2   \right)   +\epsilon\right]}
\end{align*}
for some $\epsilon  \to 0$ as $m\to \infty$. This completes the proof of the lemma.
\end{proof}

 \bigbreak
 
We now utilize Lemma \ref{L:surfaceintersectioncap} to characterize the volume of the intersection of two shell caps. Consider a spherical shell
$$\mbox{Shell}\left(\mathbf 0, R_L, R_U  \right)$$
with $R_L = \sqrt{m(N-\delta)}$, $R_U=\sqrt{m(N+\delta)}$ and two caps on this shell, i.e. $S_1 = \text{ShellCap}(\mathbf z_0, \theta)$ and $S_2 = \text{ShellCap}(\mathbf y_0, \omega)$, where $\angle(\mathbf z_0,\mathbf y_0)=\pi/2$ and $\theta+\omega>\pi/2$. The following lemma bounds the intersection volume $|S_1\cap S_2|$ of these two shell caps.

\begin{lemma}\label{L:volumeintersectionshellcap}
For any $\epsilon>0$ there exists an $M(\epsilon)$ such that for $m>M(\epsilon)$,
\begin{align*}
 |S_1\cap S_2| \geq  2^{\frac{m}{2}[\log(2\pi eN(\text{sin} ^2 \theta - \cos^2 \omega))-\epsilon]}
\end{align*}
and
\begin{align*}
 |S_1\cap S_2|  \leq 2^{\frac{m}{2}[\log(2\pi e(N+\delta)(\text{sin}^2 \theta  - \cos^2 \omega))+\epsilon]} .
\end{align*}
\end{lemma}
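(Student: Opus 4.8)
The plan is to reduce this shell computation to the spherical-cap intersection formula already established in Lemma~\ref{L:surfaceintersectioncap}, via the same polar-coordinate decomposition used for a single shell cap in Appendix~\ref{AS:singlecap}. Since a shell cap, and therefore the set $S_1\cap S_2$, depends only on the \emph{direction} of a point and not on its norm, passing to polar coordinates gives, for \emph{any} reference radius $R>0$,
\[
|S_1\cap S_2| \;=\; \mu_{R}\big(\text{Cap}(\mathbf z_0,\theta)\cap\text{Cap}(\mathbf y_0,\omega)\big)\int_{R_L}^{R_U}\Big(\tfrac{r}{R}\Big)^{m-1}dr,
\]
where $\text{Cap}(\cdot,\cdot)$ now denotes a cap on the sphere $\mathbb{S}^{m-1}_{R}$ of radius $R$ and $\mu_R$ its Haar measure; this is the exact analogue of \eqref{eq:appC_2}. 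The point to exploit is that $R$ may be chosen differently for the two bounds.

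For the \textbf{lower bound} I would take $R=\sqrt{mN}$. Since $R_L=\sqrt{m(N-\delta)}\le\sqrt{mN}\le R_U=\sqrt{m(N+\delta)}$, the weight $(r/\sqrt{mN})^{m-1}$ is at least $1$ on the subinterval $[\sqrt{mN},R_U]$, so $\int_{R_L}^{R_U}(r/\sqrt{mN})^{m-1}dr\ge R_U-\sqrt{mN}=\sqrt m(\sqrt{N+\delta}-\sqrt N)>0$, which is sub-exponential in $m$. Combining this with
\[
\mu_{\sqrt{mN}}\big(\text{Cap}(\mathbf z_0,\theta)\cap\text{Cap}(\mathbf y_0,\omega)\big)\;\ge\; 2^{\frac m2[\log(2\pi eN(\sin^2\theta-\cos^2\omega))-\epsilon]},
\]
which is Lemma~\ref{L:surfaceintersectioncap} applied with $\theta_1=\theta$, $\theta_2=\omega$ (using $\sin^2\theta-\cos^2\omega=\sin^2\omega-\cos^2\theta$, a quantity that is positive because $\theta+\omega>\pi/2$ with $\theta,\omega\in(0,\pi/2]$, so the logarithm is well defined), and absorbing the polynomial prefactor into the exponent for $m$ large, yields the claimed lower bound.

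For the \textbf{upper bound} I would instead take $R=R_U=\sqrt{m(N+\delta)}$. Then $r\le R_U$ over the entire range of integration, so $(r/R_U)^{m-1}\le 1$ and $\int_{R_L}^{R_U}(r/R_U)^{m-1}dr\le R_U-R_L=\sqrt m(\sqrt{N+\delta}-\sqrt{N-\delta})$, again sub-exponential in $m$. Feeding in the companion bound $\mu_{R_U}\big(\text{Cap}(\mathbf z_0,\theta)\cap\text{Cap}(\mathbf y_0,\omega)\big)\le 2^{\frac m2[\log(2\pi e(N+\delta)(\sin^2\theta-\cos^2\omega))+\epsilon]}$ from Lemma~\ref{L:surfaceintersectioncap} and again absorbing the polynomial factor gives the claimed upper bound.

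Given Lemma~\ref{L:surfaceintersectioncap}, the argument is essentially bookkeeping and I do not expect a genuine obstacle; the one place that requires a little thought is the choice of reference radius. A naive choice such as $R=R_L$ introduces the geometric factor $(R_U/R_L)^{m-1}=2^{\frac{m-1}{2}\log\frac{N+\delta}{N-\delta}}$, which is \emph{exponential} in $m$ and would corrupt the exponent; choosing $R$ at the outer radius for the upper bound and at $\sqrt{mN}$ for the lower bound forces the relevant radial weight to be bounded by $1$, so only a harmless polynomial-in-$m$ factor survives. One should also record explicitly that $\theta+\omega>\pi/2$ together with $\theta,\omega\le\pi/2$ guarantees $\sin^2\theta-\cos^2\omega>0$, so that every logarithm appearing in the statement is meaningful.
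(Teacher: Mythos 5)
Your proposal is correct and follows essentially the same route as the paper's own proof: decompose $|S_1\cap S_2|$ in spherical coordinates as a cap-intersection measure times a radial integral, using the reference sphere of radius $\sqrt{mN}$ (with the radial weight at least $1$ on $[\sqrt{mN},R_U]$) for the lower bound and the outer radius $R_U=\sqrt{m(N+\delta)}$ (weight at most $1$) for the upper bound, then invoke Lemma~\ref{L:surfaceintersectioncap} and absorb the sub-exponential radial factor into the exponent. Your added remarks on the positivity of $\sin^2\theta-\cos^2\omega$ and on why $R=R_L$ would fail for the upper bound are consistent with, and slightly more explicit than, the paper's treatment.
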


\begin{proof}
Using spherical coordinates, we have
\begin{align} \label{eq:lemmaC2_2}
& |S_1\cap S_2| = \int_{\mathbb R^m} 1_{S_1\cap S_2}(\mathbf z) \; d\mathbf z \nonumber\\
& =   \int_{\mathbb S^{m-1}} \left(\int_{R_L}^{R_U} \left(\frac{r}{R}\right)^{m-1}1_{S_1\cap S_2}\left(\frac{r}{R}\mathbf z\right) \; dr \right) d\mathbf z \nonumber\\
& =   \int_{\mathbb S^{m-1}} 1_{\text{Cap}(\mathbf z_0,\theta) \cap \text{Cap}(\mathbf y_0, \omega)}(\mathbf z)d\mathbf z \int_{R_L}^{R_U} \left(\frac{r}{R}\right)^{m-1}dr \nonumber\\
& =   \mu(\text{Cap}(\mathbf z_0,\theta) \cap \text{Cap}(\mathbf y_0, \omega)) \int_{R_L}^{R_U} \left(\frac{r}{R}\right)^{m-1}dr \;
\end{align}
where the integral term on the right is bounded as
\begin{align} \label{eq:lemmaC2_1}
\int_{\sqrt{m(N-\delta)}}^{\sqrt{m(N+\delta)}} \left(\frac{r}{R}\right)^{m-1}dr & \geq \int_{\sqrt{mN}}^{\sqrt{m(N+\delta)}} \left(\frac{r}{R}\right)^{m-1}dr \nonumber\\
& \geq  \sqrt{m(N+\delta)}-\sqrt{mN} \; .
\end{align}
Given $\epsilon >0$, set $M = \max \{M_1,M_2\}$ where $M_1$ is given by Lemma \ref{L:surfaceintersectioncap} to ensure
$$ \mu(\text{Cap}(\mathbf z_0,\theta) \cap \text{Cap}(\mathbf y_0, \omega)) \geq 2^{\frac{m}{2}[\log(2\pi eN(\text{sin}^2 \theta - \cos^2 \omega))-\epsilon/2]}$$
and $M_2$ is chosen to be sufficiently large so that the right-hand side of \eqref{eq:lemmaC2_1} satisfies
$$\sqrt{m(N+\delta)}-\sqrt{mN} \geq 2^{-m\epsilon} \; .$$
Together with \eqref{eq:lemmaC2_2}, this implies
$$|S_1\cap S_2| \geq 2^{\frac{m}{2}[\log(2\pi eN(\text{sin}^2 \theta - \cos^2 \omega))-\epsilon]}$$ for $m>M$.

For the inequality in the other direction, define $\mathbb{S}_{R_U}^{m-1}$ to be the $m-1$ sphere of radius $R_U$ with Haar measure $\mu_{R_U}$. Then
\begin{align} \label{eq:lemmaC2_4}
& |S_1\cap S_2| = \int_{\mathbb R^m} 1_{S_1\cap S_2}(\mathbf z) \; d\mathbf z \nonumber\\
& =   \int_{\mathbb S^{m-1}_{R_U}} \left(\int_{R_L}^{R_U} \left(\frac{r}{R_U}\right)^{m-1}1_{S_1\cap S_2}\left(\frac{r}{R_U}\mathbf z\right) \; dr \right) d\mathbf z \nonumber\\
& =   \int_{\mathbb S^{m-1}_{R_U}} 1_{\text{Cap}(\mathbf z_0,\theta) \cap \text{Cap}(\mathbf y_0, \omega)}(\mathbf z)d\mathbf z \int_{R_L}^{R_U} \left(\frac{r}{R_U}\right)^{m-1}dr \nonumber\\
& =   \mu_{R_U}(\text{Cap}(\mathbf z_0,\theta) \cap \text{Cap}(\mathbf y_0, \omega))\int_{R_L}^{R_U} \left(\frac{r}{R_U}\right)^{m-1}dr \; 
\end{align}
where the integral term on the right is bounded as
\begin{align} \label{eq:lemmaC2_5}
\int_{\sqrt{m(N-\delta)}}^{\sqrt{m(N+\delta)}} \left(\frac{r}{R_U}\right)^{m-1}dr &\leq  \sqrt{m(N+\delta)}-\sqrt{m(N-\delta)} \; .
\end{align}
Given $\epsilon >0$, set $M = \max \{M_1,M_2\}$ where $M_1$ is given by Lemma \ref{L:surfaceintersectioncap} to ensure
\begin{align*}
\mu_{R_U} & (\text{Cap}(\mathbf z_0,\theta) \cap \text{Cap}(\mathbf y_0, \omega)) \\
& \leq 2^{\frac{m}{2}[\log(2\pi e(N+\delta)(\text{sin}^2 \theta_1 - \cos^2 \theta_2))+\epsilon/2]}
\end{align*}
and $M_2$ is chosen to be sufficiently large so that the right-hand side of \eqref{eq:lemmaC2_5} satisfies
$$\sqrt{m(N+\delta)}-\sqrt{m(N-\delta)} \leq 2^{m\epsilon} \; .$$
Together with \eqref{eq:lemmaC2_4}, this implies
$$|S_1 \cap S_2| \leq 2^{\frac{m}{2}[\log(2\pi e(N+\delta)(\text{sin}^2 \theta - \cos^2 \omega))+\epsilon]}$$ for $m>M$.
\end{proof}

\subsection{Volume of the Intersection of Two Balls }\label{A:twoball}
\begin{proof}[Proof of Lemma \ref{L:twoball} ]
The intersection of $\text{Ball}(\mathbf c_1, \sqrt{mR_1})$ and $\text{Ball}(\mathbf c_2, \sqrt{mR_1})$ consists of two caps: $C_1$ and $C_2$, as depicted in Fig. \ref{F:newtwoball}. To bound the volume of $\text{Ball}(\mathbf c_1, \sqrt{mR_1}) \cap \text{Ball}(\mathbf c_2, \sqrt{mR_1})$, we will bound $|C_1|$ and $|C_2|$ respectively.

\begin{figure}
\centering
\includegraphics[width=0.4\textwidth]{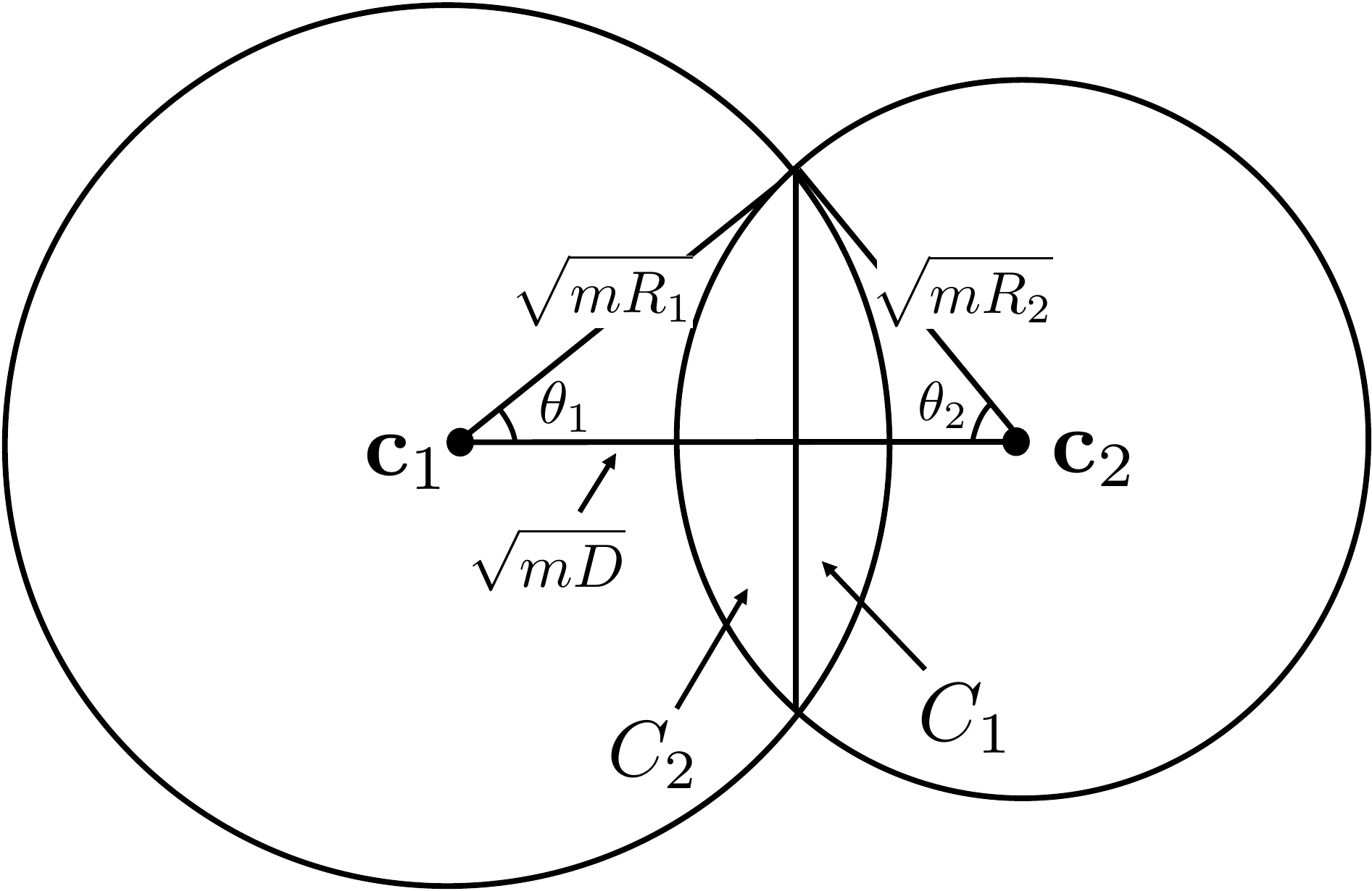}
\caption{Intersection of two balls.}
\label{F:newtwoball}
\end{figure}

We first bound $|C_1|$. By the cosine formula, we have
\begin{align*}
\cos \theta_1& = \frac{mR_1+mD-mR_2 }{2\sqrt{mR_1}\sqrt{mD}}\\
&=\frac{ R_1+ D- R_2 }{2\sqrt{ R_1D} }
\end{align*}
and therefore
\begin{align*}
\text{sin}^2 \theta_1 &=1-\cos^2 \theta_1\\
 &= 1- \frac{(R_1+D-R_2)^2 }{4 R_1D}\\
 &= \frac{2R_1D+2R_1R_2+2DR_2-R_1^2-R_2^2-D^2}{4R_1D}.
\end{align*}
From Appendix \ref{AS:singlecap}, we have for any $\epsilon>0$ and $m$ sufficiently large,
\begin{align*}
|C_1|&\leq 2^{m\left(\frac{1}{2}\log 2\pi e R_1 \text{sin}^2 \theta_1+\frac{\epsilon}{2}   \right)}\\
&=2^{m\left(\frac{1}{2}\log \pi e \lambda(R_1,R_2,D)    +\frac{\epsilon}{2} \right)}
\end{align*}
where
{\small \begin{align*}
\lambda(R_1,R_2,D):=\frac{ 2R_1D+2R_1R_2+2DR_2-R_1^2-R_2^2-D^2 }{2D} .
\end{align*}}

Similarly, we have
\begin{align*}
\text{sin}^2 \theta_2 &=1-\cos^2 \theta_2\\
 &= 1- \frac{(R_2+D-R_1)^2 }{4 R_2D}\\
 &= \frac{2R_1D+2R_1R_2+2DR_2-R_1^2-R_2^2-D^2}{4R_2D}
\end{align*}
and therefore
\begin{align*}
|C_2|&\leq 2^{m\left(\frac{1}{2}\log 2\pi e R_2 \text{sin}^2 \theta_2+\frac{\epsilon}{2}   \right)}\\
&=2^{m\left(\frac{1}{2}\log \pi e \lambda(R_1,R_2,D)    +\frac{\epsilon}{2} \right)}.
\end{align*}

Combining the above, we obtain
\begin{align*}
& \left|\text{Ball} (\mathbf c_1, \sqrt{mR_1}) \cap \text{Ball}(\mathbf c_2, \sqrt{mR_1}) \right| \\
& = |C_1|+|C_2| \\
& \leq 2^{m\left(\frac{1}{2}\log \pi e \lambda(R_1,R_2,D)    +\epsilon \right)}
\end{align*}
for any $\epsilon>0$ and $m$ sufficiently large.
\end{proof}

\section*{Acknowledgement}
The authors would like to acknowledge inspiring discussions with Liang-Liang Xie within a preceding collaboration \cite{WuOzgurXie_TIT}. They would also like to thank the anonymous
reviewers and the Associate Editor for many valuable comments that helped improve the presentation of this paper.

\begin{IEEEbiographynophoto}{Xiugang Wu}
(M'14)  received the B.Eng. degree with honors in electronics and
information engineering from Tongji University, Shanghai, China, in 2007, and the M.A.Sc and Ph.D. degree in
electrical and computer engineering from the University of Waterloo, Waterloo,
Ontario, Canada, in 2009 and 2014, respectively. He was a postdoctoral fellow in the Department of Electrical Engineering, Stanford University, Stanford, CA, during 2015--2018. He is currently an assistant professor at the University of Delaware, Newark, DE, where he is jointly appointed in the Department of Electrical and Computer Engineering and the Department of Computer and Information Sciences. His research interests are in information theory, networks, data science, and the interplay between them. He is a recipient of the 2017 NSF Center for Science of Information (CSoI) Postdoctoral Fellowship.
\end{IEEEbiographynophoto}

\begin{IEEEbiographynophoto}{Leighton Pate Barnes}
(S'17) received a B.S. in Mathematics '13, B.S. in Electrical Science and Engineering '13, and M.Eng. in Electrical Engineering and Computer Science '15, all from the Massachusetts Institute of Technology. While there, he received the Harold L. Hazen Award for excellence in teaching. He is currently a Ph.D. candidate in the Department of Electrical Engineering at Stanford University, where he studies geometric extremal problems applied to information theory, communication, and estimation.
\end{IEEEbiographynophoto}

\begin{IEEEbiographynophoto}{Ayfer  \"{O}zg\"{u}r}
(M'06) received her B.Sc. degrees in electrical engineering and physics from Middle East Technical University, Turkey, in 2001 and the M.Sc. degree in communications from the same university in 2004. From 2001 to 2004, she worked as hardware engineer for the Defense Industries Development Institute in Turkey. She received her Ph.D. degree in 2009 from the Information Processing Group at EPFL, Switzerland. In 2010 and 2011, she was a post-doctoral scholar at the same institution. She is currently an Assistant Professor in the Electrical Engineering Department at Stanford University where she is a Hoover and Gabilan Fellow. Her current research interests include distributed communication and learning, wireless systems, and information theory. Dr.  \"{O}zg\"{u}r received the EPFL Best Ph.D. Thesis Award in 2010, the NSF CAREER award in 2013 and the Okawa Foundation Research Grant in 2018.
\end{IEEEbiographynophoto}

\end{document}